\def\f12{\frac 1 2}
\newtheorem{definition}{Definition}[section]
\newtheorem{remark}{Remark}[section]
\newtheorem{lemma}{Lemma}[section]
\newtheorem{theorem}{Theorem}
\newtheorem{introtheorem}{Theorem}
\newtheorem{oldtheorem}{Theorem}[section]
\newtheorem*{theorem*}{Theorem}
\newtheorem*{corollary*}{Corollary}
\newtheorem{proposition}{Proposition}[section]
\newtheorem{corollary}{Corollary}[section]
\title[Time-Translation Invariance of Scattering and the Blue-Shift]{Time-Translation Invariance of Scattering Maps and Blue-Shift Instabilities on Kerr Black Hole Spacetimes}
\author{Mihalis Dafermos}
\address{\small University of Cambridge, Department of Pure Mathematics and Mathematical
Statistics, Wilberforce~Road,~Cambridge~CB3~0WA,~United~Kingdom\vskip.2pc}
\email{m.dafermos@dpmms.cam.ac.uk}
\author{Yakov Shlapentokh-Rothman}
\address{\small Princeton University, Department of Mathematics, Fine~Hall,~Washington~Road,~Princeton,~NJ~08544,~United~States\vskip.2pc }
\email{dafermos@math.princeton.edu}
\email{yshlapen@math.princeton.edu}
\date\today
\begin{document}
\maketitle

\begin{abstract}
In this paper, we provide an elementary, unified treatment of two distinct blue-shift
instabilities for the scalar wave equation on a fixed Kerr black hole background:
the celebrated blue-shift at the Cauchy horizon (familiar from the
strong cosmic censorship conjecture) and the time-reversed
red-shift at the event horizon  (relevant in classical scattering theory).

Our first theorem concerns the latter and constructs solutions to the wave equation on Kerr spacetimes such that the radiation field along the future event horizon vanishes and the radiation field along future null infinity decays at an arbitrarily fast polynomial rate, yet, the local energy of the solution is infinite near any point on the future event horizon.
Our second theorem constructs solutions to the wave equation on rotating Kerr spacetimes such that the radiation field along the past event horizon (extended into the black hole) vanishes and the radiation field along past null infinity decays at an arbitrarily fast polynomial rate, yet, the local energy of the solution is infinite near any point on the Cauchy horizon. 

The results make essential use
of the scattering theory developed in [M. Dafermos, I. Rodnianski and Y. Shlapentokh-Rothman \emph{A scattering theory for the wave equation on Kerr black hole exteriors}, preprint (2014) available at \url{http://arxiv.org/abs/1412.8379}] and 
exploit directly the time-translation invariance of the scattering map  and the non-triviality
of the transmission map.
\end{abstract}

\section{Introduction}

The blue-shift instability associated to the Kerr Cauchy horizon is one
of the most celebrated features of black hole interior geometries (see~\cite{penrose,wald}): 

Consider two freely falling observers,
$A$ and $B$, where $A$ enters the black hole, crossing the Cauchy horizon $\mathcal{CH}^+$ at finite proper
time, while $B$ remains forever outside, sending a pulse at constant frequency to $A$. 
The frequency of the pulses, as measured by $A$, become infinitely shifted to the blue as $A$ approaches $\mathcal{CH}^+$.

This situation is depicted below in the classic Penrose diagram representation.

\bigskip

\begin{center}
\begin{tikzpicture}

\fill[lightgray] (-1.25,1.25)--(1.25,-1.25) -- (.5,-2) --(-2,-2) to [out =90 , in = -120] (-1.25,1.25); 
\draw[dashed] (0,0) -- (1.25,-1.25) node[sloped,above,midway]{$\mathcal{I}^+$}; 
\draw[dashed] (1.25,-1.25) -- (.5,-2) node[sloped,below,midway]{$\mathcal{I}^-$}; 
\draw (-2,-2) -- (0,0) node[sloped,above,pos = .3]{$\mathcal{H}^+$}; 
\draw (-1.25,1.25) -- (0,0) node[sloped,above,pos = .6]{$\mathcal{CH}^+$}; 
\path [draw=black,fill=white] (0,0) circle (1/16); 
\path [draw=black,fill=white] (1.25,-1.25) circle (1/16); 
\draw [decoration={markings,mark=at position 1 with {\arrow[scale=1.25,]{>}}},
    postaction={decorate}] (0,-2) to[out = 75,in = -75] coordinate [pos = .1] (A0) coordinate [pos = .4] (A1) coordinate [pos = .6] (A2) coordinate [pos = .75] (A3) coordinate [pos = .85] (A4) coordinate [pos = .91] (A5) node[below,midway]{$\ \ \ B$} (.03,-.1);
\draw [thick,decoration={markings,mark=at position 1 with {\arrow[scale=1.25,]{>}}},
    postaction={decorate},dashed](A0) -- ++(135:1.8cm); 
\draw [thick,decoration={markings,mark=at position 1 with {\arrow[scale=1.25,]{>}}},
    postaction={decorate},dashed](A1) -- ++(135:1.91cm); 
\draw [thick,decoration={markings,mark=at position 1 with {\arrow[scale=1.25,]{>}}},
    postaction={decorate},dashed](A2) -- ++(135:1.89cm); 
\draw [thick,decoration={markings,mark=at position 1 with {\arrow[scale=1.25,]{>}}},
    postaction={decorate},dashed](A3) -- ++(135:1.8cm); 
\draw [thick,decoration={markings,mark=at position 1 with {\arrow[scale=1.25,]{>}}},
    postaction={decorate},dashed](A4) -- ++(135:1.7cm); 
    \draw [thick,decoration={markings,mark=at position 1 with {\arrow[scale=1.25,]{>}}},
    postaction={decorate},dashed](A5) -- ++(135:1.66cm); 
\draw [decoration={markings,mark=at position 1 with {\arrow[scale=1.25,]{>}}},
    postaction={decorate}] (-1,-2) to [out = 105, in =-105] node[pos = .7,below]{$A\ \ \ \ $}(-1,1.3) ;

\node [align = flush center] at (0,-2.5) {Figure 1: The blue-shift effect at the Kerr Cauchy horizon};
\end{tikzpicture}
\end{center}

One can understand the shift in frequency using the time-translation invariance with respect to the ``stationary'' Killing field $T$. For this, note first that if $B$ is freely falling, then
the difference 
in proper time $s_B$ measured by observer $B$ is comparable to the difference in $t$ in that
$\frac{\partial t}{\partial s_B} \sim 1$. (In the limit where    $B$ is pushed to
past null infinity $\mathcal{I}^-$, this becomes an equality.)
On the other hand, the difference in proper time $s_A$ of $A$ satisfies 
$\frac{\partial t}{\partial s_A}\sim  \left(\overline{s}_A-s_A\right)^{-1}$, where $\overline{s}_A$ is the proper time when $A$ crosses $\mathcal{CH}^+$.  If $A$ and $B$ are chosen appropriately, then time-translation invariance will imply that the $t$-difference
between  two successive pulses along $A$ is always comparable to $1$.  One infers immediately an infinite blue-shift as $A$ crosses $\mathcal{CH}^+$.

It is precisely this instability which originally motivated Penrose~\cite{penrose} to conjecture his ``strong cosmic censorship'', according to which
Cauchy horizons are non-generic  in evolution for the Einstein equations.

The above blue-shift  can be contrasted with the equally celebrated red-shift effect on the event
horizon, which was discussed as early as 1939~\cite{OS}. 
Now it is observer $A$ who sends a signal at a constant frequency while crossing
the event horizon $\mathcal{H}^+$,
and this frequency is infinitely shifted to the red when received by $B$ (see also the textbook~\cite{wald}).

This situation is depicted below.
\begin{center}
\begin{tikzpicture}
\fill[lightgray] (0,0)--(1/2,-1/2)--(-1,-2) -- (-1.5,-1.5) -- (0,0); 
\draw[dashed] (0,0) -- (1/2,-1/2) node[sloped,above,midway]{$\mathcal{I}^+$}; 
\draw (-1.5,-1.5) -- (0,0) node[sloped,above,pos = .6]{$\mathcal{H}^+$}; 

\path [draw=black,fill=white] (0,0) circle (1/16); 
\draw [decoration={markings,mark=at position 1 with {\arrow[scale=1.25,]{>}}},
    postaction={decorate}] (.1,-.9) to[out = 80,in = -75]  node[below,midway]{$\ \ \ B$} (.03,-.07);

\draw [decoration={markings,mark=at position 1 with {\arrow[scale=1.25,]{>}}},
    postaction={decorate}] (-1,-2) to [out = 105, in =-100] coordinate [pos = .1] (A0) coordinate [pos = .4] (A1) coordinate [pos = .7] (A2) node[pos = .7,below]{$A\ \ \ \ $}(-1.1,-.9) ;
\draw [thick,decoration={markings,mark=at position 1 with {\arrow[scale=1.25,]{>}}},
    postaction={decorate},dashed](A0) -- ++(45:1.64cm); 
\draw [thick,decoration={markings,mark=at position 1 with {\arrow[scale=1.25,]{>}}},
    postaction={decorate},dashed](A1) -- ++(45:1.68cm); 
\draw [thick,decoration={markings,mark=at position 1 with {\arrow[scale=1.25,]{>}}},
    postaction={decorate},dashed](A2) -- ++(45:1.59cm); 

\node [align = flush center] at (-.5,-2.5) {Figure 2: The red-shift effect at the event horizon};
\end{tikzpicture}
\end{center}
Again, the effect can be understood using only the time-translation invariance,
in view again of the fact that $\frac{\partial t}{\partial s_A}\sim  \left(\mathring{s}_A-s_A\right)^{-1}$, where $\mathring{s}_A$ is now the proper time when $A$ crosses $\mathcal{H}^+$.

The red-shift is an important part of the conjectured stability of black holes in
evolution for the Einstein equations; see the discussion in~\cite{lectnotes}. On the other hand, it is clear that were one
to  reverse the time-orientation and consider backwards evolution, 
the above red-shift effect turns into  
a blue-shift, very analogous to the previous.
\begin{center}
\begin{tikzpicture}
\fill[lightgray] (0,0)--(1/2,-1/2)--(-1,-2) -- (-1.5,-1.5) -- (0,0); 
\draw[dashed] (0,0) -- (1/2,-1/2) node[sloped,above,midway]{$\mathcal{I}^+$}; 
\draw (-1.5,-1.5) -- (0,0) node[sloped,above,pos = .6]{$\mathcal{H}^+$}; 

\path [draw=black,fill=white] (0,0) circle (1/16); 
\draw [decoration={markings,mark=at position 1 with {\arrow[scale=1.25,]{>}}},
    postaction={decorate}] (.1,-.9) to[out = 80,in = -75]  coordinate [pos = .3] (A0) coordinate [pos = .6] (A1) coordinate [pos = .9] (A2) node[below,midway]{$\ \ \ B$} (.03,-.07);

\draw [decoration={markings,mark=at position 1 with {\arrow[scale=1.25,]{>}}},
    postaction={decorate}] (-1,-2) to [out = 105, in =-100]  node[pos = .7,below]{$A\ \ \ \ $}(-1.1,-.9) ;
\draw [thick,decoration={markings,mark=at position 1 with {\arrow[scale=1.25,]{>}}},
    postaction={decorate},dashed](A0) -- ++(-135:1.64cm); 
\draw [thick,decoration={markings,mark=at position 1 with {\arrow[scale=1.25,]{>}}},
    postaction={decorate},dashed](A1) -- ++(-135:1.68cm); 
\draw [thick,decoration={markings,mark=at position 1 with {\arrow[scale=1.25,]{>}}},
    postaction={decorate},dashed](A2) -- ++(-135:1.65cm); 

\node [align = flush center] at (-.5,-2.5) {Figure 3: The blue-shift effect in backwards evolution at the event horizon};
\end{tikzpicture}

\end{center}

This backwards blue-shift 
has been discussed in the context of classical scattering theory for the
wave equation on fixed black hole backgrounds~\cite{nicolas, scatter} as well as a scattering theory construction for
dynamic spacetimes satisfying  the non-linear vacuum Einstein 
equations~\cite{dynamic}. This phenomenon is also familiar from the well-known 
calculation of the Hawking effect and is related to the mechanism of 
particle creation. (See \cite{hawking, backgammon} for overview and discussion and~\cite{bachelot,
haefhawk} for recent mathematical results.)

The goal of the present paper is to  provide a unified treatment of the above two
blue-shift instabilities
in the context of the scattering theory for solutions to the wave equation
\begin{equation}\label{wave}
\Box_g\psi = 0.
\end{equation} 
Let us note that the Gaussian beam approximation~\cite{sbiergauss} already gives a sequence
of finite energy
solutions  for which the local energy near the event or Cauchy horizon, respectively, becomes arbitrarily
large. Here,
we would like to produce solutions parametrised  by smooth ``scattering data'' on future and past null infinity $\mathcal{I}^{\pm}$ respectively (with
polynomially decaying tails)  for which this local energy
is in fact \emph{infinite}.

Before stating our first result capturing the blue-shift instability
at the event horizon $\mathcal{H}^+$, we note that
a consequence of our previous work on scattering~\cite{scatter} is that there do indeed exist 
unique solutions $\psi$ to the wave equation~\eqref{wave} with prescribed radiation fields
on $\mathcal{H}^+\cup \mathcal{I}^+$. 
These solutions by construction
lie in a space defined by the finiteness of a \emph{degenerate}  energy at $\mathcal{H}^+$ and are smooth
away from the horizon. Our  result
(see Section~\ref{blowupEHsec} for the precise statement) can be roughly stated as follows.

\begin{introtheorem}[rough statement]
\label{theorem1}
On any subextremal Kerr spacetime,
there exists an axisymmetric solution $\psi$ to the wave equation such that the radiation field of $\psi$ along the future event horizon $\mathcal{H}^+$ vanishes and the radiation field of $\psi$ along future null infinity $\mathcal{I}^+$ decays at an arbitrarily fast polynomial rate, yet the local energy of $\psi$ is infinite in a neighbourhood of 
any point on the future event horizon $\mathcal{H}^+$.
\end{introtheorem}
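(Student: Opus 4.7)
My plan is to build $\psi$ as a convergent superposition of $T$-translates of a single seed scattering solution, so that the local-energy blow-up is forced by the interplay of time-translation invariance, the exponential scaling of the affine parameters on the horizon, and the non-triviality of the transmission map. By the backward scattering theory of~\cite{scatter}, any smooth compactly supported axisymmetric $h_0 \in C_c^\infty(\mathcal{I}^+)$ together with zero data on $\mathcal{H}^+$ determines a unique solution $\psi_0$ of~\eqref{wave} in the finite-degenerate-energy scattering class, which extends smoothly across the bifurcation sphere $\mathcal{B} = \overline{\mathcal{H}^+} \cap \overline{\mathcal{H}^-}$. Write $f_0(V, \omega) := \partial_U \psi_0|_{\mathcal{H}^+}(V, \omega)$, where $U, V$ are the affine parameters on $\mathcal{H}^-, \mathcal{H}^+$. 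The non-triviality of the transmission map from~\cite{scatter} (on axisymmetric modes), together with a Taylor expansion at $\mathcal{B}$, allows us to choose $h_0$ so that $f_0(0, \omega_*) \neq 0$ for some $\omega_* \in S^2$; equivalently, the first transverse derivative of the transmitted radiation on $\mathcal{H}^-$ does not vanish at the bifurcation sphere.

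Let $\Phi_\tau$ denote the flow of $T$ and let $\kappa > 0$ be the surface gravity of $\mathcal{H}$. Setting $\psi_n := \psi_0 \circ \Phi_{-n}$ for $n \geq 1$, the Killing property and time-translation invariance of the scattering map give $\psi_n$ with radiation fields $(0, h_0(t - n, \omega))$ on $\mathcal{H}^+ \cup \mathcal{I}^+$. Since $\Phi_{-n}^* U = e^{\kappa n} U$ and $\Phi_{-n}^* V = e^{-\kappa n} V$, the chain rule yields the scaling identity
\[
\partial_U \psi_n\big|_{\mathcal{H}^+}(V, \omega) \;=\; e^{\kappa n}\, f_0(e^{-\kappa n} V, \omega).
\]
Define $\psi := \sum_{n=1}^\infty a_n \psi_n$ with $a_n := (n\, e^{\kappa n})^{-1}$. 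Since the shifted $h_0(\cdot - n, \cdot)$ have essentially disjoint supports on $\mathcal{I}^+$, the resulting radiation field satisfies $|h_\psi(t, \omega)| \lesssim (t\, e^{\kappa t})^{-1}$ for large $t$, which decays faster than any prescribed polynomial rate, while the radiation on $\mathcal{H}^+$ remains identically zero. The convergence of the series in the degenerate energy norm follows from $\sum a_n^2 < \infty$ and the boundedness of the backward scattering map in that norm (\cite{scatter}), producing a bona-fide solution $\psi$.

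The local-energy blow-up is now forced by the formal identity
\[
f_\psi(V, \omega) \;=\; \sum_{n=1}^\infty \frac{1}{n}\, f_0(e^{-\kappa n} V, \omega).
\]
At $\omega = \omega_*$ and any fixed $V_0 > 0$, we have $e^{-\kappa n} V_0 \to 0$, so by continuity of $f_0$ the summand approaches $f_0(0, \omega_*)/n$, and the series diverges harmonically; the divergence is uniform on compact subsets of $V \in (0, \infty)$. Hence the partial sums of $f_\psi$ are unbounded in $L^2_{\mathrm{loc}}$ of $\mathcal{H}^+$ near every point $(V_0, \omega)$ with $V_0 > 0$ and $\omega$ in the open set $\{\omega : f_0(0, \omega) \ne 0\}$. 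Since $\psi|_{\mathcal{H}^+} = 0$ forces the degenerate $T$-flux and all tangential derivatives of $\psi$ to vanish on $\mathcal{H}^+$, the leading contribution to the local non-degenerate energy of $\psi$ on a spacelike hypersurface transverse to $\mathcal{H}^+$ is controlled by the $L^2$ norm of $f_\psi$ taken locally on $\mathcal{H}^+$, and so is infinite.

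The main obstacle is two-fold. First, the non-triviality of the transmission in~\cite{scatter} only yields $\underline{h}_0 \not\equiv 0$ on $\mathcal{H}^-$, whereas the argument requires the strictly stronger statement $f_0(0, \omega_*) \ne 0$; I expect this refinement to follow from a frequency-domain analysis in the spirit of~\cite{scatter} --- concentrating $h_0$ at a frequency where the axisymmetric transmission coefficient is non-zero, and extracting the leading Taylor term at $U = 0$. Second, the blow-up so far occurs only on the open set $\{f_0(0, \cdot) \ne 0\} \subset S^2$; to obtain the conclusion at \emph{every} point of $\mathcal{H}^+$ one then further sums $\psi$ over a countable family of seeds with different $\omega_*$ whose non-vanishing loci cover $S^2$, with weights chosen so that both the degenerate-energy convergence and the polynomial decay on $\mathcal{I}^+$ are preserved.
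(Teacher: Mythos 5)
Your construction does not work as stated, and the obstacle you flag at the end is not a refinement to be supplied later but a genuine dead end. For a seed $h_0 \in C_c^\infty(\mathcal{I}^+)$ with vanishing data on $\mathcal{H}^+$, finite speed of propagation for the backwards problem (this is precisely Proposition~\ref{finitespeed}) forces the solution $\psi_0$ to be supported in $\{u \leq u_\dagger\}$, and the set $\{u > u_\dagger\}$ contains an entire neighbourhood of $\overline{\mathcal{H}^+_{\rm out}}$ in $\overline{\mathcal{D}}_{\rm ext}$ (approaching $\mathcal{H}^+$ at any fixed advanced time, $u \to +\infty$). Hence $\psi_0$ vanishes identically near $\mathcal{H}^+ \cup \mathcal{B}_+$, so your $f_0 = \partial_U\psi_0|_{\mathcal{H}^+}$ is identically zero --- not merely possibly zero at $V=0$ --- and the formal sum $f_\psi = \sum_n n^{-1} f_0(e^{-\kappa n}V,\omega)$ vanishes term by term. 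The non-triviality of the transmission map in~\cite{scatter} is an $L^2$ statement about the radiation field on $\mathcal{H}^-$ (supported away from the bifurcation sphere for such seeds); upgrading it to $f_0(0,\omega_*)\neq 0$ would amount to the sharp pointwise asymptotic $\partial_u\uppsi_{\mathcal{H}^-} \sim C e^{-\kappa_+ u}$ with $C \neq 0$ as $u\to+\infty$, which is false here and is exactly the kind of hard asymptotic information the paper's soft argument is designed to avoid. Relatedly, even where one can produce divergence of $\int (\partial_U\psi)^2\,dV$ on $\mathcal{H}^+$, this is a transversal derivative on a null hypersurface and is not an energy flux through $\mathcal{H}^+$, so deducing $\psi\notin H^1_{\rm loc}$ from it requires a trace-type argument you do not supply.

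Your choice of weights is also structurally too strong: $a_n = (n e^{\kappa n})^{-1}$ exactly cancels the blue-shift Jacobian in the change of variables~\eqref{formulabifen}, and a direct computation of the non-degenerate flux of $\sum_n a_n \psi_n$ on $\mathcal{H}^-$ near $\mathcal{B}_+$ gives $\sum_n a_n^2 e^{\kappa n}\cdot(\text{const}) \sim \sum_n n^{-2} < \infty$; the superposition therefore has \emph{finite} local energy at the bifurcation sphere. The paper instead translates the seed along $\mathcal{I}^+$ with only \emph{polynomial} amplitudes $|\tilde u_i|^{-p}$ and adaptively chosen spacings (this is also what yields the polynomial, rather than exponential, tail demanded by the theorem). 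The divergence is then located in the \emph{tangential} affine derivative of the radiation field along $\mathcal{H}^-$ near $\mathcal{B}_+$: the ``almost orthogonality'' of Lemma~\ref{cruxofthematter} gives the lower bound $\int_{s_i}^\infty (\partial_u\uppsi_{\mathcal{H}^-})^2 \gtrsim s_i^{-2p}$, which is non-integrable against the exponential weight $e^{\kappa_+ u}$, and this is finally propagated to spacelike hypersurfaces meeting every point of $\overline{\mathcal{H}^+_{\rm out}}$ by $\mathbf{J}^K$-energy estimates as in Section~\ref{notstandardprop}. No pointwise information at the bifurcation sphere is ever used; the singularity is created by the infinite accumulation of translates, each of which individually vanishes near the horizon.
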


To state our second result capturing the blue-shift instability at the Cauchy horizon, 
we note first that applying~\cite{scatter} and the well-posedness
of a characteristic initial value problem, there exist unique solutions to the
wave equation~\eqref{wave} with prescribed radiation fields on $\mathcal{H}^-\cup \mathcal{I}^-$,
where $\mathcal{H}^-$ is now extended to the black hole region.
Our second result (see Section~\ref{blowupCHsec} for the precise statement) 
concerns the blue-shift instability associated with the Cauchy horizon and
can be roughly stated as follows:

\begin{introtheorem}[rough statement]
\label{theorem2}
On any subextremal, rotating Kerr spacetime there exists an axisymmetric solution $\psi$ to the wave equation such that the radiation field of $\psi$ along the past event horizon $\mathcal{H}^-$ (extended into the black hole) vanishes and the radiation field of $\psi$ along past null infinity $\mathcal{I}^-$ decays at an arbitrarily fast polynomial rate, yet the local energy of $\psi$ is infinite in a neighbourhood of any point on the ingoing part of the Cauchy horizon $\mathcal{CH}^+$.
\end{introtheorem}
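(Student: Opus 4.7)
The strategy mirrors the one used for Theorem~\ref{theorem1}, with past and future and the event and Cauchy horizons exchanged. By the backwards scattering theory of~\cite{scatter} combined with well-posedness of the characteristic initial value problem on $\mathcal{H}^-\cup\mathcal{H}^+$ (where $\mathcal{H}^-$ is extended into the black-hole region), to each smooth past radiation datum $(\psi_{\mathcal{H}^-},\psi_{\mathcal{I}^-})$ in a suitable finite-energy class there corresponds a unique solution $\psi$ defined up to the ingoing component of $\mathcal{CH}^+$. Restricting to the axisymmetric subspace with $\psi_{\mathcal{H}^-}\equiv 0$ defines the \emph{transmission map} $\psi_{\mathcal{I}^-}\mapsto \psi|_{\mathcal{CH}^+}$. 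The crucial input from~\cite{scatter}, which is where the rotating hypothesis enters, is the nontriviality of this map, yielding a smooth axisymmetric seed $\psi_0$ with compactly supported $\psi_{\mathcal{I}^-}$, $\psi_{\mathcal{H}^-}\equiv 0$, and nonvanishing transverse flux on an open segment $\mathcal{S}\subset\mathcal{CH}^+_{\mathrm{ingoing}}$.

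I then produce the desired $\psi$ by time-translation and summation. Let $\varphi_s$ denote the flow of the stationary Killing field $T$ and set $\psi_n := \psi_0\circ\varphi_{-t_n}$ for a sequence $t_n\to +\infty$. Time-translation invariance of the scattering map shifts the $\mathcal{I}^-$-trace of $\psi_n$ by $t_n$ in the advanced time $v$, preserves $\psi_{\mathcal{H}^-}\equiv 0$, and shifts the induced trace on $\mathcal{CH}^+$ by $t_n$ in the Killing parameter. In an affine coordinate $V$ along the null generators of $\mathcal{CH}^+$, on which $T$ acts as a dilation with surface gravity $\kappa_-$, this shift becomes the rescaling $V\mapsto e^{-\kappa_- t_n}V$: the $\mathcal{CH}^+$-trace of $\psi_n$ is supported in an exponentially small $V$-window accumulating at the bifurcation sphere $V=0$, and its transverse (affine) flux is blue-shifted by a factor $e^{\kappa_- t_n}$ compared to its $T$-flux. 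Setting $\psi := \sum_n a_n\psi_n$ with $t_n=n$ and $a_n=e^{-\alpha n}$ for some $\alpha\in(0,\kappa_-/2)$, the disjoint $v$-supports on $\mathcal{I}^-$ deliver a smooth radiation field that decays faster than any polynomial in $|v|$; $\psi_{\mathcal{H}^-}\equiv 0$ by linearity; and the disjoint $V$-windows on $\mathcal{CH}^+$ render the summands nearly orthogonal, so that the total transverse energy in any neighbourhood of the bifurcation sphere is comparable to $\sum_n |a_n|^2 e^{\kappa_- n}=\sum_n e^{(\kappa_- -2\alpha)n}=+\infty$.

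The principal difficulty is to upgrade this blow-up, which the construction localizes at the bifurcation sphere $V=0$, to blow-up in a neighbourhood of \emph{every} point of the ingoing $\mathcal{CH}^+$, since the $T$-orbits on $\mathcal{CH}^+$ accumulate only at the two fixed points $V=0,\infty$. I would address this by a Baire-category argument in the Fr\'echet space $\mathcal{X}$ of rapidly decaying axisymmetric $\mathcal{I}^-$-data: for each point $p\in\mathcal{CH}^+_{\mathrm{ingoing}}$ and each $k\in\mathbb{N}$, the subset of $\mathcal{X}$ consisting of data for which the induced $\psi$ has \emph{finite} local energy in the ball $B(p,1/k)$ should be shown to be of first category in $\mathcal{X}$ --- density of its complement follows by perturbing an arbitrary datum by a tail built from the above construction, with parameters adjusted so that the resulting singularity lies near $p$ --- and then taking a Baire intersection over a countable dense family of points $p_j$ and all $k$ yields the desired $\psi$.
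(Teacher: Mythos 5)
Your first two paragraphs capture the correct mechanism and it is essentially the paper's: a seed solution with compactly supported data on $\mathcal{I}^-$ and vanishing data on $\mathcal{H}^-$, non-trivial transmission to the Cauchy horizon, $T$-translation acting as a dilation with fixed point at the bifurcation sphere $\mathcal{B}_-$, and a superposition whose amplitudes decay more slowly than the blue-shift factor grows, giving infinite non-degenerate (affine-parameter) energy at $\mathcal{B}_-$. However, there are gaps. First, the non-triviality and well-definedness of the transmission map to the Cauchy horizon is \emph{not} available from the exterior scattering theory of~\cite{scatter}; it is an interior statement and must be proved separately (this is Theorem~\ref{tothecauchyandbeyond}, which requires the red-shift, finite-in-time and $r^q\partial_{r^*}$ estimates in $\mathcal{D}_{\rm int}$ to define the trace, and a $\mathbf{J}^T$ conservation argument showing the flux on $\mathcal{CH}^+_{\rm out}$ dominates the flux on $\mathcal{H}^+$ to show it is non-zero). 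Second, note that the transmission lands on the \emph{outgoing} Cauchy horizon $\mathcal{CH}^+_{\rm out}$, where forward $T$-translation compresses supports towards $\mathcal{B}_-$ and blue-shifts; on the \emph{ingoing} piece the same translation pushes supports away from $\mathcal{B}_-$ and red-shifts, so your placement of the seed's non-trivial flux on $\mathcal{CH}^+_{\rm ingoing}$ is inconsistent with the rest of your argument. Third, the traces of the translates on $\mathcal{CH}^+_{\rm out}$ are \emph{not} supported in disjoint $V$-windows: each is supported on a half-line $\{v\geq v_\dagger+t_n\}$, and these are nested, all accumulating at $\mathcal{B}_-$. The divergence of $\sum_n|a_n|^2e^{|\kappa_-|t_n}$ therefore requires an actual almost-orthogonality estimate controlling the overlaps (this is the content of the estimate~\eqref{orthoest} and the dichotomy in Section~\ref{constructsomuch}); it is fixable but not free.

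The serious gap is your final paragraph. The Baire-category scheme is both unsupported and unnecessary. It is unsupported because its density step requires producing a singularity \emph{near an arbitrary point} $p$ of $\mathcal{CH}^+_{\rm in}$, whereas your construction (and the paper's) only ever produces a singularity at $\mathcal{B}_-$; the $T$-orbits give you no way to relocate it, so the complement of the ``finite local energy near $p$'' set is not shown to be dense. It is unnecessary because of a causal-geometric fact you have overlooked: $\mathcal{B}_-$ is the common \emph{future} endpoint of every generator of $\mathcal{CH}^+_{\rm in}$ (the future-pointing generator of $\mathcal{CH}^+_{\rm in}$ is $-\partial_u-\frac{a}{2Mr_-}\partial_{\varphi_*}$, pointing towards $\mathcal{B}_-$). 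Since $\psi$ is supported in $\{v\geq v_0\}$, for any spacelike hypersurface $\tilde\Sigma$ meeting $\mathcal{CH}^+_{\rm in}$ transversally near an arbitrary point $q$, the region $J^+\bigl(\tilde\Sigma\cap\{v\geq v_0\}\bigr)$ is bounded in the past only by $\tilde\Sigma$ and the null surface $\{v=v_0\}$ on which $\psi$ vanishes, and its future boundary contains a neighbourhood of $\mathcal{B}_-$ in $\mathcal{CH}^+_{\rm out}$ carrying infinite non-degenerate flux. A finite-in-time energy estimate in that region then forces $\psi\notin H^1_{\rm loc}(\tilde\Sigma)$, for every such $\tilde\Sigma$ and hence near every point of $\overline{\mathcal{CH}^+_{\rm in}}$ (this is Lemma~\ref{thissufficies}). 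Replacing your last paragraph with this propagation-of-singularities argument, and supplying the interior transmission theory and the almost-orthogonality estimate, brings your proposal in line with the paper's proof.
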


The statements of the two theorems are illustrated below:

\begin{center}
\begin{tikzpicture}
\fill[lightgray] (0,0) -- (1.5,-1.5) -- (0,-3) -- (-1.5,-1.5) -- (0,0);
\draw (0,0) -- node[sloped,above, scale = .75]{vanishing data} node[pos = .4,sloped,below,scale = .75]{$\mathcal{H}^+$} (-1.5,-1.5) --  (0,-3);
\draw[dashed] (0,0) -- node[sloped,above, scale = .75]{polynomial tail}  node[sloped,below,scale = .75]{$\mathcal{I}^+$}(1.5,-1.5) -- (0,-3);
\draw [fill = gray,domain=-135:45] plot ({-1+.2*cos(\x)}, {-1+.2*sin(\x)});
\draw [->] (-.7,-1.4) -- (-.95,-1.05) node[pos = .05,right,scale =.6]{$\psi$ has infinite energy};
\path [draw=black,fill=white] (0,0) circle (1/16);
\path [draw=black,fill=white] (0,-3) circle (1/16);
\path [draw=black,fill=white] (1.5,-1.5) circle (1/16);
\node at (0,-3.5) {Figure 4: Illustration of Theorem 1};

\fill[lightgray] (8.75,0) -- (10.25,-1.5) -- (8.75,-3) -- (7.25,-1.5) -- (8.75,0);
\draw (8.75,0) -- (7.25,-1.5) -- node[sloped,above,scale = .75]{$\mathcal{H}^-$} (8.75,-3);
\draw [fill = lightgray] (7.25,-1.5) -- (5.75,0) -- (7.25,1.5) -- (8.75,0);
\draw (7.25,1.5) -- node[sloped,above,scale = .75]{$\mathcal{CH^+}$}(8.75,0);
\draw [fill = gray,domain=135:-45] plot ({8.25+.2*cos(-\x-90)}, {.5+.2*sin(-\x-90)});
\draw (5.75,0) -- node[sloped,above,scale = .75]{$\mathcal{CH^+}$}(7.25,1.5);
\draw[dashed] (8.75,0) -- (10.25,-1.5) -- node[sloped,below, scale = .75]{polynomial tail}  node[sloped,above,scale = .75]{$\mathcal{I}^-$}(8.75,-3);
\draw (5.75,0) -- node[sloped,below,scale = .75]{vanishing data}  (8.75,-3); 
\path [draw=black,fill=white] (8.75,0) circle (1/16);
\path [draw=black,fill=white] (8.75,-3) circle (1/16);
\path [draw=black,fill=white] (10.25,-1.5) circle (1/16);
\path [draw=black,fill=white] (5.75,0) circle (1/16);
\draw [->] (7.95,.1) -- (8.2,.45) node[pos = .05,left,scale =.6]{$\psi$ has infinite energy};
\node at (8,-3.5) {Figure 5: Illustration of Theorem 2};
\end{tikzpicture}
\end{center}

As motivation for considering the blow-up of local energy of $\psi$, recall that one consequence of  Christodoulou's formulation of the strong cosmic censorship conjecture~\cite{christ} is that generic perturbations of the Kerr spacetime should produce metrics which are \emph{not} extendible in $H^1_{\rm loc}$ through the Cauchy horizons. (In view of the work~\cite{Luk-D}, however, the Cauchy horizons will persist as null hypersurfaces through which the metric is $C^0$ extendible.)
We note that due to the linearity of the wave equation, it is an immediate consequence of both the above theorems that the singular behaviour 
along $\mathcal{H}^+$ or $\mathcal{CH}^+$ in fact holds for \emph{generic} solutions to the wave equation whose radiation fields along $\mathcal{I}^+$ or $\mathcal{I}^-$ respectively decay at arbitrarily fast polynomially rates.
We shall not try to formalise this further here, however.

\subsection{Previous and related work}
Theorem~\ref{theorem1}
above concerning the event horizon $\mathcal{H}^+$ is related to a similar result proven in~\cite{scatter} for the Schwarzschild $a=0$
case using   certain monotonicity properties of the wave equation in spherical symmetry
(Theorem 11.1~of~\cite{scatter}). 
In particular, the above generalisation shows that 
Theorem  2 of~\cite{scatter} holds for all $|a|<M$.

Our Theorem~\ref{theorem1} above concerning the Cauchy horizon $\mathcal{CH}^+$
 can be thought to complete a paper of McNamara~\cite{mcnamara},
where a conditional proof of a slightly weaker statement was given, showing
that $\psi$ failed to be $C^1$ at the horizon, subject however to verifying
certain statements concerning ``non-zero transmission'' to the Cauchy horizon (these needed statements in fact follow from our Theorem~\ref{tothecauchyandbeyond} and could be used to complete his proof). Our proof (see 
Section~\ref{sketchsec} below) will however be different from that of
McNamara.

Another approach to capturing the blue-shift instability at the Cauchy horizon 
$\mathcal{CH}^+$ is to identify
a condition on the solution \emph{along the event horizon $\mathcal{H}^+$} which
ensures blow up at $\mathcal{CH}^+$. 
Such a condition was given by Luk--Oh~\cite{lukoh} in the Reissner--Nordstr\"{o}m case, who moreover
showed that their condition indeed holds for \emph{generic} solutions arising
from compactly supported data posed on an asymptotically flat,
\emph{spacelike} hypersurface. 
(For some partial results concerning self-gravitating spherically symmetric scalar fields see~\cite{dafermos1,dafermos2}.) In addition, the work~\cite{lukoh} gives an explicit characterization
of the genericity assumption in terms of the asymptotics along future
null infinity $\mathcal{I}^+$.
In parallel with the present paper, 
Luk--Sbierski~\cite{luksbier} have obtained a Kerr analogue of the
result of~\cite{lukoh} relating 
a polynomial lower bound along $\mathcal{H}^+$ to infinite local energy
at $\mathcal{CH}^+$. Obtaining a characterization of spacelike initial data 
for which this lower bound holds remains an open problem. In broad terms, one expects that the strategy of~\cite{luksbier} will be applicable in the study
of the instability properties of the Cauchy horizon in the full non-linear theory
governed by the Einstein vacuum equations (cf.~\cite{Luk-D}).

  Let us also note several other classical attempts in the physics literature to understand the blue-shift instability at the Cauchy horizon~\cite{chanhart,gsns,mzs}. 
  
In the case of \emph{extremal} Reissner--Nordstr\"{o}m or \emph{extremal} Kerr, the local red-shift along $\mathcal{H}^+$ and the local blue-shift along $\mathcal{CH}^+$ both vanish. This leads to fundamentally different expectations for the qualitative behavior of waves in the black hole exterior and interior, see~\cite{aretakis1,aretakis2,aretakis3,aretakis4,dejan,dejan2} for the current state of the art. We note, however, that even the question of boundedness for general solutions to~\eqref{wave} on extremal Kerr {exteriors} remains an open problem (see the discussion in~\cite{partIII}). 

\subsection{Time-translation invariance of scattering theory and the approach of this paper}
\label{sketchsec}
The proofs we shall offer in this paper do not overtly rely on the
geometric optics approximation.
Rather, the idea is to exploit the relation between time-translation invariance and the blue-shift
directly  at the level of the wave equation~\eqref{wave}.
The argument is quite
soft and will not require precise 
decay estimates of the form used in~\cite{dafermos2, lukoh}
or in Theorem~11.1~of~\cite{scatter}. 
The fundamental three ingredients of scattering theory for our argument are
\begin{enumerate}
	\item[(a)] the existence of ``transmission maps'' taking initial data on $\mathcal{I}^-$ to a well defined radiation field along $\mathcal{CH}^+$ and initial data on $\mathcal{I}^+$ to a well defined radiation field on $\mathcal{H}^-$,
	\item[(b)] the commutation properties of the transmission maps with translation by the stationary Killing vector field $T$,
	\item[(c)] the non-vanishing of the transmission map.
\end{enumerate}

For Theorem~\ref{theorem1}, 
the necessary scattering theory results for (a), (b), (c)  follow \emph{a fortiori} from our recent works~\cite{partIII,scatter} which, among other things, establish a complete scattering theory for general solutions to the wave equation on Kerr exterior spacetimes in the full sub-extremal range $|a| < M$. See~\cite{bw,dk1,dk2,nicolas} for earlier results about scattering on Schwarzschild and \cite{fhm} for a textbook introduction to black hole scattering. 
For Theorem~\ref{theorem2}, 
however, the analogue of the scattering theory of~\cite{scatter} is not available (cf.~however the pioneering work~\cite{mcnamara2} and the more recent~\cite{anne, anne2} for the statements of uniform boundedness and continuous extendibility to the Cauchy horizons  for sufficiently regular solutions to the wave equation on both the Reissner--Nordstr\"{o}m and Kerr spacetimes). We shall prove by hand what we need (see Theorem~\ref{tothecauchyandbeyond}), but we hope that this paper further 
motivates a full treatment of interior black hole scattering, which would be of 
significant  independent interest.

The relevance of the commutation properties of the transmission map
for the blue-shift instability stems from the relation 
\begin{equation}
\label{relationhere}
\nabla_K K=\kappa K,
\end{equation}
satisfied by the Killing generator $K$ of the event or Cauchy horizon, which 
is a linear combination of the 
stationary vector field $T$  referred to above
and the generator $Z$ of axisymmetry.
Here $\kappa$ is the surface gravity.
In the case of Theorem~\ref{theorem1}, we will use the above 
properties of our scattering theory to construct solutions $\psi$
whose induced radiation fields on $\mathcal{H}^-$ are non-trivial and satisfy a scaling property
with respect to $T$, and thus, by axisymmetry of $\psi$, also with respect to $K$.
In view of the relation $(\ref{relationhere})$, this scaling property
necessitates that the local energy of $\psi$ is infinite at the sphere of bifurcation
where $\mathcal{H}^+$ and $\mathcal{H}^-$ meet and
where the vector field $K$ vanishes. The result then follows by
a form of propagation of singularities along $\mathcal{H}^+$. A similar argument
is used for Theorem~\ref{theorem2}.

\subsection{Outline}
We close  with an outline of the rest of this paper. 
In Section~\ref{reviewKerr} below we review the Kerr spacetime, defining the regions
of interest and setting our notation. Then, in Section~\ref{basic} we review some basic facts about the wave equation. We shall recall the pertinent scattering theory results
of~\cite{scatter} for the exterior
region in Section~\ref{reviewScat}. The precise formulations of our main two results are given
in Section~\ref{secResults}.  The proof of Theorem~\ref{theorem1}
will then be given
in Section~\ref{secPfThm1}. Finally, in Section~\ref{secPfThm2}, we
develop the scattering theory we need for the interior, using this to prove 
Theorem~\ref{theorem2}.

\subsection{Acknowledgements} We thank Igor Rodnianski for useful conversations and comments on the manuscript, and Jonathan Luk and Jan Sbierski for sharing details of their work~\cite{luksbier} with us. The second author also thanks Andr\'{e} Lisibach for helpful comments about the presentation. MD acknowledges support through NSF grant DMS-1405291 and EPSRC grant EP/K00865X/1. YS acknowledges support through an NSF Postdoctoral Research Fellowship under award no.\ 1502569.

\section{Review of the Kerr spacetime}\label{reviewKerr}
In order to fix our notation we will review the differentiable structure and metric of the Kerr spacetime as well as various coordinate systems. For a true introduction to the Kerr spacetime, we recommend~\cite{oneil} and~\cite{wald}.

The sub-extremal Kerr family has two free parameters $(a,M)$ which are required to satisfy $|a| < M$. Theorem 1 concerns the entire range $|a| < M$ while Theorem 2 requires that the rotation parameter $a \neq 0$. For convenience, however, throughout this section we will assume $0 < |a| < M$. In Remark~\ref{schw} we will note the part of the construction relevant for the Schwarzschild case ($a = 0$).

We begin by defining some useful constants $r_{\pm}$, $\kappa_{\pm}$ and functions $r^* : \mathbb{R}\setminus\{r_+,r_-\} \to \mathbb{R}$, $A : \mathbb{R}\setminus\{r_+,r_-\} \to \mathbb{R}$, $\Delta : \mathbb{R} \to \mathbb{R}$, $\rho^2 : \mathbb{R} \times (0,\pi) \to \mathbb{R}$, and $\Pi : \mathbb{R} \times (0,\pi) \to \mathbb{R}$:
\[r_{\pm} \doteq M \pm \sqrt{M^2-a^2},\qquad \kappa_{\pm} \doteq \frac{r_{\pm} - r_{\mp}}{2\left(r_{\pm}^2 + a^2\right)},\]
\begin{equation}\label{formrstar}
r^*\left(r\right) \doteq r + \frac{1}{2\kappa_+}\log\left(|r-r_+|\right) + \frac{1}{2\kappa_-}\log\left(\left|r-r_-\right|\right),
\end{equation}
\begin{equation*}
A\left(r\right) \doteq \frac{a}{r_+-r_-}\log\left(\left|\frac{r-r_+}{r-r_-}\right|\right),
\end{equation*}
\[\Delta\left(r\right) \doteq r^2-2Mr+a^2,\quad \rho^2\left(r,\theta\right) \doteq r^2+a^2\cos^2\theta,\quad \Pi\left(r,\theta\right) \doteq (r^2+a^2)^2 - a^2\sin^2\theta\Delta.\]
A calculation yields
\[\frac{dr^*}{dr} = \frac{r^2+a^2}{\Delta},\qquad \frac{dA}{dr} = \frac{a}{\Delta}.\]

Note that the two roots of $\Delta$ are given by $r_{\pm}$ and that our assumption that $|a| \in (0,M)$ guarantees that $r_{\pm}$ are distinct and hence that $\kappa_{\pm}$ are non-vanishing.
\subsection{Boyer--Lindquist coordinates in the exterior and interior}\label{blcoord}
We start by defining the \emph{exterior} region

\[\mathcal{D}_{\rm ext} \doteq \{(t,r,\theta,\varphi) \in \mathbb{R} \times [r_+,\infty) \times \mathbb{S}^2\}.\]

On $\mathcal{D}_{\rm ext}$ we define the Kerr metric by
\begin{equation}\label{metric}
g_{a,M} \doteq -\left(1-\frac{2Mr}{\rho^2}\right)dt^2 - \frac{4Mar\sin^2\theta}{\rho^2}dtd\varphi + \frac{\rho^2}{\Delta}dr^2 + \rho^2d\theta^2 + \sin^2\theta\frac{\Pi}{\rho^2}d\varphi^2,
\end{equation}

We next define the black hole \emph{interior} region
\[\mathcal{D}_{\rm int} \doteq \{(t,r,\theta,\varphi) \in \mathbb{R} \times (r_-,r_+) \times \mathbb{S}^2\}.\]
The metric $g_{a,M}$ on $\mathcal{D}_{\rm int}$ is defined by the same expression~\eqref{metric}.

Lastly, it will be useful to introduce the notations $T \doteq \partial_t$ and $Z \doteq \partial_{\varphi}$ for the stationary and axisymmetric Killing vector fields.

\subsection{Outgoing and ingoing Eddington--Finkelstein-like coordinates}
While the metric~\eqref{metric} takes a simple form in Boyer--Lindquist coordinates, these coordinates do not cover all portions of interest in the maximally extended Kerr spacetime; in particular, they do not cover the event horizons or Cauchy horizons.

We introduce a new coordinate system, ``outgoing Eddington--Finkelstein coordinates,'' by covering the region $\mathcal{D}_{\rm ext}$ with $(v,r,\theta,\varphi^*) \in \mathbb{R} \times (r_+,\infty) \times \mathbb{S}^2$, where the new $v$ and $\varphi^*$-coordinates are defined as follows
\begin{equation}\label{newvnewphi}
v\left(t,r\right) \doteq t + r^*(r),\qquad \varphi^*\left(\varphi,r\right) \doteq \varphi + A(r).
\end{equation}
The map $(t,r,\theta,\varphi)\mapsto (v,r,\theta,\varphi^*)$ is easily seen to be a diffeomorphism on $\mathcal{D}_{\rm ext}$, and under this change of coordinates, the metric~\eqref{metric} takes the form
\begin{align}\label{metric2}
g_{a,M} \doteq &-\left(1-\frac{2Mr}{\rho^2}\right)dv^2 +2dvdr- \frac{4Mar\sin^2\theta}{\rho^2}dvd\varphi^* + \rho^2d\theta^2
\\ \nonumber&\qquad + \sin^2\theta\frac{\Pi}{\rho^2}(d\varphi^*)^2-2a\sin^2\theta d\varphi^*dr.
\end{align}
The point of introducing these coordinates is that it is manifestly clear that the formula~\eqref{metric2} extends to define a Lorentzian manifold with boundary on $(\mathcal{D},g_{a,M})$, where
\begin{equation}\label{mathcalD}
\mathcal{D} \doteq \{(v,r,\theta,\varphi^*) \in \mathbb{R}\times [r_-,\infty)\times \mathbb{S}^2\}.
\end{equation}
Furthermore, inversion of the formulas~\eqref{newvnewphi} immediately yields an isometry between the Lorentzian manifold $(\mathcal{D} \cap \{r \in (r_-,r_+)\},g_{a,M})$ and $(\mathcal{D}_{\rm int},g_{a,M})$. Using this isometry, we now identify these two regions.

Along the hypersurface $\{r = r_+\}$ one may easily check that 
\begin{equation}\label{hawkvect}
K \doteq \partial_v + \frac{a}{2Mr_+}\partial_{\varphi^*}
\end{equation}
is null, future pointing,\footnote{We time-orient $\mathcal{D}_{\rm ext}$ with the timelike vector field $\nabla t$ and we time-orient $\mathcal{D}_{\rm in}$ with the timelike Boyer--Lindquist vector field $-\partial_r$. One may easily check these orientations agree along $\{r=r_+\}$.} and orthogonal to $\partial_{\theta}$ and $\partial_{\varphi^*}$; in particular, $\{r=r_+\}$ is a null hypersurface with future pointing null generator $\partial_v+\frac{a}{2Mr_+}\partial_{\varphi^*}$. We call $K$ the ``Hawking vector field''. We call $\{r = r_+\}$ the ``outgoing future event horizon'' and denote it by $\mathcal{H}^+_{\rm out}$. When there is no risk of confusion we may drop the ``outgoing'' and simply write ``future event horizon'' and $\mathcal{H}^+$. The event horizon forms the boundary between the exterior and interior regions. Similarly $\partial_v + \frac{a}{2Mr_-}\partial_{\varphi^*}$ is a future pointing null generator of $\{r=r_-\}$; we call $\{r = r_-\}$ the ``outgoing Cauchy horizon'' and denote it by $\mathcal{CH}_{\rm out}^+$.

Here is the Penrose diagram\footnote{The Penrose diagram can be formally defined as a depiction of the domain of yet another coordinate system $\left(U,V,\theta^A,\theta^B\right)$ associated to a ``double-null foliation''. See~\cite{pi} and~\cite{dejan3} for a construction of these coordinates. We only use these diagrams here as suggestive representations.}\label{doublenull} of the region $\mathcal{D}$:
\begin{center}
\begin{tikzpicture}
\fill[lightgray] (-2,2)--(2,-2)--(0,-4) -- (-4,0); 
\draw[dashed] (-2,2) -- (2,-2) --  (0,-4) -- (-4,0); 
\draw (-2,-2) -- (0,0) node[sloped,above,midway]{$\mathcal{H}^+_{\rm out}$}; 
\draw (-4,0) -- (-2,2) node[sloped,above,midway]{$\mathcal{CH}^+_{\rm out}$}; 
\path [draw=black,fill=white] (0,0) circle (1/16); 
\path [draw=black,fill=white] (2,-2) circle (1/16); 
\path [draw=black,fill=white] (0,-4) circle (1/16); 
\path [draw=black,fill=white] (-2,-2) circle (1/16); 
\path [draw=black,fill=white] (-4,0) circle (1/16); 
\path [draw=black,fill=white] (-2,2) circle (1/16); 
\draw (-2,0) node{$\mathcal{D}_{\rm int}$}; 
\draw (0,-2) node{$\mathcal{D}_{\rm ext}$}; 

\node [align = flush center] at (-1,-4.5) {Figure 6: The region $\mathcal{D}$};
\end{tikzpicture}
\end{center}

Note that the vector fields $T$ and $Z$ both smoothly extend to $\mathcal{D}$ and satisfy $T = \partial_v$ and $Z = \partial_{\varphi^*}$.

In order to define the ``past event horizon'' we need to introduce ``ingoing Eddington--Finkelstein coordinates''. These are $(u,r,\theta,\varphi_*) \in \mathbb{R} \times (r_+,\infty) \times \mathbb{S}^2$ coordinates defined on $\mathcal{D}_{{\rm ext}}$ in terms of Boyer--Lindquist coordinates by the formulas
\begin{equation}\label{newvnewphi2}
u\left(t,r\right) \doteq t - r^*(r), \qquad \varphi_*\left(\varphi,r\right) \doteq \varphi - A(r).
\end{equation}
The map $(t,r,\theta,\varphi) \mapsto (u,r,\theta,\varphi_*)$ is easily seen to be a diffeomorphism, and the metric~\eqref{metric} becomes
\begin{align}\label{metric3}
g_{a,M} \doteq &-\left(1-\frac{2Mr}{\rho^2}\right)du^2 -2dudr- \frac{4Mar\sin^2\theta}{\rho^2}dud\varphi^* + \rho^2d\theta^2
\\ \nonumber&\qquad + \sin^2\theta\frac{\Pi}{\rho^2}(d\varphi^*)^2+2a\sin^2\theta d\varphi^*dr.
\end{align}
Analogously to what we observed with the outgoing Eddington--Finkelstein coordinates, we now see that the metric expression~\eqref{metric3} defines a Lorentzian manifold with boundary on
\[\tilde{\mathcal{D}} \doteq \{(u,r,\theta,\varphi_*) \in \mathbb{R} \times [r_+,\infty)\times \mathbb{S}^2\}.\]
(We obviously could allow $r$ to take a larger set of values; these $r$-values however correspond to the interior of the \emph{white hole} region of the Kerr spacetime which will not concern us here.) We will call the hypersurface $\{r = r_+\}$ the ``past event horizon'' and denote it by $\mathcal{H}^-$. Similarly to $\mathcal{H}^+_{\rm out}$, $\mathcal{H}^-$ is a null hypersurface with future pointing null generator $\partial_u + \frac{a}{2Mr_+}\partial_{\varphi_*}$.

Note that as above the vector fields $T$ and $Z$ both smoothly extend to $\tilde{\mathcal{D}}$. In these coordinates they satisfy $T = \partial_u$ and $Z = \partial_{\varphi_*}$.

Finally, we can also introduce ingoing Eddington--Finkelstein coordinates in the region $\mathcal{D}_{\rm int}$ using the same formulas~\eqref{newvnewphi2}. As above, the corresponding metric expression~\eqref{metric3} defines a Lorentzian manifold with boundary on an extended manifold $\hat{\mathcal{D}} \doteq \{(u,r,\theta,\varphi_*) \in \mathbb{R} \times [r_-,r_+] \times \mathbb{S}^2\}$. We define the ``ingoing Cauchy horizon'' $\mathcal{CH}^+_{\rm in}$ to be the hypersurface $\{r = r_-\}$ and the ``ingoing future event horizon'' $\mathcal{H}^+_{\rm in}$ to be the hypersurface $\{r = r_+\}$. Again these are both null hypersurfaces, and one may take the future pointing null generators to be $-\partial_u - \frac{a}{2Mr_-}\partial_{\varphi_*}$ and $-\partial_u - \frac{a}{2Mr_+}\partial_{\varphi_*}$. Note the minus signs in the formulas for the null generators!

The vector fields $T$ and $Z$ smoothly extend to $\hat{\mathcal{D}}$ and satisfy $T = \partial_u$ and $Z = \partial_{\varphi_*}$.

\subsection{Kruskal type coordinates in the exterior and interior}
Lastly, following closely the presentation in~\cite{oneil}, we will discuss various Kruskal type coordinates. The importance of these coordinates is that they are valid near the bifurcation spheres of the event horizon and Cauchy horizon.

We start with the exterior. Define coordinates $\left(U_+,V_+,\theta,\varphi_+\right) \in (0,\infty)\times (0,\infty) \times \mathbb{S}^2$ on $\mathcal{D}_{\rm ext}$ in terms of Boyer--Lindquist coordinates by
\[
U_+\left(t,r\right) \doteq \exp\left(-\kappa_+\left(t-r^*(r)\right)\right),\qquad V_+\left(t,r\right) \doteq \exp\left(\kappa_+\left(t+r^*(r)\right)\right), \]
\[\varphi_+\left(\varphi,t\right) \doteq \varphi - \frac{a}{r_+^2+a^2}t.\]
The map $(t,r,\theta,\varphi) \mapsto (U_+,V_+,\theta,\varphi_+)$ is easily checked to be a diffeomorphism on $\mathcal{D}_{\rm ext}$. We will not give the explicit form of the metric $g_{a,M}$ in these coordinates (see~\cite{oneil} for the relevant formulas); instead we will just note that after doing so, it is manifestly clear that the Lorentzian manifold $(\mathcal{D}_{\rm ext},g_{a,M})$ extends a Lorentzian manifold with stratified boundary $(\overline{\mathcal{D}}_{\rm ext},g_{a,M})$ where
\[\overline{\mathcal{D}}_{\rm ext} \doteq \{\left(U_+,V_+,\theta,\varphi_+\right) \in [0,\infty) \times [0,\infty) \times \mathbb{S}^2\},\]
\[\partial\overline{\mathcal{D}}_{\rm ext} = \{V_+ = 0\} \cup \{U_+ = 0\}.\] 
Note that $\{V_+ = 0\}$ and $\{U_+ = 0\}$ intersect along a sphere.

Furthermore, the naturally defined diffeomorphism $(U_+,V_+,\theta,\varphi_+) \mapsto (v,r,\theta,\varphi^*)$ on $\mathcal{D}_{\rm ext}$ immediately extends to a diffeomorphism of $\{\left(U_+,V_+,\theta,\varphi_+\right) \in [0,\infty) \times (0,\infty) \times \mathbb{S}^2\}$ and $\mathcal{D} \cap [r_+,\infty)$ where, in particular, the future outgoing event horizon $\mathcal{H}^+_{\rm out}$ corresponds to the set $\{U_+ = 0\} \cap \{V_+ > 0\}$.  Similarly, the past event horizon $\mathcal{H}^-$ corresponds to $\{V_+ = 0\} \cap \{U_+ > 0\}$. The sphere $\{(U_+,V_+) = (0,0)\}$ will be denoted by $\mathcal{B}_+$ and is called the ``bifurcation sphere'' (of the event horizon). We will set
\begin{equation}\label{eventbif}
\overline{\mathcal{H}^+_{\rm out}} \doteq \{U_+ = 0\} \cap \{V_+ \geq 0\},\qquad \overline{\mathcal{H}^-} \doteq \{V_+ = 0\} \cap \{U_+ \geq 0\}.
\end{equation}

The vector fields $T$ and $Z$ also extend to $\overline{\mathcal{D}}_{\rm ext}$ where they are given by the following formulas
\begin{equation}\label{Tbifevent}
T = -\kappa_+U_+\partial_{U_+} + \kappa_+V_+\partial_{V_+} - \frac{a}{r_+^2+a^2}\partial_{\varphi_+},\qquad Z = \partial_{\varphi_+}.
\end{equation}

We note, in particular, that for an axisymmetric function $\psi$, i.e., one satisfying $Z\psi = 0$, we have
\begin{equation}\label{formulabifen}
\int_0^{\epsilon}\int_{\mathbb{S}^2}\left(\partial_{U_+}\psi\right)^2|_{\mathcal{H}^-}\, dU_+\, d\mathbb{S}^2 \sim \int_{\log\left(\epsilon\right)\kappa_+^{-1}}^{\infty}\int_{\mathbb{S}^2}\left(\partial_u\psi\right)^2|_{\mathcal{H}^-}e^{\kappa_+u}\, du\, d\mathbb{S}^2.
\end{equation}

We will often refer to the $\{t = 0\}$ hypersurface, defined in terms of Boyer--Lindquist coordinates.
\begin{remark}\label{t0cmpsupp} We emphasise that $\{t = 0\}$ does \underline{not} contain the bifurcation sphere $\mathcal{B}_+$; in particular, if a function has compact support on $\{t = 0\}$ then it must vanish for $r$ sufficiently large and for $r-r_+$ sufficiently small.
\end{remark}

\begin{remark}\label{schw}We have assumed throughout our discussion of the Kerr spacetimes that $a \neq 0$. However, the reader may easily check that the construction of $\overline{\mathcal{D}}_{\rm ext}$ goes through without any difficulty if $a = 0$. 
\end{remark}

The story in the interior is a bit more complicated because we need to work with two separate Kruskal type coordinates. The new set of coordinates $\left(U_-,V_-,\theta,\varphi_-\right) \in (0,\infty) \times (-\infty,0) \times \mathbb{S}^2$ and  $\left(U_{\times},V_{\times},\theta,\varphi_{\times}\right) \in (-\infty,0) \times (0,\infty) \times \mathbb{S}^2$ are defined by
\[U_-\left(t,r\right) \doteq \exp\left(-\kappa_-\left(t-r^*(r)\right)\right),\qquad 
V_-\left(t,r\right) \doteq -\exp\left(\kappa_+\left(t+r^*(r)\right)\right),\]
\[\varphi_-\left(\varphi,t\right) \doteq \varphi - \frac{a}{r_-^2+a^2}t,\]
\[U_{\times}\left(t,r\right) \doteq -\exp\left(-\kappa_+\left(t-r^*(r)\right)\right),\qquad V_{\times}\left(t,r\right) \doteq \exp\left(\kappa_+\left(t+r^*(r)\right)\right),
\]
\[\varphi_{\times}\left(\varphi,t\right) \doteq \varphi - \frac{a}{r_+^2+a^2}t,\]
Now the metric extends to a Lorentzian manifold with stratified boundary $(\overline{\mathcal{D}}_{\rm int},g_{a,M})$ where
\begin{align*}
\overline{\mathcal{D}}_{\rm int} \doteq &\{\left(U_-,V_-,\theta,\varphi_-\right) \in [0,\infty) \times (-\infty,0] \times \mathbb{S}^2\}
\\ \nonumber &\cup \{\left(U_{\times},V_{\times},\theta,\varphi_{\times}\right) \in (-\infty,0]\times [0,\infty) \times \mathbb{S}^2\},
\end{align*}
where we have identified the regions $\{\left(U_-,V_-,\theta,\varphi_-\right) \in (0,\infty) \times (-\infty,0) \times \mathbb{S}^2\}$ and $\{\left(U_{\times},V_{\times},\theta,\varphi_{\times}\right) \in (-\infty,0)\times (0,\infty) \times \mathbb{S}^2\}$ by expressing both $(U_-,V_-,\theta,\varphi_-)$ and $(U_{\times},V_{\times},\theta,\varphi_{\times})$ in Boyer--Lindquist coordinates.

As in the exterior, the various horizons in the interior may be identified with suitable constant $U_-$, $U_{\times}$, $V_-$, or $V_{\times}$ hypersurfaces. In this case the outgoing Cauchy horizon $\mathcal{CH}^+_{\rm out}$ is given by $\{U_- = 0\}\cap \{V_- < 0\}$, the ingoing Cauchy horizon $\mathcal{CH}^+_{\rm in}$ is given by $\{V_- = 0\} \cap \{U_- > 0\}$, the sphere $\{(U_-,V_-) = (0,0)\}$ is denoted by $\mathcal{B}_-$ and is called the ``bifurcation sphere'' (of the Cauchy horizon), and finally, the ingoing future event horizon $\mathcal{H}^+_{\rm in}$ is given by $\{V_{\times} = 0\} \cap \{U_{\times} < 0\}$. We then set
\begin{equation}\label{cauchybif}
\overline{\mathcal{CH}^+_{\rm out}} \doteq \{U_- = 0\}\cap \{V_- \leq 0\},\qquad \overline{\mathcal{CH}^+_{\rm in}} \doteq \{V_- = 0\} \cap \{U_- \geq 0\}.
\end{equation}

As in previous sections, the vector fields $T$ and $Z$ extend to $\overline{\mathcal{D}}_{\rm in}$ where they are given by the following formulas
\begin{equation}\label{Tbifcauchy}
T = -\kappa_-U_-\partial_{U_-} + \kappa_-V_-\partial_{V_-} - \frac{a}{r_-^2+a^2}\partial_{\varphi_-},\qquad Z = \partial_{\varphi_-}.
\end{equation}

We note the following analogue of~\eqref{formulabifen}. Let $\psi$ be an axisymmetric function. We then have
\begin{equation}\label{formulabifen2}
\int_0^{\epsilon}\int_{\mathbb{S}^2}\left(\partial_{V_-}\psi\right)^2|_{\mathcal{CH}^-_{\rm out}}\, dV_-\, d\mathbb{S}^2 \sim \int_{-\log\left(\epsilon\right)\kappa_-^{-1}}^{\infty}\int_{\mathbb{S}^2}\left(\partial_v\psi\right)^2|_{\mathcal{CH}^-_{\rm out}}e^{-\kappa_-v}\, dv\, d\mathbb{S}^2.
\end{equation}

Finally, we set
\[\overline{\mathcal{D}} \doteq \overline{\mathcal{D}}_{\rm ext} \cup \overline{\mathcal{D}}_{\rm int}.\]
This is a smooth Lorentzian manifold with stratified boundary. The boundary $\partial\overline{\mathcal{D}}$ has two components: $\mathcal{H}^+_{\rm in}\cup\mathcal{B}_+\cup\mathcal{H}^-$, which is a smooth manifold, and $\mathcal{CH}^+_{\rm out} \cup \mathcal{B}_- \cup \mathcal{CH}^+_{\rm in}$ which is not smooth at $\mathcal{B}_-$.   

The Penrose diagram of $\overline{\mathcal{D}}$ is given by the following:
\begin{center}
\begin{tikzpicture}
\fill[lightgray] (-2,2)--(2,-2)--(0,-4) -- (-4,0); 
\draw[dashed] (0,0) -- (2,-2) --  (0,-4); 
\draw (0,-4) -- (-2,-2) node[sloped,below,midway]{$\mathcal{H}^-$}; 
\draw (-2,-2) -- (-4,0) node[sloped,below,midway]{$\mathcal{H}^+_{\rm in}$}; 
\draw (-2,-2) -- (0,0) node[sloped,above,midway]{$\mathcal{H}^+_{\rm out}$}; 
\draw (-4,0) -- (-2,2) node[sloped,above,midway]{$\mathcal{CH}^+_{\rm out}$}; 
\draw (-2,2) -- (0,0) node[sloped,above,midway]{$\mathcal{CH}^+_{\rm in}$}; 
\path [draw=black,fill=white] (0,0) circle (1/16); 
\path [draw=black,fill=white] (2,-2) circle (1/16); 
\path [draw=black,fill=white] (0,-4) circle (1/16); 
\path [draw=black,fill=black] (-2,-2) circle (1/16) node[left]{$\mathcal{B}_+$}; 
\path [draw=black,fill=white] (-4,0) circle (1/16); 
\path [draw=black,fill=black] (-2,2) circle (1/16) node[left]{$\mathcal{B}_-$}; 

\node [align = flush center] at (-1,-4.5) {Figure 7: The region $\overline{\mathcal{D}}$};
\end{tikzpicture}
\end{center}

\subsection{Null infinity}\label{pennullsec}

We close this section with a brief discussion of future and past null infinity. We define 
\[\mathcal{I}^+ \doteq \left\{\left(u,\theta,\phi_*\right) \in \mathbb{R} \times \mathbb{S}^2\right\},\qquad \mathcal{I}^- \doteq \left\{\left(v,\theta,\phi^*\right) \in \mathbb{R} \times \mathbb{S}^2\right\}.\]

Future and past null infinity $\mathcal{I}^+$ and $\mathcal{I}^-$ may be attached as a boundary to the spacetime (cf.~\cite{scatter}). However, we will not need a formal definition of this; in this paper the relationship between $\mathcal{I}^{\pm}$ and $\overline{\mathcal{D}}$ will simply by determined by the formulas~\eqref{iplusrad} and~\eqref{iminusrad}.

We depict these boundaries in the Penrose diagram below:

\bigskip

\begin{center}
\begin{tikzpicture}
\fill[lightgray] (0,0)--(2,-2)--(0,-4) -- (-2,-2); 
\draw[dashed] (0,0) -- node[sloped,above,midway]{$\mathcal{I}^+$}(2,-2) -- node[sloped,below,midway]{$\mathcal{I}^-$} (0,-4); 
\draw (0,-4) -- (-2,-2) node[sloped,below,midway]{$\mathcal{H}^-$}; 
\draw (-2,-2) -- (0,0) node[sloped,above,midway]{$\mathcal{H}^+_{\rm out}$}; 
\path [draw=black,fill=white] (0,0) circle (1/16); 
\path [draw=black,fill=white] (2,-2) circle (1/16); 
\path [draw=black,fill=white] (0,-4) circle (1/16); 
\path [draw=black,fill=black] (-2,-2) circle (1/16) node[left]{$\mathcal{B}_+$}; 

\node [align = flush center] at (0,-4.5) {Figure 8: The region $\overline{\mathcal{D}}_{\rm ext}$ with null infinity $\mathcal{I}^{\pm}$};
\end{tikzpicture}
\end{center}

\section{Basic facts about the wave equation}\label{basic}

In this section we will quickly review some basic and general facts about the wave equation. 

Let $\psi$ be a solution to the wave equation~\eqref{wave} on a Lorentzian manfiold $(\mathcal{M},g)$. We begin by recalling the ``energy-momentum tensor'':
\[\mathbf{T}_{\mu\nu}\left[\psi\right] \doteq \left(\partial_{\mu}\psi\right)\left(\partial_{\nu}\psi\right) - \frac{1}{2}g_{\mu\nu}g^{\gamma\delta}\left(\partial_{\gamma}\psi\right)\left(\partial_{\delta}\psi\right).\]
When there is no risk of confusion we will drop the ``$[\psi]$'' from the definition of $\mathbf{T}_{\mu\nu}$.

For any vector field $X$ we define the associated current:
\[\mathbf{J}^X_{\mu}\left[\psi\right] \doteq \mathbf{T}_{\mu\nu}X^{\nu}.\]
As above, we will drop the ``$[\psi]$'' if there is no risk of confusion. The $\mathbf{J}^X_{\mu}$ current satisfies the following divergence identity:
\[\nabla^{\mu}\mathbf{J}^X_{\mu} = \frac{1}{2}\mathbf{T}^{\mu\nu}\left(\mathcal{L}_Xg\right)_{\mu\nu}.\]

Applying the divergence theorem in a region $\mathcal{B}$ bounded by two homologous hypersurfaces $\Sigma_1$ and $\Sigma_2$ yields the following identity:
\begin{equation}\label{enid}
\int_{\Sigma_1}\mathbf{J}^X_{\mu}n^{\mu}_{\Sigma_1} - \int_{\Sigma_2}\mathbf{J}^X_{\mu}n^{\mu}_{\Sigma_2} =  \frac{1}{2}\int_{\mathcal{B}}\mathbf{T}^{\mu\nu}\left(\mathcal{L}_Xg\right)_{\mu\nu}.
\end{equation}
The normal vectors $n^{\mu}_{\Sigma_1}$ and $n^{\mu}_{\Sigma_2}$ and the associated volume forms are the ones induced by the divergence theorem. In particular, if $X$ is Killing then $\mathcal{L}_Xg$ vanishes and we obtain a conservation law.

If $X$ is timelike and $\Sigma$ is spacelike, then $\mathbf{J}^X_{\mu}n^{\mu}_{\Sigma}$ is positive definite and, in fact, coercive.  If $X$ is timelike and $\Sigma$ is null, then $\mathbf{J}^X_{\mu}n^{\mu}_{\Sigma}$ controls all \emph{tangential} derivatives of the solution. We refer to these two situations as ``non-degenerate energies''. If $X$ is null and $\Sigma$ is spacelike or null, then $\mathbf{J}^X_{\mu}n^{\mu}_{\Sigma}$ is still non-negative, but control over certain derivatives is lost. We refer to these cases as ``degenerate energies''.

See the book~\cite{christ2} for a thorough discussion of energy currents.

Systematic use of the identity~\eqref{enid} allows one to prove general well-posedness and domain of dependence results for the Cauchy and characteristic initial value problem.  In this paper, we will specifically use the well-posedness of the following characteristic initial value problem.

\begin{proposition}\label{totheinterior} Let $\uppsi_{\mathcal{H}^+} : \mathcal{H}^+ \to \mathbb{R}$ be any smooth function which vanishes for $v$ sufficiently negative. Then there exists a unique smooth solution $\psi : \mathcal{D}_{\rm int} \to \mathbb{R}$ to the wave equation~\eqref{wave} such that $\psi|_{\mathcal{H}^+_{\rm in}} = 0$ and $\psi|_{\mathcal{H}^+_{\rm out}} = \uppsi_{\mathcal{H}^+}$.
\end{proposition}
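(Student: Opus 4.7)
The plan is to combine a standard Rendall-type existence theorem for the characteristic initial value problem on two transverse null hypersurfaces with a global energy-estimate argument using the spacelike foliation by constant-$r$ hypersurfaces. The hypothesis that $\uppsi_{\mathcal{H}^+}$ vanishes for $v$ sufficiently negative is crucial: it ensures that the two pieces of data agree (and in fact vanish identically) in a neighbourhood of the bifurcation sphere $\mathcal{B}_+$, so there is no compatibility obstruction at the corner where $\mathcal{H}^+_{\rm in}$ and $\mathcal{H}^+_{\rm out}$ meet.

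First I would apply local well-posedness for the characteristic initial value problem to produce a smooth solution $\psi$ in some one-sided neighbourhood of $\mathcal{H}^+_{\rm in}\cup \mathcal{H}^+_{\rm out}$ inside $\mathcal{D}_{\rm int}$. Next, I would foliate $\mathcal{D}_{\rm int}$ by the hypersurfaces $\Sigma_r = \{r = \mathrm{const}\}$ for $r \in (r_-,r_+)$, which are spacelike since $\Delta < 0$ on $\mathcal{D}_{\rm int}$. Since $-\partial_r$ is timelike in the interior (as noted in the paper), the current $\mathbf{J}^{-\partial_r}_\mu n^\mu_{\Sigma_r}$ is coercive and controls all first derivatives of $\psi$ on $\Sigma_r$. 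Applying the divergence identity~\eqref{enid} between $\Sigma_r$ and the two initial null segments gives a bulk term $\tfrac12 \int \mathbf{T}^{\mu\nu}(\mathcal{L}_{-\partial_r} g)_{\mu\nu}$ whose coefficients are smooth and bounded on any compact subinterval $[r_0,r_1]\subset (r_-,r_+)$, and is therefore dominated pointwise by the energy density. Gronwall's inequality then yields a uniform a priori bound on the coercive energy over $\Sigma_r$ for every $r \in (r_-,r_+)$, which together with the local existence theorem lets one extend the solution to all of $\mathcal{D}_{\rm int}$ by a standard continuity argument.

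To upgrade the solution from $H^1_{\rm loc}$ to smooth, I would commute the wave equation with the Killing fields $T = \partial_t$ and $Z = \partial_\varphi$ (which produces no error terms by~\eqref{enid}) to bound all tangential derivatives, and then use the wave equation~\eqref{wave} itself to trade $\partial_r$-derivatives for tangential and $T$, $Z$ derivatives, yielding $H^k_{\rm loc}$ bounds for every $k$ and hence smoothness via Sobolev embedding. Uniqueness follows from applying the same coercive energy estimate to the difference of two candidate solutions, which vanishes on $\mathcal{H}^+_{\rm in}\cup\mathcal{H}^+_{\rm out}$ and therefore everywhere. The only mildly delicate point is ensuring that the non-Killing nature of $-\partial_r$ does not spoil the estimate near $r_-$ or $r_+$, but since the conclusion of the proposition only concerns the \emph{open} interior region $\mathcal{D}_{\rm int}$, one only ever needs to estimate up to a fixed $r_0 > r_-$ and down to a fixed $r_1 < r_+$, on which the deformation tensor is uniformly bounded and Gronwall closes.
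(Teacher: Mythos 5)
Your proposal is essentially the argument the paper has in mind: Proposition~\ref{totheinterior} is stated without proof, the text remarking only that such well-posedness results follow from systematic use of the energy identity~\eqref{enid}, and your combination of local characteristic well-posedness near the corner $\mathcal{B}_+$ (where the data vanish) with coercive energy estimates on the spacelike constant-$r$ foliation of $\mathcal{D}_{\rm int}$ is the standard way to carry this out. Two imprecisions are worth fixing. First, commuting with $T$ and $Z$ does \emph{not} bound ``all tangential derivatives'' on Kerr: these Killing fields span only the $t$- and $\varphi$-directions, and $\partial_\theta$ is not Killing, so higher-order angular regularity does not come for free; you should either commute in addition with the Carter operator $Q$ (which commutes with $\Box_g$, cf.~\cite{blueand}, exactly as the paper does in its proofs of Theorems~\ref{radexist} and~\ref{scatexist}) together with elliptic estimates on the spheres, or commute with arbitrary coordinate vector fields and absorb the resulting lower-order commutator errors into the Gronwall argument. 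Second, since $\uppsi_{\mathcal{H}^+}$ is not assumed to decay as $v\to+\infty$, the total flux through $\mathcal{H}^+_{\rm out}$ and hence the global-in-$t$ energy on $\Sigma_r$ may be infinite; the estimate should be localised to the domain of dependence of a compact portion of the data, which suffices because the causal past of any point of $\mathcal{D}_{\rm int}$ meets $\mathcal{H}^+_{\rm out}$ only in a set of bounded $v$ (on which the data are compactly supported by hypothesis), and $\mathcal{H}^+_{\rm in}\cup\mathcal{B}_+\cup\mathcal{H}^+_{\rm out}$ is a past Cauchy hypersurface for $\mathcal{D}_{\rm int}$. Neither point affects the soundness of the overall strategy.
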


\section{Scattering theory in the exterior}\label{reviewScat}
In the work~\cite{scatter}, in collaboration with Rodnianski, we gave a complete scattering theory for the wave equation~\eqref{wave} on sub-extremal Kerr black hole exteriors; in particular, bounded isomorphisms were established between Hilbert spaces of radiations fields along $\mathcal{H}^+$ and $\mathcal{I}^+$ and $\mathcal{H}^-$ and $\mathcal{I}^-$, and asymptotic completeness was proven. The Hilbert space norms used were \emph{degenerate} along the horizons $\mathcal{H}^{\pm}$ where they were given by the flux of the Hawking vector field~\eqref{hawkvect}.  

In this section we will quickly review the results therein which are relevant for the proof of our main results.

\subsection{Existence and boundedness of radiation fields}
We begin by recalling the definitions of various ``radiation fields'' for smooth solutions.
\begin{definition}\label{defrad}Let $(a,M)$ satisfy $0 \leq |a| < M$ and $\psi : \overline{\mathcal{D}}_{\rm ext} \to \mathbb{R}$ be a smooth solution to the wave equation~\eqref{wave} with compactly supported Cauchy data along $\{t = 0\}$ (cf.\ Remark~\ref{t0cmpsupp}). Then
\begin{enumerate}[I.]
 \item The radiation field $\uppsi_{\mathcal{H}^+}\left(v,\theta,\varphi^*\right) : \mathcal{H}^+ \to \mathbb{R}$  is defined by \
\begin{equation}\label{hplusrad}
\uppsi_{\mathcal{H}^+}\left(v,\theta,\varphi^*\right) \doteq \psi|_{\mathcal{H}^+}\left(v,\theta,\varphi^*\right).
\end{equation}
\item The radiation field $\uppsi_{\mathcal{H}^-}\left(u,\theta,\varphi_*\right) : \mathcal{H}^- \to \mathbb{R}$  is defined by
\begin{equation}\label{hminusrad}
\uppsi_{\mathcal{H}^-}\left(u,\theta,\varphi_*\right) \doteq \psi|_{\mathcal{H}^-}\left(u,\theta,\varphi_*\right).
\end{equation}
\item The radiation field $\upphi_{\mathcal{I}^+}\left(u,\theta,\varphi_*\right) : \mathcal{I}^+ \to \mathbb{R}$ is defined by
\begin{equation}\label{iplusrad}
\upphi_{\mathcal{I}^+}\left(u,\theta,\varphi_*\right) \doteq \lim_{r\to\infty}\left(r\psi\right)\left(u,r,\theta,\varphi_*\right).
\end{equation}
\item The radiation field $\upphi_{\mathcal{I}^-}\left(v,\theta,\varphi^*\right) : \mathcal{I}^- \to \mathbb{R}$ is defined by
\begin{equation}\label{iminusrad}
\upphi_{\mathcal{I}^-}\left(v,\theta,\varphi^*\right) \doteq \lim_{r\to\infty}\left(r\psi\right)\left(v,r,\theta,\varphi^*\right).
\end{equation}
\end{enumerate}
It follows from the analysis in Section 4 of~\cite{scatter} that these limits are all well-defined and smooth.
\end{definition}

This next theorem shows that the radiation fields from Definition~\ref{defrad} can be defined for a more general class of solutions and can be controlled globally in terms of Cauchy data.  For the applications to this paper, it suffices to work in the class of axisymmetric solutions, i.e.,~solutions $\psi$ which satisfy $\partial_{\varphi}\psi = 0$. The restriction to axisymmetry serves only to simplify the exposition of the paper and is in no way essential to our main results. In particular, it is not important for our constructions that the solutions $\psi$ are not amplified by superradiance.
\begin{oldtheorem}\label{radexist}\cite{scatter} (Existence and boundedness of radiation fields) Let $\psi : \mathcal{D}_{\rm ext} \to \mathbb{R}$ be a smooth axisymmetric solution to the wave equation~\eqref{wave} such that the Cauchy data $\left(\psi,n_{\{t=0\}}\psi\right)$ for $\psi$ along the $\{t = 0\}$ hypersurface lie in the closure of compactly supported smooth functions under the norm
\begin{equation}\label{tnorm}\left\vert\left\vert \left(\psi,n_{\{t=0\}}\psi\right)\right\vert\right\vert^2_{\mathcal{E}^T} \doteq \int_{r_+}^{\infty}\int_{\mathbb{S}^2}\left[\frac{r^2}{\Delta}\left(\partial_t\psi\right)^2 + \frac{\Delta}{r^2}\left(\partial_r\psi\right)^2 + r^{-2}\left(\partial_{\theta}\psi\right)^2\right] r^2\, dr\, d\mathbb{S}^2.
\end{equation}
Let $\{\psi^{(i)}\}$ be a sequence of smooth solutions to the wave equation corresponding to a sequence $\left\{\left(\psi^{(i)},n_{\{t=0\}}\psi^{(i)}\right)\right\}_{i=1}^{\infty}$ of smooth Cauchy data which are compactly supported on $\{t = 0\}$ and converge to $\left(\psi,n_{\{t=0\}}\psi\right)$ in $\left\vert\left\vert\cdot\right\vert\right\vert_{\mathcal{E}^T}$.

Then the derivatives of the radiation fields $\partial_u\upphi^{(i)}_{\mathcal{I}^+}$, $\partial_v\upphi^{(i)}_{\mathcal{I}^-}$, $\partial_u\uppsi^{(i)}_{\mathcal{H}^-}$ and $\partial_v\uppsi^{(i)}_{\mathcal{H}^+}$ each converge in $L^2\left(\mathbb{R}\times\mathbb{S}^2\right)$ to functions $\partial_u\upphi_{\mathcal{I}^+}$, $\partial_v\upphi_{\mathcal{I}^-}$, $\partial_u\uppsi_{\mathcal{H}^-}$ and $\partial_v\uppsi_{\mathcal{H}^+}$. Furthermore, these functions $\upphi_{\mathcal{I}^{\pm}}$ and $\uppsi_{\mathcal{H}^{\pm}}$ are independent of the choice of the sequence $\left\{\left(\psi_i,n_{\{t=0\}}\psi_i\right)\right\}_{i=1}^{\infty}$, and we thus define them to be the radiation fields corresponding to $\psi$.

Finally, the radiation fields of $\psi$ are all axisymmetric and are controlled by $\psi$'s Cauchy data in the following sense:
\[\left\vert\left\vert \partial_u\upphi_{\mathcal{I}^+}\right\vert\right\vert_{L^2} + \left\vert\left\vert \partial_v\upphi_{\mathcal{I}^-}\right\vert\right\vert_{L^2} + \left\vert\left\vert \partial_u\uppsi_{\mathcal{H}^-}\right\vert\right\vert_{L^2} + \left\vert\left\vert \partial_v\uppsi_{\mathcal{H}^+}\right\vert\right\vert_{L^2} \leq C\left\vert\left\vert \left(\psi,n_{\{t=0\}}\psi\right)\right\vert\right\vert^2_{\mathcal{E}^T}.\]
\end{oldtheorem}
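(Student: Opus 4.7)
The plan is to prove the theorem by first establishing the bound for smooth, compactly supported Cauchy data and then extending by density. For smooth $\psi$ with data compactly supported on $\{t=0\}$, the global existence on $\mathcal{D}_{\rm ext}$ follows from standard well-posedness theory, and the four radiation fields exist in the classical pointwise sense as in Definition~\ref{defrad}: $\uppsi_{\mathcal{H}^\pm}$ as limits at $\{r = r_+\}$ in Eddington--Finkelstein-type coordinates, and $\upphi_{\mathcal{I}^\pm}$ as limits of $r\psi$ along outgoing/ingoing null rays. The finiteness of the latter limits can be deduced from finite speed of propagation together with $r$-weighted estimates (of Morawetz type, or a simple $r^p$-hierarchy argument) applied in a neighborhood of null infinity.

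The heart of the matter is the energy identity. I would apply the divergence theorem~\eqref{enid} with the Killing vector field $X = T$ on the region in $\mathcal{D}_{\rm ext}$ bounded to the past by $\{t=0\}$ and to the future by a hypersurface that asymptotes, in the limit, to $\mathcal{H}^+ \cup \mathcal{I}^+$ (say, a family of outgoing null cones pushed to infinity together with a family of ingoing null cones pushed to $\mathcal{H}^+$). Since $T$ is Killing, $\mathcal{L}_Tg = 0$ and the bulk term vanishes, producing an exact conservation law. The flux through $\{t=0\}$ is comparable to $\vert\vert (\psi,n_{\{t=0\}}\psi)\vert\vert^2_{\mathcal{E}^T}$ in view of the Boyer--Lindquist expressions for $n_{\{t=0\}}$ and the induced volume form (the weights $r^2/\Delta$ and $\Delta/r^2$ in~\eqref{tnorm} arise precisely from the degeneration of $T$ near $r=r_+$ combined with the $\Delta^{-1}$ in $g_{a,M}$). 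The flux through $\mathcal{H}^+$ reduces, after passage to the limit, to $\int_{\mathbb{R}\times\mathbb{S}^2}\left(\partial_v\uppsi_{\mathcal{H}^+}\right)^2\, dv\, d\mathbb{S}^2$ up to angular terms, using that on $\mathcal{H}^+$ the Hawking generator $K$ from~\eqref{hawkvect} is null and, on axisymmetric $\psi$, acts as $\partial_v$; the flux through $\mathcal{I}^+$ similarly captures $\int (\partial_u\upphi_{\mathcal{I}^+})^2$. The analogous identity for the past yields the bounds for $\uppsi_{\mathcal{H}^-}$ and $\upphi_{\mathcal{I}^-}$.

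The step I expect to be the main obstacle is handling the non-positivity of $T$ in the ergoregion, which is exactly why the theorem is restricted to axisymmetric $\psi$. For such $\psi$, since $Z\psi = 0$, the currents $\mathbf{J}^T[\psi]$ and $\mathbf{J}^K[\psi]$ differ only by a multiple of $(Z\cdot n)|\nabla\psi|^2$; on hypersurfaces adapted to the $T$-translation symmetry this multiple vanishes or has a controlled sign, so the $T$-flux can be identified with the manifestly non-negative $K$-flux on the horizon and with the non-negative $T$-flux at infinity (where $T$ is timelike). This eliminates the superradiant loss and turns the conservation law into the required one-sided bound. A secondary technical issue is justifying that the limits defining $\upphi_{\mathcal{I}^\pm}$ are actually attained in a way compatible with the energy flux through a hypersurface at ``$r=\infty$''; this is handled by a limiting argument with a sequence of hypersurfaces $\{r = R_n\}$ and a weighted estimate controlling the error in $(r\psi)(u,R_n,\cdot) - \upphi_{\mathcal{I}^+}(u,\cdot)$ uniformly in compact $u$-intervals.

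Once the inequality $\vert\vert \partial_u\upphi_{\mathcal{I}^+}\vert\vert_{L^2} + \ldots \leq C \vert\vert (\psi, n_{\{t=0\}}\psi)\vert\vert^2_{\mathcal{E}^T}$ is established on the dense subspace of smooth compactly supported Cauchy data, the passage to general data in the closure is routine: the map from Cauchy data to the quadruple of radiation-field derivatives extends by uniform continuity to a bounded operator with values in $L^2(\mathbb{R}\times\mathbb{S}^2)^{\oplus 4}$, and applying this to any approximating sequence shows both the $L^2$-convergence of $\partial_u\upphi^{(i)}_{\mathcal{I}^+}$ etc.\ and the independence of the limit from the choice of sequence. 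Axisymmetry of the limits is immediate since axisymmetry of each $\psi^{(i)}$ is preserved by $L^2$ convergence.
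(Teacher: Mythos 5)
Your proposal is essentially correct in outline, but it takes a genuinely different route from the paper: the paper's proof of Theorem~\ref{radexist} is a two-line citation, reducing everything to Theorem 8.2.3 of~\cite{scatter} (specialised to axisymmetric solutions), together with commutation by $T$ and the Carter operator $Q$ and elliptic estimates, after the single observation that the norm~\eqref{tnorm} is the $\mathbf{J}^T$ (equivalently $\mathbf{J}^K$) flux through $\{t=0\}$ for axisymmetric $\psi$. You instead reconstruct the argument that sits inside the cited theorem: existence of the radiation fields for compactly supported data via $r$-weighted estimates near $\mathcal{I}^{\pm}$, the exact $T$-conservation law with all boundary fluxes non-negative in the axisymmetric class (no superradiant loss), and a density argument. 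For the axisymmetric case this is indeed the classical and essentially elementary argument, and it correctly identifies why the restriction to axisymmetry matters; what the paper's citation buys is that all of the limiting and regularity issues are already packaged in~\cite{scatter}, while your version makes the mechanism visible. Note that the horizon flux of $\mathbf{J}^K$ for axisymmetric $\psi$ is exactly $\int (\partial_v\uppsi_{\mathcal{H}^+})^2$ with no angular terms, since $K$ is null there.

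One step of your sketch deserves more care before it could be called a proof. You dispose of the ergoregion by noting that $\mathbf{J}^T\cdot n$ and $\mathbf{J}^K\cdot n$ differ, for axisymmetric $\psi$, by a multiple of $g(Z,n)\,|\nabla\psi|^2_g$, and that this vanishes ``on hypersurfaces adapted to the $T$-translation symmetry''. That is true on $\{t={\rm const}\}$ (where $g(Z,\nabla t)=0$) and on $\mathcal{H}^+$ (where the normal is $K$ and $g(Z,K)=0$), but your exhausting family consists of outgoing and ingoing \emph{null cones}, whose null normals generically have a nonzero inner product with $Z$ on Kerr; since $|\nabla\psi|^2_g$ is sign-indefinite, the non-negativity of the $T$-flux through these intermediate hypersurfaces inside the ergoregion is not automatic and must be arranged, e.g.\ by choosing the exhausting hypersurfaces to be level sets of a time function invariant under $T$ and $Z$ near the horizon, or by quoting the boundedness statement of~\cite{scatter} for exactly this step. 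Separately, the commutation with $T$ and $Q$ invoked by the paper is what secures the pointwise/smooth existence of the radiation fields for the approximating compactly supported solutions; your replacement of this by $r^p$-weighted estimates near infinity is a legitimate alternative but is the other place where the details are only gestured at.
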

\begin{proof}Keeping in mind the axisymmetry of $\psi$, we note that the norm~\eqref{tnorm} is simply the $\mathbf{J}^T$ (or $\mathbf{J}^K$) flux of the solution through the hypersurface $\{t = 0\}$.

The theorem then  follows from Theorems 8.2.3.~of~\cite{scatter} (specialised to the case of axisymmetric solutions $\psi$), commuting with $T$ and the Carter operator $Q$ (see~\cite{blueand}), and straightforward elliptic estimates.

\end{proof}
\subsection{Existence of scattering states}
Next, we cite a theorem concerning the existence of solutions with prescribed radiation fields along $\mathcal{I}^+$ or $\mathcal{I}^-$.
\begin{oldtheorem}\label{scatexist}\cite{scatter} (Existence of scattering states with data along $\mathcal{I}^{\pm}$) Let $\upphi_{\mathcal{I}^{\pm}} : \mathcal{I}^{\pm} \to \mathbb{R}$ be an axisymmetric function in $\cap_{s=1}^{\infty}\dot{H}^s\left(\mathbb{R}\times \mathbb{S}^2\right)$.

Then there exists a unique smooth solution $\psi : \mathcal{D}_{\rm ext} \to \mathbb{R}$ to the wave equation~(\ref{wave}) such that
\begin{enumerate}[I.]
    \item The Cauchy data $\left(\psi|_{\{t=0\}},n_{\{t=0\}}\psi|_{\{t=0\}}\right)$ lie in the closure of smooth compactly supported functions under the norm $\left\vert\left\vert \cdot\right\vert\right\vert_{\mathcal{E}^T}$.
    \item The radiation field along $\mathcal{I}^{\pm}$ is given by $\upphi_{\mathcal{I}^{\pm}}$.
    \item The radiation field along $\mathcal{H}^{\pm}$ vanishes.
\end{enumerate}

Furthermore, $\psi$ is axisymmetric, the radiation fields $\uppsi_{\mathcal{H}^{\mp}}$ and $\upphi_{\mathcal{I}^{\mp}}$ lie in $\cap_{s=1}^{\infty}\dot{H}^s\left(\mathbb{R}\times\mathbb{S}^2\right)$, $T\uppsi_{\mathcal{H}^{\mp}}$ does \underline{not} vanish identically, and the radiation fields satisfy
\begin{align}\label{radbound}
\int_{-\infty}^{\infty}\int_{\mathbb{S}^2}&\left(T\uppsi_{\mathcal{H}^{\mp}}\right)^2\, dx\, d\mathbb{S}^2 + \int_{-\infty}^{\infty}\int_{\mathbb{S}^2}\left(T\upphi_{\mathcal{I}^{\mp}}\right)^2\, dx\, d\mathbb{S}^2 \leq C\int_{-\infty}^{\infty}\int_{\mathbb{S}^2}\left(T\upphi_{\mathcal{I}^{\pm}}\right)^2\, dx\, d\mathbb{S}^2,
\end{align}
where the constant $C$ does not depend on $\psi$.

\end{oldtheorem}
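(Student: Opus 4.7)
\emph{Plan of proof.} The backbone of the argument is the full scattering-theory isomorphism of \cite{scatter}, from which items I--III and the bound \eqref{radbound} follow almost immediately; axisymmetry and higher regularity are propagated by commuting with $T$ and the Carter operator $Q$; the only part that requires genuine additional input is the non-triviality assertion $T\uppsi_{\mathcal H^{\mp}}\not\equiv 0$.

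First I would invoke the main scattering map of \cite{scatter}, which realises a bounded isomorphism between the Hilbert space of Cauchy data on $\{t=0\}$ in the $\|\cdot\|_{\mathcal{E}^T}$-closure of smooth compactly supported data and the Hilbert space of future (resp.~past) radiation data $(\uppsi_{\mathcal H^{\pm}},\upphi_{\mathcal I^{\pm}})$ measured in the degenerate $T$-energy. To treat the $+$-case, I feed the pair $(0,\upphi_{\mathcal I^+})$ into the inverse of the forward scattering map; this yields existence and uniqueness of a solution $\psi$ whose Cauchy data is of the required type and whose future radiation fields are exactly $(0,\upphi_{\mathcal I^+})$. Composing with the backward scattering map produces the induced past radiation fields $(\uppsi_{\mathcal H^-},\upphi_{\mathcal I^-})$, and the bound \eqref{radbound} is precisely the norm inequality for this composition, once one observes that under axisymmetry the Hawking $\mathbf J^K$-flux and the $\mathbf J^T$-flux through $\mathcal H^{\pm}$ agree modulo manifestly controlled elliptic terms in $\theta$, so the degenerate energies reduce to the displayed $L^2$ norms of $T\uppsi_{\mathcal H^{\mp}}$ and $T\upphi_{\mathcal I^{\mp}}$.

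Second, axisymmetry follows by uniqueness: since $[\Box_g,Z]=0$ and the data $(0,\upphi_{\mathcal I^+})$ is $Z$-invariant, $Z\psi$ solves the problem with trivial scattering data and hence vanishes. For smoothness of $\psi$ and $\cap_s\dot H^s$-regularity of the induced radiation fields, I commute the problem iteratively with $T$ and the Carter operator $Q$, both of which preserve $\Box_g$ and act covariantly on the asymptotic radiation fields (cf.~\cite{blueand}); each power $T^jQ^k$ applied to $(0,\upphi_{\mathcal I^+})$ lies in the hypothesised space, so the scattering isomorphism produces $T^jQ^k\psi$ with uniformly bounded $\mathcal{E}^T$ norm. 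Standard elliptic estimates in $(r,\theta)$, together with the wave equation itself to recover $\partial_r$-derivatives, upgrade this control to smoothness of $\psi$ on $\mathcal D_{\rm ext}$ and to the asserted $\cap_s\dot H^s$-regularity of $\uppsi_{\mathcal H^{\mp}}$ and $\upphi_{\mathcal I^{\mp}}$.

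The step I expect to be the real obstacle is the non-triviality $T\uppsi_{\mathcal H^{\mp}}\not\equiv 0$, i.e.~the non-vanishing of the transmission map from $\mathcal I^{\pm}$ to $\mathcal H^{\mp}$ within the subclass of solutions vanishing on the companion horizon. My plan is to argue by contradiction via the Carter separation used in \cite{scatter}: axisymmetry forces the azimuthal mode $m=0$, and for each fixed real frequency $\omega$ and angular parameter $\ell$, the radial problem becomes a one-dimensional scattering problem on $\mathbb R_{r^*}$ with short-range potential at both ends. Vanishing of both $T\uppsi_{\mathcal H^+}$ (by construction) and $T\uppsi_{\mathcal H^-}$ would force the mode to be simultaneously purely outgoing to $\mathcal I^+$ and absent at the horizon end, which for $\omega\in\mathbb R$ is incompatible with the non-vanishing of the relevant transmission coefficients (no real embedded eigenvalues); Plancherel in $\omega$ and summation in $\ell$ then give $\upphi_{\mathcal I^+}\equiv 0$, a contradiction. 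The required frequency-domain statement is already essentially contained in \cite{scatter}, so in practice this step reduces to invoking the non-vanishing of the transmission coefficient established there.
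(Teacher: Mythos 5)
Your proposal is correct and follows essentially the same route as the paper: the paper's proof is a direct citation of the scattering isomorphisms of \cite{scatter} (its Theorems 5 and 10), combined with commutation by $T$, $Z$ and the Carter operator $Q$ and elliptic estimates, with the identification of the degenerate $\mathbf{J}^T$/$\mathbf{J}^K$ fluxes under axisymmetry exactly as you describe. The non-triviality of $T\uppsi_{\mathcal{H}^{\mp}}$, which you flag as the delicate point and treat via the frequency-domain non-vanishing of the transmission coefficient, is likewise simply imported from \cite{scatter} rather than reproved here, so your expanded sketch of that step is consistent with (and more explicit than) the paper's citation.
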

\begin{proof}Keeping the axisymmetry of $\psi$ in mind, we note that the terms of the left hand side of~\eqref{radbound} refer to the $\mathbf{J}^T$, or $\mathbf{J}^K$, flux of $\psi$ through the future (past) event horizon $\mathcal{H}^{\pm}$ and future (past) null infinity $\mathcal{I}^{\pm}$ respectively. The term on the right hand side refers to the $\mathbf{J}^T$, or $\mathbf{J}^K$, flux of $\psi$ through past (future) null infinity $\mathcal{I}^{\mp}$. 

The theorem then follows from Theorems 5 and 10 of~\cite{scatter}, commuting with $T$, $Z$ and the Carter operator $Q$ (see~\cite{blueand}), and straightforward elliptic estimates.
\end{proof}

The following diagrams schematically depict the maps $\upphi_{\mathcal{I}^{\pm}}\mapsto \left(\uppsi_{\mathcal{H}^{\mp}},\upphi_{\mathcal{I}^{\mp}}\right)$ given by Theorem~\ref{scatexist}.
\begin{center}
\begin{tikzpicture}
\fill[lightgray] (0,0)--(2,-2)--(0,-4) -- (-2,-2)--(0,0); 
\draw[dashed] (0,0) -- (2,-2) node[sloped,above,midway]{$\mathcal{I}^+$}; 
\draw[dashed] (2,-2) -- (0,-4) node[sloped,below,midway]{$\mathcal{I}^-$}; 
\draw (0,-4) -- (-2,-2) node[sloped,below,midway]{$\mathcal{H}^-$}; 
\draw (-2,-2) -- (0,0) node[sloped,above,midway]{$\mathcal{H}^+_{\rm out}$}; 
\path [draw=black,fill=white] (0,0) circle (1/16); 
\path [draw=black,fill=white] (2,-2) circle (1/16); 
\path [draw=black,fill=white] (0,-4) circle (1/16); 
\path [draw=black,fill=black] (-2,-2) circle (1/16) node[left]{$\mathcal{B}_+$}; 
\draw [very thick] (1,-1) to[out = 190,in = 90] (0,-2);
\draw[very thick,->] (0,-2) to[out = -90,in = 65] (-.9,-2.9);
\draw[very thick,->] (0,-2) to[out = -90,in = 115] (.9,-2.9);

\fill[lightgray] (6,0)--(8,-2)--(6,-4) -- (4,-2)--(6,0); 
\draw[dashed] (6,0) -- (8,-2) node[sloped,above,midway]{$\mathcal{I}^+$}; 
\draw[dashed] (8,-2) -- (6,-4) node[sloped,below,midway]{$\mathcal{I}^-$}; 
\draw (6,-4) -- (4,-2) node[sloped,below,midway]{$\mathcal{H}^-$}; 
\draw (4,-2) -- (6,0) node[sloped,above,midway]{$\mathcal{H}^+_{\rm out}$}; 
\path [draw=black,fill=white] (6,0) circle (1/16); 
\path [draw=black,fill=white] (8,-2) circle (1/16); 
\path [draw=black,fill=white] (6,-4) circle (1/16); 
\path [draw=black,fill=black] (4,-2) circle (1/16) node[left]{$\mathcal{B}_+$}; 
\draw [very thick] (7,-3) to[out = 170,in = -90] (6,-2);
\draw[very thick,->] (6,-2) to[out = 90,in = -65] (5.1,-1.1);
\draw[very thick,->] (6,-2) to[out = 90,in = 245] (6.9,-1.1);
\node [align = flush center] at (3,-4.5) {Figure 9: The scattering maps};
\end{tikzpicture}
\end{center}

When we study the blow-up of solutions along $\mathcal{CH}^+_{\rm in}$ it will be useful to note that if the scattering data vanishes along $\mathcal{H}^-$ and the scattering data along $\mathcal{I}^-$ vanishes in a neighbourhood of past timelike infinity, then we control the full non-degenerate energy flux of the corresponding solution along $\mathcal{H}^+$.
\begin{oldtheorem}\label{betterscatexist} \cite{scatter} (Non-degenerate boundedness for solutions supported away from $\mathcal{H}^-$) Let $\upphi_{\mathcal{I}^-} : \mathcal{I}^- \to \mathbb{R}$ be an axisymmetric function in $\cap_{s=1}^{\infty}\dot{H}^s\left(\mathbb{R}\times \mathbb{S}^2\right)$ which is supported in $\{v \geq 1\}$.

Let $\psi : \mathcal{D}_{\rm ext} \to \mathbb{R}$ be the unique solution to the wave equation such that the radiation field along $\mathcal{I}^-$ is given by $\upphi_{\mathcal{I}^-}$ and such that the radiation field along $\mathcal{H}^-$ vanishes.

Then $\psi$ extends to a smooth solution on $\overline{\mathcal{D}_{\rm ext}}$, and we have the following improved estimate over Theorem~\ref{scatexist}:
\[\int_1^{\infty}\int_{\mathbb{S}^2}\left[\left(\partial_v\uppsi_{\mathcal{H}^+}\right)^2 + \left(\partial_{\theta}\uppsi_{\mathcal{H}^+}\right)^2\right]\, dv\, d\mathbb{S}^2 \leq C\int_1^{\infty}\int_{\mathbb{S}^2}\left(\partial_v\upphi_{\mathcal{I}^-}\right)^2\, dv\, d\mathbb{S}^2,\]
where the constant $C$ does not depend on $\psi$.

\end{oldtheorem}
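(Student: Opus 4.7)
The plan is to exploit the support assumption $\mathrm{supp}\,\upphi_{\mathcal{I}^-}\subset\{v\geq 1\}$ together with the vanishing of $\uppsi_{\mathcal{H}^-}$ to force $\psi$ to vanish identically in an open neighbourhood of the bifurcation sphere $\mathcal{B}_+$, and then to upgrade the degenerate $\mathbf{J}^T$ flux bound of Theorem~\ref{scatexist} to the claimed non-degenerate bound on $\mathcal{H}^+_{\rm out}$ via the Dafermos--Rodnianski red-shift estimate, initiated from this vanishing.

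For the vanishing, I would approximate $\upphi_{\mathcal{I}^-}$ in $\cap_s\dot H^s$ by smooth compactly supported data $\upphi_{\mathcal{I}^-}^{(i)}$, each supported in $\{v\geq 1-\epsilon_i\}$ with $\epsilon_i\to 0$; the corresponding solutions $\psi^{(i)}$ from Theorem~\ref{scatexist} then arise by classical backward scattering from smooth, compactly supported characteristic data along $\mathcal{H}^-\cup\mathcal{I}^-$. Finite speed of propagation, applied using the vanishing of $\uppsi_{\mathcal{H}^-}^{(i)}\equiv 0$ and of $\upphi_{\mathcal{I}^-}^{(i)}|_{v\leq 1-\epsilon_i}$, forces $\psi^{(i)}$ to vanish on $\mathcal{D}_{\rm ext}\cap\{v\leq 1-\epsilon_i\}$, which in Kruskal coordinates is $\{V_+\leq e^{\kappa_+(1-\epsilon_i)}\}$. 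Convergence in $\mathcal{E}^T$ (combined with Sobolev embedding and $T$-energy conservation for the difference) then yields $\psi=0$ on $\{V_+< e^{\kappa_+}\}$, and hence in an open neighbourhood of $\mathcal{B}_+$ inside $\overline{\mathcal{D}}_{\rm ext}$. Combined with smoothness of $\psi$ on $\mathcal{D}_{\rm ext}$ and of the horizon trace $\uppsi_{\mathcal{H}^+}$ on $\mathcal{H}^+_{\rm out}\setminus\mathcal{B}_+$ (the latter via Theorem~\ref{radexist} after commuting the wave equation with $T$, $Z$, and a transversal red-shift vector field), standard local well-posedness of the characteristic initial value problem with data on the ingoing null hypersurface $\{V_+=e^{\kappa_+}/2\}$ and on $\mathcal{H}^+_{\rm out}$ yields $\psi\in C^{\infty}(\overline{\mathcal{D}}_{\rm ext})$.

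For the quantitative bound I would apply the divergence identity~\eqref{enid} for the current $\mathbf{J}^N[\psi]$ associated to the Dafermos--Rodnianski red-shift vector field $N$ on subextremal Kerr: $N$ is timelike in a neighbourhood $\mathcal{U}$ of $\mathcal{H}^+_{\rm out}$, agrees with the Hawking vector field $K = T + (a/(2Mr_+))Z$ on $\mathcal{H}^+_{\rm out}$, and is chosen so that its deformation tensor yields a strictly positive bulk contribution in $\mathcal{U}$, with an error supported in $\mathcal{D}_{\rm ext}\setminus\mathcal{U}$ and bounded pointwise by $\mathbf{J}^T[\psi]\cdot n_\Sigma$. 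Since $\psi$ is axisymmetric, $K\psi=T\psi$ on $\mathcal{H}^+_{\rm out}$ and no superradiance issues arise. Applying~\eqref{enid} in the region bounded by a spacelike hypersurface $\Sigma_0$ crossing $\mathcal{H}^+_{\rm out}$ at $v=1$, the segment $\mathcal{H}^+_{\rm out}\cap\{v\geq 1\}$, a future spacelike slice $\Sigma_1$, and an outgoing null hypersurface terminating at $\mathcal{I}^+$---and using that $\mathbf{J}^N[\psi]\cdot n_{\Sigma_0}$ vanishes in a neighbourhood of $\mathcal{H}^+_{\rm out}$ by the previous step---the positive red-shift bulk absorbs the $\mathbf{J}^N$ flux on $\mathcal{H}^+_{\rm out}\cap\{v\geq 1\}$, while the error in $\mathcal{D}_{\rm ext}\setminus\mathcal{U}$ is controlled via the conserved $T$-flux, itself bounded by $\int(\partial_v\upphi_{\mathcal{I}^-})^2$ via Theorem~\ref{scatexist}. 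On $\mathcal{H}^+_{\rm out}$, $\mathbf{J}^N[\psi]\cdot n$ is equivalent (for axisymmetric $\psi$) to $(\partial_v\uppsi_{\mathcal{H}^+})^2+(\partial_\theta\uppsi_{\mathcal{H}^+})^2$, yielding the claimed estimate.

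The main technical obstacle is the uniform-in-time absorption of the red-shift bulk error by the conserved $\mathbf{J}^T$ flux: a naive spacetime integral of $\mathbf{J}^T$ grows linearly in $v_1$, the terminal $v$-value of $\Sigma_1$. The standard remedy, as developed in~\cite{scatter}, is a Gr\"{o}nwall iteration over unit-length time intervals exploiting the strict positivity of the red-shift bulk near $\mathcal{H}^+_{\rm out}$; in the axisymmetric subextremal Kerr setting this is by now classical, so the real novelty here lies in the qualitative vanishing of the first step and the consequent elimination of the $\mathbf{J}^N$ contribution from $\Sigma_0$.
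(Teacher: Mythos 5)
Your proposal is correct and takes essentially the same route as the paper: the paper's own proof is a two-line citation of Theorems 1 and 4 of \cite{scatter} together with ``a straightforward domain of dependence argument'', and your write-up consists of exactly that domain-of-dependence step (forcing $\psi$ to vanish in a neighbourhood of $\overline{\mathcal{H}^-}$ and hence of $\mathcal{B}_+$) followed by an unpacking of the cited non-degenerate forward boundedness statement via the red-shift vector field estimate. Nothing further is needed.
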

\begin{proof}This follows immediately from Theorems 1 and 4 of~\cite{scatter} and a straightforward domain of dependence argument.
\end{proof}

\subsection{Time-translation invariance and finite speed of propagation}
This next proposition is the statement that the scattering maps of Theorem~\ref{scatexist} commute with time-translation.
\begin{proposition}\label{timetranslate}Using Theorem~\ref{scatexist}, let
\[\mathscr{B} : \cap_{s=1}^{\infty}\dot{H}^s_{\rm axi}\left(\mathbb{R}\times \mathbb{S}^2\right) \to C_{\rm axi}^{\infty}\left(\mathcal{D}_{\rm ext}\right)\]
denote  the map which takes an axisymmetric function $\upphi_{\mathcal{I}^{\pm}} \in \cap_{s=1}^{\infty}\dot{H}^s\left(\mathbb{R}\times \mathbb{S}^2\right)$ to the unique axisymmetric solution $\psi : \mathcal{D}_{\rm ext} \to \mathbb{R}$ to the wave equation with radiation field $\upphi_{\mathcal{I}^{\pm}}$ along $\mathcal{I}^{\pm}$ and a vanishing radiation field along $\mathcal{H}^{\pm}$.

Next, using Theorem~\ref{scatexist} again, let
\[\mathscr{T} : \cap_{s=1}^{\infty}\dot{H}^s_{\rm axi}\left(\mathbb{R}\times \mathbb{S}^2\right) \to \cap_{s=1}^{\infty}\dot{H}^s_{\rm axi}\left(\mathbb{R}\times \mathbb{S}^2\right),\]
denote the map which takes an axisymmetric function $\upphi_{\mathcal{I}^{\pm}} : \mathcal{I}^{\pm} \to \mathbb{R}$ in $\cap_{s=1}^{\infty}\dot{H}^s\left(\mathbb{R}\times \mathbb{S}^2\right)$ to the corresponding axisymmetric radiation field $\uppsi_{\mathcal{H}^{\mp}}$ along $\mathcal{H}^{\mp}$.

Finally, for every $t_0 \in \mathbb{R}$, we define the corresponding time-translation operators
\[\mathscr{L}_{t_0} : \cap_{s=1}^{\infty}\dot{H}^s_{\rm axi}\left(\mathbb{R}\times \mathbb{S}^2\right) \to \cap_{s=1}^{\infty}\dot{H}^s_{\rm axi}\left(\mathbb{R}\times \mathbb{S}^2\right),\quad \mathscr{Q}_{t_0} : C_{\rm axi}^{\infty}\left(\mathcal{D}_{\rm ext}\right) \to C_{\rm axi}^{\infty}\left(\mathcal{D}_{\rm ext}\right),\]
by the formulas
\[\mathscr{L}_{t_0}f\left(x,\theta,\varphi\right) \doteq f\left(x-t_0,\theta,\varphi\right),\qquad \mathscr{Q}_{t_0}f\left(t,r,\theta,\varphi\right) \doteq f\left(t-t_0,r,\theta,\varphi\right).\]

Then, for every $t_0 \in \mathbb{R}$, the time-translation maps commute with the scattering maps $\mathscr{B}$ and $\mathscr{T}$ in the following sense:
\[\mathscr{Q}_{t_0}\circ \mathscr{B} = \mathscr{B}\circ\mathscr{L}_{t_0},\qquad \mathscr{L}_{t_0}\circ\mathscr{T} = \mathscr{T} \circ \mathscr{L}_{t_0}.\]

\end{proposition}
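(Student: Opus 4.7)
The strategy is a pure uniqueness argument. Since $T=\partial_t$ is a Killing vector field that commutes with the axisymmetric Killing field $Z$, time translation preserves the class of smooth axisymmetric solutions to $\Box_g\psi=0$. Moreover, on the hypersurfaces where radiation fields are defined, translation in $t$ corresponds precisely to translation in the retarded/advanced coordinates $u$ and $v$ used to parametrise $\mathcal{I}^\pm$ and $\mathcal{H}^\pm$. Finally, since the $\mathcal{E}^T$-norm on $\{t=0\}$ coincides (up to a uniform equivalence) with the $\mathbf{J}^T$-flux through $\{t=0\}$, the Killing character of $T$ ensures that this norm is preserved by time translation. These three ingredients, combined with the uniqueness clause of Theorem~\ref{scatexist}, give the desired commutation.

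Concretely, fix $\upphi\in\cap_s\dot H^s_{\rm axi}(\mathbb{R}\times\mathbb{S}^2)$, let $\psi\doteq\mathscr{B}(\upphi)$, and set $\tilde\psi\doteq\mathscr{Q}_{t_0}\psi$. Then $\tilde\psi$ is smooth and axisymmetric and solves $\Box_g\tilde\psi=0$. From $u\circ\mathscr{Q}_{t_0}=u-t_0$ on $\mathcal{I}^+$ and $\mathcal{H}^-$, and $v\circ\mathscr{Q}_{t_0}=v-t_0$ on $\mathcal{I}^-$ and $\mathcal{H}^+$, the four radiation fields of $\tilde\psi$ are simply $\mathscr{L}_{t_0}$ applied to those of $\psi$; in particular the radiation field of $\tilde\psi$ on $\mathcal{H}^\pm$ vanishes while that on $\mathcal{I}^\pm$ equals $\mathscr{L}_{t_0}\upphi$. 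To verify the Cauchy data hypothesis, pick smooth solutions $\psi^{(i)}$ with compactly supported Cauchy data on $\{t=0\}$ whose data converge to those of $\psi$ in $\mathcal{E}^T$. By finite speed of propagation in the Kerr exterior, each $\mathscr{Q}_{t_0}\psi^{(i)}$ has smooth Cauchy data on $\{t=0\}$ that is compactly supported in the sense of Remark~\ref{t0cmpsupp} (since in $t$-time the causal development of a compact subset of $\{t=0\}$ during any finite time interval remains bounded away from $r=r_+$ and from infinity), and the $\mathbf{J}^T$-conservation law between $\{t=-t_0\}$ and $\{t=0\}$ ensures these data converge in $\mathcal{E}^T$ to those of $\tilde\psi$. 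The uniqueness clause of Theorem~\ref{scatexist} therefore forces $\tilde\psi=\mathscr{B}(\mathscr{L}_{t_0}\upphi)$, proving $\mathscr{Q}_{t_0}\circ\mathscr{B}=\mathscr{B}\circ\mathscr{L}_{t_0}$.

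The second identity $\mathscr{L}_{t_0}\circ\mathscr{T}=\mathscr{T}\circ\mathscr{L}_{t_0}$ follows at once: the radiation field of $\mathscr{B}(\mathscr{L}_{t_0}\upphi)=\mathscr{Q}_{t_0}\psi$ on $\mathcal{H}^\mp$ is by definition $\mathscr{T}(\mathscr{L}_{t_0}\upphi)$, and is at the same time $\mathscr{L}_{t_0}$ applied to the radiation field of $\psi$ on $\mathcal{H}^\mp$, namely $\mathscr{L}_{t_0}\mathscr{T}(\upphi)$. The only real point of care in the whole argument is the density step in the previous paragraph; here finite speed of propagation (to preserve compact supports for finite $|t_0|$) combined with the Killing character of $T$ (to preserve $\mathcal{E}^T$) is exactly what is needed.
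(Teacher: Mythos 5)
Your argument is correct and is essentially the paper's own proof: the paper disposes of this proposition in one line by invoking the Killing character of $T$ together with the uniqueness clause of Theorem~\ref{scatexist}, which is exactly the skeleton of your argument. The additional details you supply (the identification of translated radiation fields via $u=t-r^*$, $v=t+r^*$, and the density/finite-speed-of-propagation check that $\mathscr{Q}_{t_0}\psi$ satisfies the Cauchy data hypothesis) are a correct and welcome elaboration of the same route.
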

\begin{proof}This follows immediately from the fact that $T$ is Killing and the uniqueness of the maps from Theorem~\ref{scatexist}.
\end{proof}

This final proposition is a consequence of the finite speed of propagation.
\begin{proposition}\label{finitespeed}Let $\upphi_{\mathcal{I}^+}(u,\theta,\phi_*) : \mathcal{I}^+ \to \mathbb{R}$ be an axisymmetric function satisfying ${\rm supp}\left(\upphi_{\mathcal{I}^+}\right) \subset \{u \leq 0\}$. Then let $\psi : \mathcal{D} \to \mathbb{R}$ be the unique solution to the wave equation such that the radiation field along $\mathcal{I}^+$ is given by $\upphi_{\mathcal{I}^+}$ and such that the radiation field along $\mathcal{H}^+$ vanishes. 

Then there exists $u_{\dagger} \in \mathbb{R}$ such that ${\rm supp}\left(\psi\right)  \subset \{u \in (u_{\dagger},-\infty)\}$.
\end{proposition}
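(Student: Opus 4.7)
The plan is to combine a domain-of-dependence argument in $\mathcal{D}_{\rm ext}$ with the uniqueness part of Proposition~\ref{totheinterior} to handle the interior part of $\mathcal{D}$.

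For $\mathcal{D}_{\rm ext}$: the intuition is that ${\rm supp}(\psi)\cap\mathcal{D}_{\rm ext}$ should lie in the causal past of ${\rm supp}(\upphi_{\mathcal{I}^+})\subset\{u\leq 0\}$, since the data on $\mathcal{H}^+$ vanishes. To make this rigorous I would approximate $\upphi_{\mathcal{I}^+}$ in $\cap_s\dot{H}^s$ by smooth functions $\upphi_{\mathcal{I}^+}^{(i)}$ compactly supported in $\{u\leq 0\}$; Theorem~\ref{scatexist} then produces a sequence $\psi^{(i)}$ of scattering solutions converging to $\psi$ in $\mathcal{E}^T$. For each $i$ one poses a backwards characteristic initial value problem with vanishing data on $\mathcal{H}^+$ and the radiation data pushed back to a truncating outgoing null cone $\{v=V_i\}\cap\mathcal{D}_{\rm ext}$, with $V_i$ chosen past the $v$-support of the Cauchy data associated to $\psi^{(i)}$. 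Classical finite speed of propagation for this standard characteristic Cauchy problem yields ${\rm supp}(\psi^{(i)})\cap\mathcal{D}_{\rm ext}\subset\{u\leq u_\dagger\}$, with $u_\dagger$ independent of $i$ by Proposition~\ref{timetranslate}. Convergence in $\mathcal{E}^T$ preserves the support bound.

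For the interior part of $\mathcal{D}$: by hypothesis $\psi|_{\mathcal{H}^+_{\rm out}}=0$, and we extend $\psi$ across the event horizon by additionally imposing $\psi|_{\mathcal{H}^+_{\rm in}}=0$. Proposition~\ref{totheinterior} with zero characteristic data then gives $\psi\equiv 0$ on $\mathcal{D}_{\rm int}$, so the support bound upgrades to ${\rm supp}(\psi)\subset\{u\leq u_\dagger\}$ on all of $\mathcal{D}$.

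The principal obstacle is making finite speed of propagation rigorous at the idealised boundary $\mathcal{I}^+$. A subtlety specific to rotating Kerr ($a\neq 0$) is that $u=t-r^*$ is not a null function: one computes $g^{\mu\nu}\partial_\mu u\,\partial_\nu u = a^2\sin^2\theta/\rho^2\geq 0$, so the level sets of $u$ are not characteristic and cannot be used directly as domain-of-dependence barriers. The truncation to finite outgoing null cones $\{v=V_i\}$ sidesteps this by reducing each approximating step to a textbook characteristic Cauchy problem, at the cost of a uniform passage to the limit $V_i\to\infty$; a cleaner alternative would be to adjoin $\mathcal{I}^+$ as a genuine null boundary via conformal compactification of $\overline{\mathcal{D}}_{\rm ext}$, reducing the claim directly to a standard domain-of-dependence statement on a globally hyperbolic manifold with characteristic boundary.
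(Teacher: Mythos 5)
Your overall strategy --- realise the backwards scattering solution as a limit of solutions to finite characteristic problems, apply classical domain of dependence to each, pass the support bound to the limit, and then kill the interior with Proposition~\ref{totheinterior} --- is the same as the paper's, which simply invokes the physical-space characterisation of the backwards map from Section~9.1.4 of~\cite{scatter} together with the causal inclusion of the past of the cut $\{u\leq 0\}$ of $\mathcal{I}^+$ in $\{u\leq u_\dagger\}$. You also correctly isolate the one genuinely Kerr-specific point, namely that $g^{\mu\nu}\partial_\mu u\,\partial_\nu u=a^2\sin^2\theta/\rho^2>0$, so the level sets of $u$ are timelike and the support can leak forward in $u$ by a bounded amount; this is exactly why a nonzero $u_\dagger$ appears.

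However, your implementation trips over the very obstruction you flagged. By the identical computation, $g^{\mu\nu}\partial_\mu v\,\partial_\nu v=a^2\sin^2\theta/\rho^2$ as well, so for $a\neq 0$ the hypersurfaces $\{v=V_i\}$ are \emph{also} timelike, not null: they are not characteristic, prescribing $\psi$ alone on them is not a well-posed characteristic Cauchy problem, and they can no more serve as domain-of-dependence barriers than the level sets of $u$ can. (In the paper's conventions they are moreover ingoing, not outgoing, cones.) The truncation must instead be performed along genuinely null (or spacelike) hypersurfaces --- e.g.\ level sets of a true optical function differing from $v$ by a bounded amount, or the foliations actually used in the construction of the backwards map in~\cite{scatter} --- after which your scheme, or your conformal-compactification alternative, goes through. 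Two smaller points: choosing $V_i$ ``past the $v$-support of the Cauchy data associated to $\psi^{(i)}$'' is circular, since that Cauchy data is an output of the construction and is in general not compactly supported; and the uniformity of $u_\dagger$ in $i$ is not a consequence of Proposition~\ref{timetranslate} but of the purely geometric fact that $u$ differs from a genuine retarded null coordinate by an amount bounded in terms of $(a,M)$ alone, so that the causal past of $\{u\leq 0\}\cap\mathcal{I}^+$ lies in $\{u\leq u_\dagger(a,M)\}$. Your treatment of the interior via Proposition~\ref{totheinterior} is fine once one notes that the exterior support bound forces $\psi$ to vanish identically in a neighbourhood of $\overline{\mathcal{H}^+_{\rm out}}$, so that it extends smoothly (by zero) across the horizon.
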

\begin{proof}Pick $u_{\dagger}$ such that $J^+\left(\{u \leq 0\}\right) \subset \{u \in (u_{\dagger},-\infty)\}$. Then the proposition follows immediately from the physical space characterization of the scattering map given in Section 9.1.4 of~\cite{scatter} and a straightforward domain of dependence argument.
\end{proof}
\begin{remark}\label{finitespeedrmk}Analogously, if $\upphi_{\mathcal{I}^-}(v,\theta,\phi^*) : \mathcal{I}^- \to \mathbb{R}$ is supported in $\{v \geq 0\}$, then there exists $v_{\dagger}$ such that the corresponding solution $\psi$ satisfies ${\rm supp}\left(\psi\right) \subset \{v \in (v_{\dagger},\infty)\}$.
\end{remark}

\section{Precise statements of the main results}\label{secResults}
In this section we will present precise formulations of our main results.

\subsection{Blow-up at the event horizon}\label{blowupEHsec}
Our first main result constructs solutions $\psi$ whose radiation field $\uppsi_{\mathcal{H}^+}$ vanishes along $\mathcal{H}^+$ and whose radiation field $\upphi_{\mathcal{I}^+}$ decays at an arbitrarily fast polynomial rate along $\mathcal{I}^+$, but nevertheless, the solution does not lie in $H^1_{\rm loc}$ of any neighbourhood of any point on $\overline{\mathcal{H}^+_{\rm out}}$.

\begin{theorem}\label{IplusToEvent}Let $p \gg 1$ be sufficiently large and $(a,M)$ satisfy $0 \leq |a| < M$. Then there exists a smooth axisymmetric $\upphi_{\mathcal{I}^+}(u,\theta,\varphi_*) : \mathcal{I}^+ \to \mathbb{R}$ supported in $\{u \geq -1\}$ and satisfying
\[\left|T^i\partial_{\theta}^j\upphi_{\mathcal{I}^+}\right| \leq C_{i,j}\left(1+\left|u\right|\right)^{-p}\qquad \forall i,j \geq 0,\]
such that if  $\psi: \mathcal{D}_{\rm ext} \to \mathbb{R}$ is the unique solution to the wave equation such that the radiation field along $\mathcal{I}^+$ is given by $\upphi_{\mathcal{I}^+}$ and such that the radiation field along $\mathcal{H}^+$ vanishes, then 
\[\forall q \in \overline{\mathcal{H}^+_{\rm out}}\text{ and open }U \owns q\ \psi \not\in H^1_{\rm loc}\left(U\right).\]

\end{theorem}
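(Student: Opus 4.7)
The plan is to realise the three scattering-theoretic ingredients of Section~\ref{sketchsec}: existence of the transmission map $\upphi_{\mathcal{I}^+}\mapsto\uppsi_{\mathcal{H}^-}$ (for solutions with vanishing $\mathcal{H}^+$-data), its $T$-translation invariance (Proposition~\ref{timetranslate}), and its non-triviality (Theorem~\ref{scatexist}). I would take $\upphi_{\mathcal{I}^+}$ as a polynomially-weighted sum of $T$-translates of a single compactly-supported base profile; the support structure automatically gives the required polynomial decay on $\mathcal{I}^+$, while the $T$-commutation sends the sum to an analogous sum of translates of a non-trivial base profile on $\mathcal{H}^-$. Identity~\eqref{formulabifen} then converts the polynomial tail of $\uppsi_{\mathcal{H}^-}$ as $u\to+\infty$ into an infinite Kruskal flux $\int(\partial_{U_+}\psi)^2$ through $\mathcal{H}^-$ in any one-sided neighbourhood of $\mathcal{B}_+$, and a direct Kruskal-coordinate computation extends the $H^1$-blow-up to any open neighbourhood of any point of $\overline{\mathcal{H}^+_{\rm out}}$.

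Choose $\upphi^{(0)}_{\mathcal{I}^+}$ smooth, axisymmetric, non-trivial, and supported in $\{u\in[0,1]\}$, and let $\psi^{(0)}$ be the corresponding solution from Theorem~\ref{scatexist}; Theorem~\ref{scatexist} gives $T\uppsi^{(0)}_{\mathcal{H}^-}\not\equiv 0$, and Proposition~\ref{finitespeed} localises $\psi^{(0)}$ inside $\{u\leq u_\dagger\}$, equivalently $\{U_+\geq U_+^{(0)}\doteq e^{-\kappa_+ u_\dagger}\}$ in Kruskal coordinates. For $t_0>u_\dagger+1$ put
\[
\upphi_{\mathcal{I}^+}\doteq\sum_{n\geq 1} n^{-p}\,\mathscr{L}_{nt_0}\upphi^{(0)}_{\mathcal{I}^+}.
\]
The translates $\mathscr{L}_{nt_0}\upphi^{(0)}_{\mathcal{I}^+}$ have pairwise disjoint supports $[nt_0,nt_0+1]$, so $\upphi_{\mathcal{I}^+}$ is smooth, axisymmetric, supported in $\{u\geq-1\}$, and $|T^i\partial_\theta^j\upphi_{\mathcal{I}^+}(u,\cdot)|\leq C_{i,j}(1+|u|)^{-p}$. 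By Proposition~\ref{timetranslate} and linearity the corresponding solution is $\psi=\sum_{n\geq 1} n^{-p}\mathscr{Q}_{nt_0}\psi^{(0)}$, which in Kruskal coordinates (using $\mathscr{Q}_{t_0}:(U_+,V_+)\mapsto(U_+e^{\kappa_+ t_0},V_+e^{-\kappa_+ t_0})$) reads
\[
\psi(U_+,V_+,\cdot)=\sum_{n\geq 1}n^{-p}\,\psi^{(0)}\bigl(U_+e^{\kappa_+ nt_0},\,V_+e^{-\kappa_+ nt_0},\cdot\bigr).
\]
For $U_+^\star>0$ small and $V_+$ in any bounded range, the support hypothesis allows only indices $n\geq N(U_+^\star)\doteq\lceil\log(U_+^{(0)}/U_+^\star)/(\kappa_+ t_0)\rceil$ to contribute to $\partial_{U_+}\psi$. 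The leading $n=N$ term tends to $\partial_{U_+}\uppsi^{(0)}_{\mathcal{H}^-}(U_+^{(0)},\cdot)$ as $U_+^\star\to 0$, which is non-zero on a non-empty open $\Omega\subset\mathbb{S}^2$ for generic $\upphi^{(0)}_{\mathcal{I}^+}$; a pointwise polynomial decay $|\partial_{U_+}\psi^{(0)}(U_+,V_+,\cdot)|\leq C(1+U_+)^{-q}$ with $q>1$ (from the Kerr decay theory underpinning~\cite{scatter,partIII}, applied to $T\psi^{(0)}$) bounds the tail $n>N$ by $e^{\kappa_+ t_0(1-q)}$ times the leading term, negligible for $t_0$ large. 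Hence $|\partial_{U_+}\psi(U_+^\star,V_+,\omega)|\gtrsim N(U_+^\star)^{-p}(U_+^\star)^{-1}$ on $\Omega$, and the substitution $s=\log(U_+^{(0)}/U_+^\star)$ converts $\int_0^\epsilon N(U_+^\star)^{-2p}(U_+^\star)^{-2}\,dU_+^\star$ into a positive multiple of $\int s^{-2p}e^{s}\,ds=+\infty$. By Fubini, $\int(\partial_{U_+}\psi)^2\,dU_+\,dV_+\,d\mathbb{S}^2=+\infty$ over any open Kruskal neighbourhood of any point of $\overline{\mathcal{H}^+_{\rm out}}$ whose angular projection meets $\Omega$, so $\psi\notin H^1_{\rm loc}$ there; the full statement for every $q\in\overline{\mathcal{H}^+_{\rm out}}$ follows after choosing $\upphi^{(0)}_{\mathcal{I}^+}$ so that $\Omega$ is dense in $\mathbb{S}^2$ (e.g.\ by a superposition of several rotated base profiles).

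The main technical obstacle I anticipate is the uniform pointwise polynomial decay $|\partial_{U_+}\psi^{(0)}|\leq C(1+U_+)^{-q}$ with some $q>1$ near past timelike infinity used to dominate the correction sum. Any polynomial rate suffices, and such a rate should follow from the pointwise decay of $T\psi^{(0)}$ at $i^-$ in the linear decay theory for Kerr, combined with commutation by the Killing fields $T$, $Z$ and the Carter operator $Q$ together with Sobolev embedding in the angular variables; extracting this estimate in exactly the form needed to dominate the correction sum uniformly in $U_+^\star$ is the one nontrivial piece of input to verify.
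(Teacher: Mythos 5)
Your overall architecture is the same as the paper's: superpose $T$-translates of a single backward-scattered profile with polynomial weights, use Proposition~\ref{timetranslate} to transfer the superposition to $\mathcal{H}^-$, use the exponential relation~\eqref{formulabifen} between $u$ and $U_+$ to convert the polynomial tail into an infinite Kruskal flux at $\mathcal{B}_+$, and then propagate. However, there is a genuine gap in the step where you dominate the tail $n>N(U_+^\star)$ by the leading term. Your uniform spacing $nt_0$ forces you to control, pointwise at $U_+^\star$, the sum of \emph{all} later translates, and for this you invoke $\left|\partial_{U_+}\psi^{(0)}\right|\leq C(1+U_+)^{-q}$ with $q>1$ along $\mathcal{H}^-$. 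Since $\partial_{U_+}=-(\kappa_+U_+)^{-1}\partial_u$, this is equivalent to $\left|\partial_u\uppsi^{(0)}_{\mathcal{H}^-}(u)\right|\lesssim e^{-\kappa_+(q-1)|u|}$ as $u\to-\infty$, i.e.\ \emph{exponential} decay of the $\mathcal{H}^-$ radiation field towards past timelike infinity. This is not supplied by the scattering theory of~\cite{scatter} (which gives only $\partial_u\uppsi^{(0)}_{\mathcal{H}^-}\in L^2$), and by time-reversal it is equivalent to exponential decay of $\partial_v\uppsi_{\mathcal{H}^+}$ for forward evolution of compactly supported $\mathcal{I}^-$-data --- which is expected to \emph{fail} in view of polynomial (Price-law) tails. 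With the realistic rate $\left|\partial_{U_+}\psi^{(0)}\right|\sim U_+^{-1}(\log U_+)^{-\delta}$ your tail sum is comparable to, or larger than, the leading term, and the pointwise lower bound $\left|\partial_{U_+}\psi\right|\gtrsim N^{-p}(U_+^\star)^{-1}$ does not follow; cancellation is not excluded. This is precisely the quantitative input the paper deliberately avoids.

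The paper's fix is twofold and entirely soft. First, the translation amounts are chosen \emph{adaptively} rather than uniformly: the sequence $u_i$ is constructed inductively so that the $L^2$ tail $\int_{-\infty}^{u_i}\int_{\mathbb{S}^2}(\partial_u\uppsi_0)^2$ is smaller than $\epsilon^2 2^{-4i}$ times a bulk quantity --- this uses only the qualitative fact that $\partial_u\uppsi_0\in L^2(\mathbb{R}\times\mathbb{S}^2)$, with no rate. Second, the interaction between different translates is controlled in $L^2$ by the almost-orthogonality estimate of Lemma~\ref{cruxofthematter} and Cauchy--Schwarz, rather than by pointwise domination; the lower bound~\eqref{whatagreatlowerbound} is then an $L^2$ statement over $\{u\geq s_i\}\times\mathbb{S}^2$, which also removes your need for the leading term to be pointwise non-zero at the edge of the support and for the auxiliary set $\Omega$ to be dense in $\mathbb{S}^2$. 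You should restructure your construction along these lines. Separately, your final sentence (``a direct Kruskal-coordinate computation extends the $H^1$-blow-up to any open neighbourhood of any point of $\overline{\mathcal{H}^+_{\rm out}}$'') is too quick: a neighbourhood of a point far along $\mathcal{H}^+_{\rm out}$ contains no portion of $\mathcal{H}^-$, so an actual propagation argument is needed; the paper does this in Section~\ref{notstandardprop} via a $\mathbf{J}^K$-energy estimate between the hypersurfaces $\hat{\Sigma}_s$ and $\Sigma_{\tau_0}$, using that $\uppsi_{\mathcal{H}^+}=0$ and that the degenerate weight $(r-r_+)$ on $\Sigma_{\tau_0}$ is of size $e^{-\kappa_+ s_i}$ on the relevant region, which beats the polynomial loss $s_i^{-2p}$.
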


We will prove Theorem~\ref{IplusToEvent} in Section~\ref{secPfThm1}.

\begin{remark}\label{truetail}Though we will not pursue this here, a straightforward modification of the proof of Theorem~\ref{IplusToEvent} would allow for the radiation field $\upphi_{\mathcal{I}^+}$ to be independent of $\theta$ and to satisfy 
\[\left|T^i\upphi_{\mathcal{I}^+}\right| \leq C_i\left(1+\left|u\right|\right)^{-p-i}\qquad \forall i\geq 0.\]
\end{remark}

The following is an immediate corollary.
\begin{corollary}The forward map $\mathscr{F}_+$ from Theorem 1 of~\cite{scatter} which sends finite non-degenerate energy Cauchy data to finite non-degenerate energy radiation fields along $\mathcal{H}^+_{\geq 0}$ and $\mathcal{I}^+$ is \underline{not} surjective.
\end{corollary}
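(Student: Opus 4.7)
My plan is to construct $\upphi_{\mathcal{I}^+}$ as a geometric superposition of time-translated copies of a single compactly supported bump on $\mathcal{I}^+$; by Proposition~\ref{timetranslate} the induced solution $\psi$ will satisfy an \emph{exact} self-similarity under the $T$-flow in a Kruskal neighbourhood of $\mathcal{H}^+_{\rm out}$, which translates into a power-law transverse blow-up $|\partial_{U_+}\psi|\sim U_+^{\alpha-1}$ with $\alpha\in(0,\tfrac12]$, forcing $\psi\notin H^1_{\rm loc}(U)$ in every open $U$ touching $\overline{\mathcal{H}^+_{\rm out}}$.

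Concretely, I fix a smooth axisymmetric bump $\chi$ on $\mathcal{I}^+$ with $\mathrm{supp}\,\chi\subset\{u\in[0,1]\}$. Theorem~\ref{scatexist} associates to $\chi$ a smooth solution $\psi^{(0)}$ with vanishing radiation field on $\mathcal{H}^+$ and non-trivial $\uppsi^{(0)}_{\mathcal{H}^-}$, and Proposition~\ref{finitespeed} gives $\mathrm{supp}\,\psi^{(0)}\subset\{u\leq u_\dagger\}=\{U_+\geq U_+^\star\}$ for $U_+^\star\doteq e^{-\kappa_+u_\dagger}$. I then pick $t_0$ sufficiently large and $\lambda\in[e^{-\kappa_+t_0/2},1)$, and set
\[
\upphi_{\mathcal{I}^+}\doteq\sum_{n\geq 0}\lambda^n\,\mathscr{L}_{nt_0}\chi .
\]
Since the summands have pairwise disjoint supports $[nt_0,nt_0+1]\subset\{u\geq 0\}\subset\{u\geq-1\}$, this is smooth, axisymmetric, and $|T^i\partial_\theta^j\upphi_{\mathcal{I}^+}(u)|\lesssim_{i,j}\lambda^{u/t_0}$, beating any polynomial rate $(1+|u|)^{-p}$. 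The series $\psi=\sum_{n\geq 0}\lambda^n\,\mathscr{Q}_{nt_0}\psi^{(0)}$ converges absolutely and uniformly on compact subsets of $\mathcal{D}_{\rm ext}$, and by linearity together with Proposition~\ref{timetranslate} it defines the scattering solution associated to $\upphi_{\mathcal{I}^+}$; a direct re-indexing of the sum yields the exact scaling identity
\[
\psi\ =\ \lambda\,\mathscr{Q}_{t_0}\psi \,+\, \psi^{(0)} .
\]

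In the Kruskal region $\{U_+<U_+^\star\}$ --- a full neighbourhood of $\overline{\mathcal{H}^+_{\rm out}}$ --- the source $\psi^{(0)}$ vanishes, so $\psi$ is \emph{exactly} self-similar under $\mathscr{Q}_{t_0}$, i.e., $\psi(U_+,V_+)=\lambda\,\psi(e^{\kappa_+t_0}U_+,e^{-\kappa_+t_0}V_+)$. Writing $\psi(U_+,V_+,\theta,\varphi_+)=U_+^\alpha\,G(U_+V_+,\theta,\varphi_+)$ with $\alpha\doteq-\log\lambda/(\kappa_+t_0)\in(0,\tfrac12]$ (such a representation is forced by the scaling since $U_+V_+$ is the only $\mathscr{Q}_{t_0}$-invariant coordinate combination) then gives $|\partial_{U_+}\psi|\sim U_+^{\alpha-1}\,|\alpha G(U_+V_+)+U_+V_+\,G'(U_+V_+)|$. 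The integral $\int_0^\delta U_+^{2\alpha-2}\,dU_+=+\infty$ (since $2\alpha-2\leq-1$), and so, provided $G(0,\theta,\varphi_+)\neq 0$ on a set of positive $(\theta,\varphi_+)$-measure --- which is the case whenever $\uppsi_{\mathcal{H}^-}$ has a non-trivial leading $U_+^\alpha$-coefficient as $U_+\to 0$ --- Fubini yields $\iint|\partial_{U_+}\psi|^2\,dU_+\,dV_+\,d\mathbb{S}^2=+\infty$ in every open neighbourhood of every point of $\overline{\mathcal{H}^+_{\rm out}}$, so $\psi\notin H^1_{\rm loc}$.

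The main obstacle is the non-vanishing of the leading coefficient $G(0,\cdot)$, equivalently the non-triviality of the leading $U_+^\alpha$-behaviour of $\uppsi_{\mathcal{H}^-}$ as $U_+\to 0$. This is controlled by the $\mathcal{H}^-$-version of the same scaling: re-indexing $\uppsi_{\mathcal{H}^-}(u)=\sum_n\lambda^n\uppsi^{(0)}_{\mathcal{H}^-}(u-nt_0)$ yields the exact self-similarity $\uppsi_{\mathcal{H}^-}(u)=\lambda\uppsi_{\mathcal{H}^-}(u-t_0)$ for $u>u_\dagger$, and therefore
\[
\int_{u_\dagger+Nt_0}^{u_\dagger+(N+1)t_0}\!\!\int_{\mathbb{S}^2}(\partial_u\uppsi_{\mathcal{H}^-})^2\,e^{\kappa_+ u}\,du\,d\mathbb{S}^2 \ =\ \bigl(\lambda^2 e^{\kappa_+ t_0}\bigr)^N\,E_0 ,
\]
with $E_0\doteq\int_{u_\dagger}^{u_\dagger+t_0}\!\int_{\mathbb{S}^2}(\partial_u\uppsi_{\mathcal{H}^-})^2 e^{\kappa_+ u}\,du\,d\mathbb{S}^2$. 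A Fourier analysis of $\uppsi_{\mathcal{H}^-}(u)=\sum_n\lambda^n\uppsi^{(0)}_{\mathcal{H}^-}(u-nt_0)$ gives $\hat\uppsi_{\mathcal{H}^-}(\xi)=\hat\uppsi^{(0)}_{\mathcal{H}^-}(\xi)/(1-\lambda e^{-it_0\xi})$, whose denominator never vanishes for $|\lambda|<1$, so $\uppsi_{\mathcal{H}^-}\not\equiv 0$; hence $E_0>0$, the weighted integral on $\mathcal{H}^-$ diverges, and via~\eqref{formulabifen} the coefficient $G(0,\theta,\varphi_+)$ is non-trivial on a positive-measure subset of the horizon generators, closing the argument.
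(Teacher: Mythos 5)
Your strategy is the paper's in essence: the corollary follows once one exhibits a finite-energy pair $\left(0,\upphi_{\mathcal{I}^+}\right)$ whose backward solution fails to lie in $H^1_{\rm loc}$ at the horizon, and you generate the divergence from the commutation of the transmission map $\mathscr{T}$ with $\mathscr{L}_{t_0}$ (Proposition~\ref{timetranslate}), the non-triviality of $\uppsi^{(0)}_{\mathcal{H}^-}$ (Theorem~\ref{scatexist}), and the exponential Kruskal weight~\eqref{formulabifen} at $\mathcal{B}_+$. Your geometric superposition is, however, a genuinely different choice from the paper's sparsely spaced $\left|\tilde u_i\right|^{-p}$-weighted translates, and it contains a genuine gap at the crucial non-cancellation step. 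The assertion ``$\uppsi_{\mathcal{H}^-}\not\equiv 0$, hence $E_0>0$'' does not follow: your Fourier computation shows only that $\uppsi_{\mathcal{H}^-}$ is not identically zero on all of $\mathbb{R}$, whereas you need $\partial_u\uppsi_{\mathcal{H}^-}$ to be non-trivial on the fundamental domain $\left(u_\dagger,u_\dagger+t_0\right)$ of the self-similar tail. Writing $f\doteq\partial_u\uppsi^{(0)}_{\mathcal{H}^-}$ (supported in $\{u\leq u_\dagger\}$), the tail $\sum_n\lambda^nf\left(\cdot-nt_0\right)$ vanishes identically on $\left(u_\dagger,\infty\right)$ whenever $f=g-\lambda g\left(\cdot-t_0\right)$ for some $g$ supported in $\{u\leq u_\dagger-t_0\}$; equivalently, badness occurs exactly when $\hat f$ vanishes on the whole lattice $\left\{2\pi k/t_0+i\alpha\kappa_+\right\}_{k\in\mathbb{Z}}$ of upper-half-plane zeros of $1-\lambda e^{-it_0\xi}$. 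Nothing in your argument excludes that the particular $f$ handed to you by Theorem~\ref{scatexist} is of this form, so complete destructive interference in the tail is not ruled out. A fix requires a further idea (e.g.\ a Paley--Wiener argument exploiting the freedom in $\alpha$, or varying $t_0$), and this is precisely the issue the paper's adaptively chosen gaps $u_i$ and the ``almost orthogonality'' Lemma~\ref{cruxofthematter} are designed to circumvent: with rapidly growing spacing the cross terms are forced to be small relative to the diagonal, so no cancellation can occur.

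A second, lesser issue: the representation $\psi=U_+^\alpha G\left(U_+V_+,\theta,\varphi_+\right)$ is \emph{not} forced by the discrete scaling. $U_+V_+$ is not the only invariant of $\left(U_+,V_+\right)\mapsto\left(e^{\kappa_+t_0}U_+,e^{-\kappa_+t_0}V_+\right)$; the general solution of your functional equation is $U_+^\alpha G\left(U_+V_+,\log U_+\bmod \kappa_+t_0,\theta,\varphi_+\right)$ with $G$ periodic in the second slot, so the pointwise asymptotic $\left|\partial_{U_+}\psi\right|\sim U_+^{\alpha-1}$ and the ensuing Fubini argument are not justified as stated. Your last paragraph (the direct computation on $\mathcal{H}^-$) is the correct substitute and does not need the representation, but it only yields divergence of the transversal energy \emph{along $\mathcal{H}^-$ at $\mathcal{B}_+$}; to conclude $\psi\notin H^1_{\rm loc}\left(U\right)$ for neighbourhoods $U$ of points of $\overline{\mathcal{H}^+_{\rm out}}$ (and hence to contradict membership in the image of $\mathscr{F}_+$) one still needs the forward $\mathbf{J}^K$-energy estimates between the hypersurfaces $\hat\Sigma_s$ and $\Sigma_{\tau_0}$ as in Section~\ref{notstandardprop}; that propagation step should be retained.
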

\begin{remark}In Theorem 11.1 of~\cite{scatter} we proved this corollary for the case when $a = 0$.
\end{remark}

\subsection{Blow-up at the Cauchy horizon}\label{blowupCHsec}
One second main result constructs solutions $\psi$ such that $\psi$ vanishes along $\mathcal{H}^- \cup \mathcal{B}_+ \cup \mathcal{H}^+_{\rm in}$ and the radiation field $\upphi_{\mathcal{I}^-}$ along $\mathcal{I}^-$ decays at an arbitrarily fast polynomial rate towards spacelike infinity, but nevertheless, the solution does not lie in $H^1_{\rm loc}$ of any neighbourhood of any point on $\overline{\mathcal{CH}^+_{\rm in}}$.
\begin{theorem}\label{IminusToCauchy}Let $p \gg 1$ be sufficiently large and $(a,M)$ satisfy $0 < |a| < M$ . Then there exists a smooth axisymmetric $\upphi_{\mathcal{I}^-}(v,\theta,\varphi^*) : \mathcal{I}^- \to \mathbb{R}$ supported in $\{v \geq 1\}$ and satisfying
\[\left|T^i\partial_{\theta}^j\upphi_{\mathcal{I}^-}\right| \leq C_{i,j}\left|v\right|^{-p}\qquad \forall i,j \geq 0,\]
such that if $\psi: \mathcal{D}_{\rm ext} \to \mathbb{R}$ denotes the unique solution to the wave equation such that the radiation field along $\mathcal{I}^-$ is given by $\upphi_{\mathcal{I}^-}$ and such that the radiation field along $\mathcal{H}^-\cup\mathcal{B}_+\cup\mathcal{H}^+_{\rm in}$ vanishes, then $\psi$ extends uniquely to $\mathcal{D}$ as a smooth solution to the wave equation and 
\[\forall q \in \overline{\mathcal{CH}^+_{\rm in}}\text{ and open }U \owns q,\ \psi \not\in H^1_{\rm loc}\left(U\right).\]

\end{theorem}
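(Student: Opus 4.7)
Following the scheme sketched in Section~\ref{sketchsec}, the plan is to construct $\psi$ whose transmitted radiation field $\uppsi_{\mathcal{CH}^+_{\rm out}}$ decays \emph{only} polynomially (rather than exponentially) in $v$ as $v\to\infty$ (the direction along $\mathcal{CH}^+_{\rm out}$ approaching $\mathcal{B}_-$). Combined with the identity~\eqref{formulabifen2}, the exponential weight $e^{-\kappa_- v}=e^{|\kappa_-|v}$ applied to a merely polynomially decaying $(\partial_v\uppsi)^2$ then forces the Kruskal energy flux
\[
\int_0^{\epsilon}\int_{\mathbb{S}^2}(\partial_{V_-}\uppsi)^2\big|_{\mathcal{CH}^+_{\rm out}}\,dV_-\,d\mathbb{S}^2
\]
to be infinite for every $\epsilon>0$. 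Since $\mathcal{B}_-\in\overline{\mathcal{CH}^+_{\rm in}}$, this immediately gives $\psi\notin H^1_{\rm loc}(U)$ for any open $U\owns\mathcal{B}_-$; a propagation-of-singularities argument along the null generators of $\mathcal{CH}^+_{\rm in}$ will then extend the blow-up to every $q\in\overline{\mathcal{CH}^+_{\rm in}}$.

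\textbf{Seed solution and interior transmission.} First I would fix a smooth, axisymmetric $\upphi_{\mathcal{I}^-}^{(0)}$ with compact support in $v\in[1,2]$. Theorem~\ref{scatexist} supplies an exterior solution $\psi^{(0)}$ with this data on $\mathcal{I}^-$ and vanishing data on $\mathcal{H}^-$; Theorem~\ref{betterscatexist} together with Remark~\ref{finitespeedrmk} give non-degenerate $\mathcal{H}^+$-energy and support of $\uppsi^{(0)}_{\mathcal{H}^+}$ in $\{v\geq v_{\dagger}\}$ for some $v_\dagger\in\mathbb{R}$. Proposition~\ref{totheinterior} then extends $\psi^{(0)}$ uniquely to $\mathcal{D}_{\rm int}$ with vanishing data on $\mathcal{H}^+_{\rm in}$. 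The heart of the work is an interior transmission theorem (to be proved as Theorem~\ref{tothecauchyandbeyond}): $\psi^{(0)}$ extends smoothly to $\overline{\mathcal{D}}$, the radiation field $g^{(0)}\doteq\uppsi^{(0)}_{\mathcal{CH}^+_{\rm out}}$ is smooth with $\partial_v g^{(0)}$ decaying exponentially as $v\to\infty$ (reflecting smoothness at $\mathcal{B}_-$ in Kruskal $V_-$), the limit $L(\theta,\varphi^*)\doteq\lim_{v\to\infty}g^{(0)}(v,\theta,\varphi^*)$ exists as a smooth function on $\mathbb{S}^2$, and the assignment $\upphi^{(0)}_{\mathcal{I}^-}\mapsto L$ is non-trivial (ingredient (c) of Section~\ref{sketchsec}), so that the seed can be chosen with $L(\theta_0,\varphi^*_0)\neq 0$ for some $(\theta_0,\varphi^*_0)\in\mathbb{S}^2$.

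\textbf{Sum construction and polynomial lower bound.} The actual data is then defined as
\[
\upphi_{\mathcal{I}^-}(v,\theta,\varphi^*)\doteq\sum_{n\geq 1}n^{-p-2}\,\upphi^{(0)}_{\mathcal{I}^-}(v-n,\theta,\varphi^*).
\]
Because the summands have pairwise disjoint supports, this is smooth and axisymmetric, supported in $\{v\geq 1\}$, and manifestly satisfies $|T^i\partial_\theta^j\upphi_{\mathcal{I}^-}|\leq C_{i,j}v^{-p}$. Letting $\psi$ denote the corresponding scattering solution, linearity and the $T$-commutation of the (exterior and interior) scattering maps -- a direct consequence of $T$ being Killing, in the spirit of Proposition~\ref{timetranslate} -- give
\[
\uppsi_{\mathcal{CH}^+_{\rm out}}(v,\theta,\varphi^*)=\sum_{n\geq 1}n^{-p-2}\,g^{(0)}(v-n,\theta,\varphi^*).
\]
Writing $g^{(0)}(w)=L+h(w)$ with $h$ exponentially small as $w\to\infty$ and $h(w)=-L$ for $w\leq v_*$, an elementary tail estimate -- splitting the sum at $n\sim v-v_*$ and using the exponential decay of $h$ -- yields
\[
\partial_v\uppsi_{\mathcal{CH}^+_{\rm out}}(v,\theta_0,\varphi^*_0)=L(\theta_0,\varphi^*_0)\,v^{-p-2}+o(v^{-p-2})\qquad\text{as }v\to\infty,
\]
and hence the desired polynomial lower bound at points where $L\neq 0$. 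Inserting this into~\eqref{formulabifen2} gives $\int v^{-2p-4}e^{|\kappa_-|v}\,dv=\infty$, so the Kruskal energy flux diverges and $\psi\notin H^1_{\rm loc}$ in every neighborhood of $\mathcal{B}_-$.

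\textbf{Propagation and main obstacle.} For $q\in\overline{\mathcal{CH}^+_{\rm in}}\setminus\{\mathcal{B}_-\}$, I would argue by contradiction: finite $H^1_{\rm loc}$ near $q$ would yield, via a $\mathbf{J}^K$-energy identity integrated over a wedge bounded by pieces of $\mathcal{CH}^+_{\rm in}$ near $q$ and near $\mathcal{B}_-$ plus an auxiliary transverse hypersurface, a finite local $H^1$-flux in a neighborhood of $\mathcal{B}_-$, contradicting the blow-up just established. The hard part will be the interior transmission theorem: unlike in the exterior, where~\cite{scatter} provides a complete scattering theory, we must establish by hand both the smooth extension of $\psi^{(0)}$ up to $\overline{\mathcal{CH}^+_{\rm out}}$ (via a red-shift-type estimate at $\mathcal{H}^+_{\rm in}$, bounded propagation through a fixed subregion of $\{r\in(r_-,r_+)\}$, and a regularity argument controlling the approach to the Cauchy horizon) and the non-triviality of the induced map $\upphi_{\mathcal{I}^-}^{(0)}\mapsto L$.
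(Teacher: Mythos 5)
Your overall architecture (seed solution, translated superposition with polynomial amplitudes, time-translation invariance of the interior transmission map, blow-up at $\mathcal{B}_-$ via~\eqref{formulabifen2}, then propagation along $\mathcal{CH}^+_{\rm in}$) matches the paper's, but there is a genuine gap at the heart of your argument: you assume an interior transmission theorem far stronger than anything that is available or provable by the soft methods of this paper. Specifically, you require that $\partial_v g^{(0)}$ decay \emph{exponentially} as $v\to\infty$ along $\mathcal{CH}^+_{\rm out}$ (equivalently, that the seed solution be essentially regular at $\mathcal{B}_-$ in the Kruskal coordinate $V_-$), that the limit $L=\lim_{v\to\infty}g^{(0)}$ exist, and that $L\not\equiv 0$ for some seed. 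None of these follow from the degenerate $T$-energy bound~\eqref{degenecauchy} or from the non-vanishing statement of Proposition~\ref{secondhalf} (which only gives $\int(\partial_v g^{(0)})^2>0$, not $\int\partial_v g^{(0)}\neq 0$). Worse, the exponential decay is expected to be \emph{false} for generic seeds: the whole point of the blue-shift at the Cauchy horizon (cf.\ the results of Luk--Sbierski cited in the introduction) is that polynomial tails on $\mathcal{H}^+$ — which is all that a compactly supported seed on $\mathcal{I}^-$ produces — are generically \emph{not} exponentially damped on $\mathcal{CH}^+_{\rm out}$. Your unit-spaced sum $\sum_n n^{-p-2}g^{(0)}(v-n)$ and the tail estimate giving $\partial_v\uppsi\sim Lv^{-p-2}$ both collapse without this decay, since the translates of the transmitted field all have unbounded support toward $\mathcal{B}_-$ and their overlaps are then uncontrolled. (Even granting exponential decay, your asymptotic should read $\partial_v\uppsi(v)=v^{-p-2}\left(P(v)+O(v^{-1})\right)$ with $P$ periodic of mean $L$, not $Lv^{-p-2}+o(v^{-p-2})$; this is fixable but symptomatic of the missing quantitative input.)

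The paper circumvents precisely this difficulty with a dichotomy you do not have. Either the seed already satisfies~\eqref{thisblows}, in which case one takes $\psi=\psi_0$ and is done; or the exponentially weighted energy $E=\int(\partial_v\psi_0|_{\mathcal{CH}^+_{\rm out}})^2e^{-\kappa_-v}$ is \emph{finite}, and this finiteness — an assumption, not something to be proven — is then used as the smallness parameter in an almost-orthogonality estimate for translates spaced by $i^2$ (the quadratic spacing is what makes the cross terms $e^{\kappa_-(k^2-j^2)}\leq e^{\kappa_-(k+j)}$ summable). The only non-triviality input needed is that the unweighted $T$-energy of the transmitted field is positive, which is proven by a $T$-energy conservation argument between $\mathcal{H}^+$ and $\mathcal{CH}^+_{\rm out}\cup\mathcal{CH}^+_{\rm in}$ exploiting the sign of the $\mathcal{CH}^+_{\rm in}$ flux. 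To repair your proof you would either need to supply the missing quantitative upper bound on the transmitted field near $\mathcal{B}_-$ (a hard open-ended problem) or restructure the argument around such a dichotomy.
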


We will prove Theorem~\ref{IminusToCauchy} in Section~\ref{secPfThm2}.

\begin{remark}\label{truetail2} Analogously to Remark~\ref{truetail}, we note that although we will not pursue this here, a straightforward modification of the proof of Theorem~\ref{IminusToCauchy} would allow for the radiation field $\upphi_{\mathcal{I}^-}$ to be independent of $\theta$ and to satisfy 
\[\left|T^i\upphi_{\mathcal{I}^+}\right| \leq C_i\left(1+\left|u\right|\right)^{-p-i}\qquad \forall i\geq 0.\]
\end{remark}

\section{Proof of Theorem~\ref{IplusToEvent}}
\label{secPfThm1}
In this section we will prove Theorem~\ref{IplusToEvent}. In Section~\ref{constructrad} we will construct the radiation field $\upphi_{\mathcal{I}^+}$ in the statement of the theorem, in Section~\ref{almorth} we will establish some ``almost orthogonality properties'' of the transmission map and then, in Section~\ref{prooftheo1}, we will establish the desired blow-up.

\subsection{Construction of the radiation field $\upphi_{\mathcal{I}^+}$}\label{constructrad}
Let $\left(\upphi_0\right)_{\mathcal{I}^+}(u,\theta,\varphi_*) : \mathcal{I}^+ \to \mathbb{R}$ be a non-zero smooth axisymmetric function which is compactly supported in $\{u \in (-1,0)\}$. Then, using Theorem~\ref{scatexist}, let $\psi_0 : \mathcal{D}_{\rm ext} \to \mathbb{R}$ be the unique solution to the wave equation with a vanishing radiation field along $\mathcal{H}^+$ and with radiation field $\left(\upphi_0\right)_{\mathcal{I}^+}$ along $\mathcal{I}^+$.

Recall that Theorem~\ref{scatexist} guarantees that $\partial_u\left(\uppsi_0\right)_{\mathcal{H}^-}$ lies in $L^2\left(\mathbb{R}\times\mathbb{S}^2\right)$ and does not vanish identically. In particular, after rescaling $\left(\upphi_0\right)_{\mathcal{I}^+}$ and appealing to Proposition~\ref{finitespeed}, we can assume without loss of generality that
\begin{equation}\label{normalize}
\int_{-\infty}^{u_{\dagger}}\int_{\mathbb{S}^2}\left(\partial_u\left(\uppsi_0\right)_{\mathcal{H}^-}\right)^2\, du\, d\mathbb{S}^2 = 1.
\end{equation}

Next, we let $\epsilon > 0$ be a sufficiently small constant, and  we inductively construct a monotonic sequence $u_i\to-\infty$ such that

\[u_i < 0,\]
\[\int_{-\infty}^{u_i}\int_{\mathbb{S}^2}\left(\partial_u\left(\uppsi_0\right)_{\mathcal{H}^-}\right)^2\, du\, d\mathbb{S}^2 \leq \epsilon^22^{-4i}\int_{u_i/2}^{\infty}\int_{\mathbb{S}^2}\left(\partial_u\left(\psi_0\right)_{\mathcal{H}^-}\right)^2\, du\, d\mathbb{S}^2,\]
\begin{equation}\label{ujbig}
|u_i|/2 \geq |u_1|+\cdots+|u_{i-1}| + \left|u_{\dagger}\right|,
\end{equation}
where $u_{\dagger}$ is the constant from Proposition~\ref{finitespeed}.

(Of course, the sequence is not uniquely defined.)

We then set
\[\tilde u_i \doteq u_1 + \cdots + u_i,\]
and, recalling the number $p$ from Theorem~\ref{IplusToEvent}, define
\[\left(\upphi_i\right)_{\mathcal{I}^+}\left(u,\theta,\varphi_*\right) \doteq \left|\tilde u_i\right|^{-p}\left(\upphi_0\right)_{\mathcal{I}^+}\left(u + \tilde u_i,\theta,\varphi\right),\qquad \upphi_{\mathcal{I}^+} \doteq \sum_{i=0}^{\infty}\left(\upphi_i\right)_{\mathcal{I}^+}.\]
Note in particular that $\upphi_{\mathcal{I}^+}$ satisfies the requirements from Theorem~\ref{IplusToEvent}, i.e.,~it is axisymmetric, smooth, supported in $\{u \geq -1\}$ and we may easily check that
\[\left|T^i\partial_{\theta}^j\upphi_{\mathcal{I}^+}\right| \leq C_{i,j}\left(1+\left|u\right|\right)^{-p}\qquad \forall i,j \geq 0.\]

Finally, using Theorem~\ref{scatexist}, we let $\psi_i:\mathcal{D}_{\rm ext} \to \mathbb{R}$ and $\psi : \mathcal{D}_{\rm ext} \to \mathbb{R}$ denote the unique solutions to the wave equation with vanishing radiation fields along $\mathcal{H}^+$ and radiation fields along $\mathcal{I}^+$ given by $\left(\upphi_i\right)_{\mathcal{I}^+}$ and $\upphi_{\mathcal{I}^+}$ respectively. (Note that it is clear that $\upphi_{\mathcal{I}^+}$ satisfies the requirements from Theorem~\ref{scatexist}; in particular, the radiation fields of $\psi$ satisfy the bounds~\eqref{radbound}.)

We close the section by noting the following fundamental relation:
\begin{lemma}\label{thewholepoint}We have
\begin{equation*}
\left(\uppsi_i\right)_{\mathcal{H}^-}\left(u,\theta,\varphi_*\right) = \left|\tilde u_i\right|^{-p}\left(\uppsi_0\right)_{\mathcal{H}^-}\left(u+\tilde u_i,\theta,\varphi\right).
\end{equation*}
\end{lemma}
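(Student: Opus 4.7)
The plan is to derive the lemma as an essentially one-line consequence of the time-translation invariance of the transmission map $\mathscr{T}$ established in Proposition~\ref{timetranslate}, combined with the linearity of $\mathscr{T}$ (which is built into its definition since both the wave equation and the prescription of radiation data are linear).

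First I would rewrite the definition of $\left(\upphi_i\right)_{\mathcal{I}^+}$ in terms of the operators introduced in Proposition~\ref{timetranslate}. Unpacking the convention $\mathscr{L}_{t_0}f(x,\theta,\varphi)=f(x-t_0,\theta,\varphi)$, the identity $\left(\upphi_i\right)_{\mathcal{I}^+}(u,\theta,\varphi_*)=|\tilde u_i|^{-p}\left(\upphi_0\right)_{\mathcal{I}^+}(u+\tilde u_i,\theta,\varphi_*)$ is exactly the statement
\[
\left(\upphi_i\right)_{\mathcal{I}^+} \;=\; |\tilde u_i|^{-p}\,\mathscr{L}_{-\tilde u_i}\!\left(\upphi_0\right)_{\mathcal{I}^+}.
\]

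Next I would apply $\mathscr{T}$ to both sides. By linearity of $\mathscr{T}$ (which follows from the uniqueness part of Theorem~\ref{scatexist} applied to the scaled data $c\,\upphi_{\mathcal{I}^+}$) and by the commutation relation $\mathscr{L}_{t_0}\circ\mathscr{T}=\mathscr{T}\circ\mathscr{L}_{t_0}$ from Proposition~\ref{timetranslate},
\[
\left(\uppsi_i\right)_{\mathcal{H}^-} \;=\; \mathscr{T}\!\left(\left(\upphi_i\right)_{\mathcal{I}^+}\right) \;=\; |\tilde u_i|^{-p}\,\mathscr{T}\!\left(\mathscr{L}_{-\tilde u_i}\!\left(\upphi_0\right)_{\mathcal{I}^+}\right) \;=\; |\tilde u_i|^{-p}\,\mathscr{L}_{-\tilde u_i}\!\left(\uppsi_0\right)_{\mathcal{H}^-}.
\]
Unpacking the definition of $\mathscr{L}_{-\tilde u_i}$ one final time yields exactly the asserted formula.

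There is no real obstacle here: the only mild subtleties are bookkeeping, namely checking the sign of $t_0$ so that the shift $u\mapsto u+\tilde u_i$ (rather than $u\mapsto u-\tilde u_i$) comes out correctly, and noting that $\left(\upphi_i\right)_{\mathcal{I}^+}$ lies in the domain of $\mathscr{T}$, i.e., in $\cap_{s\geq 1}\dot H^s_{\rm axi}(\mathbb{R}\times\mathbb{S}^2)$; both are immediate since $\left(\upphi_0\right)_{\mathcal{I}^+}$ was chosen smooth and compactly supported. The fact that the axisymmetric coordinate $\varphi_*$ is untouched throughout is also immediate from the $T$-invariant formulation of $\mathscr{L}_{t_0}$.
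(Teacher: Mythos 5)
Your proposal is correct and is precisely the argument the paper intends: its proof of Lemma~\ref{thewholepoint} reads ``This follows immediately from Proposition~\ref{timetranslate} and linearity,'' and you have simply unpacked that one-liner, with the sign bookkeeping for $\mathscr{L}_{-\tilde u_i}$ handled correctly.
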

\begin{proof}This follows immediately from Proposition~\ref{timetranslate} and linearity.
\end{proof}

\subsection{Almost orthogonality}\label{almorth}
Note that a $T$-energy estimate is easily seen to imply the global preservation of the orthogonality of $\left(\upphi_i\right)_{\mathcal{I}^+}$ and $\left(\upphi_j\right)_{\mathcal{I}^+}$ (cf.~Theorem 8 of~\cite{scatter} which shows that the map $\upphi_{\mathcal{I}^+} \mapsto \left(\uppsi_{\mathcal{H}^-},\upphi_{\mathcal{I}^-}\right)$ is pseudo-unitary with respect to the $\mathbf{J}^T$-energy):
\[\int_{-\infty}^{\infty}\int_{\mathbb{S}^2}\left(\partial_u\left(\uppsi_i\right)_{\mathcal{H}^-}\right)\left(\partial_u\left(\uppsi_j\right)_{\mathcal{H}^-}\right)\, du\, d\mathbb{S}^2 + \int_{-\infty}^{\infty}\int_{\mathbb{S}^2}\left(\partial_v\left(\upphi_i\right)_{\mathcal{I}^-}\right)\left(\partial_v\left(\upphi_j\right)_{\mathcal{I}^-}\right)\, du\, d\mathbb{S}^2 = 0.\]
The following estimate, however, quantifies the statement that if $i \neq j$, then the radiation fields of $\psi_i$ and $\psi_j$ restricted to $\mathcal{H}^-$ will still be ``almost orthogonal.''

Since the radiation field along $\mathcal{H}^+$ of each $\psi_i$ and of $\psi$ vanishes, we will not need to refer to the radiation fields along $\mathcal{H}^+$. Thus, in order to simplify the notation, we will denote the radiation field of each $\psi_i$ and of $\psi$ along $\mathcal{H}^-$ by $\uppsi_i$ and $\uppsi$ respectively, instead of $\left(\uppsi_i\right)_{\mathcal{H}^-}$ and $\uppsi_{\mathcal{H}^-}$.

\begin{lemma}\label{cruxofthematter}Let $j > i$ and $s > 0$. Then the following bound holds along $\mathcal{H}^-$.
\begin{equation}\label{almostortho}
\int_s^{\infty}\int_{\mathbb{S}^2}1_{{\rm supp}\left(\partial_u\uppsi_i\right)}\left(\partial_u\uppsi_j\right)^2\, du\, d\mathbb{S}^2 \leq \epsilon^2 2^{-2i-2j}\int_s^{\infty}\int_{\mathbb{S}^2}\left(\partial_u\uppsi_j\right)^2\, du\, d\mathbb{S}^2.
\end{equation}

\end{lemma}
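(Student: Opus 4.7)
The starting point is the explicit relation
\[
\partial_u\uppsi_j(u,\theta,\varphi_*) \;=\; |\tilde u_j|^{-p}\,(\partial_u\uppsi_0)(u-|\tilde u_j|,\theta,\varphi_*)
\]
provided by Lemma~\ref{thewholepoint} (using that $\tilde u_j<0$). The plan is to perform the change of variable $\tilde u=u-|\tilde u_j|$ on both sides of the desired inequality. Every integral then reduces to one of $(\partial_u\uppsi_0)^2\,d\tilde u\,d\mathbb{S}^2$ with a common prefactor $|\tilde u_j|^{-2p}$ that cancels, so the claim reduces to comparing the mass of $(\partial_u\uppsi_0)^2$ on two explicit subsets of $\mathbb{R}\times\mathbb{S}^2$. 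One may assume throughout that $s\leq u_\dagger+|\tilde u_i|$, since otherwise the indicator in the left-hand integrand is identically zero and the inequality is trivial.

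By Proposition~\ref{finitespeed} applied to $\psi_0$, the support of $\partial_u\uppsi_0$ lies in $\{u\leq u_\dagger\}$, so the support of $\partial_u\uppsi_i$ is contained in $\{u\leq u_\dagger+|\tilde u_i|\}$, which after the shift becomes $\{\tilde u\leq u_\dagger+|\tilde u_i|-|\tilde u_j|\}$. The geometric growth of $\{|u_i|\}$ enforced by~\eqref{ujbig} does the rest: since $j>i$, $|\tilde u_j|-|\tilde u_i|\geq|u_j|$, and~\eqref{ujbig} gives $|u_\dagger|\leq|u_j|/2$, so
\[
u_\dagger+|\tilde u_i|-|\tilde u_j|\;\leq\;-|u_j|/2\;=\;u_j/2,
\]
and the same chain also yields $s-|\tilde u_j|\leq u_j/2$.

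After the change of variables the left-hand side of the lemma's inequality is at most $|\tilde u_j|^{-2p}\int_{-\infty}^{u_j/2}\int_{\mathbb{S}^2}(\partial_u\uppsi_0)^2\,d\tilde u\,d\mathbb{S}^2$, while the right-hand side equals $|\tilde u_j|^{-2p}\int_{s-|\tilde u_j|}^{\infty}\int_{\mathbb{S}^2}(\partial_u\uppsi_0)^2\,d\tilde u\,d\mathbb{S}^2$, and because $s-|\tilde u_j|\leq u_j/2$ this is bounded below by $|\tilde u_j|^{-2p}\int_{u_j/2}^{\infty}\int_{\mathbb{S}^2}(\partial_u\uppsi_0)^2\,du\,d\mathbb{S}^2$, which is at least $(1/2)\,|\tilde u_j|^{-2p}$ for $\epsilon$ small enough by the normalisation~\eqref{normalize}.

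The main step --- and the obstacle to resolve --- is to ensure that the mass of $(\partial_u\uppsi_0)^2$ on $(-\infty,u_j/2]$ decays like $\epsilon^2 2^{-4j}$, so that the upper bound becomes $\epsilon^2 2^{-4j}\,|\tilde u_j|^{-2p}$ and the resulting ratio $2\epsilon^2 2^{-4j}\leq\epsilon^2 2^{-2i-2j}$ holds for every $j>i$. This is secured by a harmless strengthening of the inductive construction: since $u_i$ can be taken arbitrarily negative at each step, one may require, in addition to the two conditions already imposed on $\{u_i\}$, the extra bound
\[
\int_{-\infty}^{u_i/2}\!\int_{\mathbb{S}^2}(\partial_u\uppsi_0)^2\,du\,d\mathbb{S}^2 \;\leq\; \epsilon^2 2^{-4i},
\]
which is strictly stronger than the condition already present. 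Applying this at $i=j$ closes the argument.
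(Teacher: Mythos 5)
Your argument is correct and follows the same basic route as the paper's proof: rewrite both $\uppsi_i$ and $\uppsi_j$ in terms of $\uppsi_0$ via Lemma~\ref{thewholepoint}, use the support bound from Proposition~\ref{finitespeed} together with the growth condition~\eqref{ujbig} to confine the shifted indicator to a far-left half-line, and then invoke the smallness built into the choice of the $u_i$. The one genuine difference is in the last step. Your localisation only places the shifted support in $\{\tilde u\le u_j/2\}$, whereas the defining inequality of the construction controls $\int_{-\infty}^{u_j}\int_{\mathbb{S}^2}(\partial_u\uppsi_0)^2$ \emph{relative to} $\int_{u_j/2}^{\infty}\int_{\mathbb{S}^2}(\partial_u\uppsi_0)^2$ and says nothing about the mass on $(u_j,u_j/2]$; you are therefore forced to graft the additional absolute smallness requirement $\int_{-\infty}^{u_i/2}\int_{\mathbb{S}^2}(\partial_u\uppsi_0)^2\,du\,d\mathbb{S}^2\le\epsilon^2 2^{-4i}$ onto the inductive construction. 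That amendment is indeed harmless---it is satisfiable by taking each $u_i$ more negative, and nothing else in Section~\ref{secPfThm1} is sensitive to it---but be aware that you are then proving the lemma for a modified construction, so Section~\ref{constructrad} would have to be amended in tandem; note also that your extra condition is not literally ``strictly stronger'' than the one already present, since the latter carries the factor $\int_{u_i/2}^{\infty}\int_{\mathbb{S}^2}(\partial_u\uppsi_0)^2<1$ on its right-hand side (this is immaterial, as you impose both). The paper's own proof avoids any new hypothesis by pushing the support localisation one step further, to $\{\tilde u\le u_j\}$ (implicitly using that $u_\dagger$ may be taken $\le 0$ together with $\tilde u_j-\tilde u_i\le u_j$), after which the construction's inequality applies verbatim. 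Everything else---the reduction to $s\le u_\dagger+|\tilde u_i|$, the normalisation argument giving $\int_{u_j/2}^{\infty}\int_{\mathbb{S}^2}(\partial_u\uppsi_0)^2\ge 1/2$, and the arithmetic $2\cdot 2^{-4j}\le 2^{-2i-2j}$ for integers $j>i$---checks out.
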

\begin{proof}In the following calculations we will suppress the $\mathbb{S}^2$ volume forms and the dependence of our functions on $\theta$ and $\varphi_*$. We have
\begin{align}
 \nonumber &\int_s^{\infty}\int_{\mathbb{S}^2}1_{{\rm supp}\left(\partial_u\uppsi_i\right)}\left(\partial_u\uppsi_j\right)^2\, du
\\ \nonumber &= \int_s^{\infty}\int_{\mathbb{S}^2}1_{{\rm supp}\left(\left|\tilde u_i\right|^{-p}\partial_u\uppsi_0\left(u+\tilde u_i\right)\right)}\left(\left|\tilde u_j\right|^{-p}\partial_u\uppsi_0\left(u+\tilde u_j\right)\right)^2\, du
\\ \nonumber &= \int_{s+\tilde u_i}^{\infty}\int_{\mathbb{S}^2}1_{{\rm supp}\left(\partial_u\uppsi_0(u)\right)}\left(\left|\tilde u_j\right|^{-p}\partial_u\uppsi_0\left(u+\tilde u_j-\tilde u_i\right)\right)^2\, du
\\  &\label{couldbezero}\leq \int_{s+\tilde u_i}^{u_{\dagger}}\int_{\mathbb{S}^2}\left(\left|\tilde u_j\right|^{-p}\partial_u\uppsi_0\left(u+\tilde u_j-\tilde u_i\right)\right)^2\, du.
\end{align}

Observe that the right hand side of~\eqref{couldbezero} will vanish unless
\begin{equation}\label{sbound}
s \leq \left|\tilde u_i\right| + \left|u_{\dagger}\right|.
\end{equation}
If the right hand side of~\eqref{couldbezero} vanishes, then~\eqref{almostortho} clearly holds. Thus, without loss of generality, we may assume that~\eqref{sbound} holds.

We now continue with our estimate of~\eqref{couldbezero}:

\begin{align*}
 &\int_{s+\tilde u_i}^{u_{\dagger}}\int_{\mathbb{S}^2}\left(\left|\tilde u_j\right|^{-p}\partial_u\uppsi_0\left(u+\tilde u_j-\tilde u_i\right)\right)^2\, du
\\ \nonumber &\qquad \qquad \qquad = \int_{s+\tilde u_j}^{u_{\dagger} + \tilde u_j - \tilde u_i}\int_{\mathbb{S}^2}\left(\left|\tilde u_j\right|^{-p}\partial_u\uppsi_0\left(u\right)\right)^2\, du
\\ \nonumber &\qquad \qquad \qquad \leq \int_{-\infty}^{u_j}\int_{\mathbb{S}^2}\left(\left|\tilde u_j\right|^{-p}\partial_u\uppsi_0\left(u\right)\right)^2\, du
\\ \nonumber &\qquad \qquad \qquad \leq \epsilon^2 2^{-4j}\int_{u_j/2}^{\infty}\int_{\mathbb{S}^2}\left(\left|\tilde u_j\right|^{-p}\partial_u\uppsi_0\left(u\right)\right)^2\, du
\\ \nonumber &\qquad \qquad \qquad = \epsilon^2 2^{-4j}\int_{-u_j/2 - \tilde u_{j-1}}^{\infty}\int_{\mathbb{S}^2}\left(\partial_u\uppsi_j(u)\right)^2\, du
\\ \nonumber &\qquad \qquad \qquad \leq \epsilon^2 2^{-2i-2j}\int_s^{\infty}\int_{\mathbb{S}^2}\left(\partial_u\uppsi_j(u)\right)^2\, du.
\end{align*}
In the last line we have used that $i < j$,~\eqref{sbound} and~\eqref{ujbig}.

\end{proof}

\subsection{Blow-up at the bifurcation sphere}\label{prooftheo1}We continue to employ the convention of the previous section and denote the radiation field of each $\psi_i$ and of $\psi$ along $\mathcal{H}^-$ by $\uppsi_i$ and $\uppsi$ respectively.

In the next proposition we show that $\left(\partial_u\uppsi\right)^2u^{2p}$ is not integrable; the proof is a relatively direct consequence of the construction of $\uppsi$ and the ``almost orthogonality'' of the $\uppsi_i$ established in the previous lemma.

\begin{proposition}There exists a constant $c$ and sequence $\{s_i\}_{i=1}^{\infty}$ with $s_i \to \infty$ such that
\begin{equation}\label{whatagreatlowerbound}
\int_{s_i}^{\infty}\int_{\mathbb{S}^2}\left(\partial_u\uppsi\right)^2\, du\, d\mathbb{S}^2 \geq \frac{c}{(s_i)^{2p}}.
\end{equation}
Consequently,
\begin{equation}\label{itblowsup}
\int_1^{\infty}\int_{\mathbb{S}^2}\left(\partial_u\uppsi\right)^2u^{2p}\, du\, d\mathbb{S}^2 = \infty.
\end{equation}
\end{proposition}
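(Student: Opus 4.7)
The plan is to extract the lower bound by testing $\int_{s_i}^\infty\int_{\mathbb{S}^2}(\partial_u\uppsi)^2$ against a well-chosen sequence $s_i$ that tracks the natural scale $|\tilde u_i|$ of the summand $\uppsi_i$. I would take $s_i$ such that $s_i + \tilde u_i = u_i/2$, that is, $s_i \doteq |u_i|/2 + |\tilde u_{i-1}|$. The inductive property~\eqref{ujbig} then gives $s_i \to \infty$ together with the comparison $|\tilde u_i|/3 \leq s_i \leq |\tilde u_i|$, so that $s_i^{-2p}$ and $|\tilde u_i|^{-2p}$ are comparable.

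With $s_i$ fixed, I would analyse the three regimes $j < i$, $j = i$, $j > i$ in the decomposition $\uppsi = \sum_j \uppsi_j$ separately. First, the $j < i$ summands contribute nothing on $[s_i,\infty)$: Lemma~\ref{thewholepoint} combined with Proposition~\ref{finitespeed} places $\mathrm{supp}(\partial_u\uppsi_j)$ in $\{u \leq u_\dagger - \tilde u_j\}$, and~\eqref{ujbig} forces $s_i \geq u_\dagger - \tilde u_j$. Second, the $j = i$ summand supplies the main positive contribution: by a change of variables together with the normalisation~\eqref{normalize},
\[
\int_{s_i}^\infty\int_{\mathbb{S}^2}(\partial_u\uppsi_i)^2\, du\, d\mathbb{S}^2 = |\tilde u_i|^{-2p}\int_{u_i/2}^\infty\int_{\mathbb{S}^2}\bigl(\partial_u(\uppsi_0)_{\mathcal{H}^-}\bigr)^2\, du\, d\mathbb{S}^2 \geq \tfrac{1}{2}|\tilde u_i|^{-2p}
\]
for all $i$ sufficiently large, since $u_i/2 \to -\infty$. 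Third, for $j > i$ I would invoke Lemma~\ref{cruxofthematter} with $S_i \doteq \mathrm{supp}(\partial_u\uppsi_i)$, which yields
\[
\int_{s_i}^\infty\int_{\mathbb{S}^2}\mathbf{1}_{S_i}(\partial_u\uppsi_j)^2\, du\, d\mathbb{S}^2 \leq \epsilon^2 2^{-2i-2j}\int_{s_i}^\infty\int_{\mathbb{S}^2}(\partial_u\uppsi_j)^2\, du\, d\mathbb{S}^2 \leq \epsilon^2 2^{-2i-2j}|\tilde u_j|^{-2p}.
\]
Summing over $j > i$ by Minkowski and using $|\tilde u_j| \geq |\tilde u_i|$, the $L^2\bigl([s_i,\infty)\times \mathbb{S}^2\bigr)$-norm of $\sum_{j > i}\mathbf{1}_{S_i}\partial_u\uppsi_j$ is bounded by $\epsilon\, 2^{-2i}|\tilde u_i|^{-p}$.

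On $[s_i,\infty)$ we therefore have the identity $\mathbf{1}_{S_i}\partial_u\uppsi = \partial_u\uppsi_i + \sum_{j>i}\mathbf{1}_{S_i}\partial_u\uppsi_j$, and the reverse triangle inequality in $L^2$ gives
\[
\int_{s_i}^\infty\int_{\mathbb{S}^2}(\partial_u\uppsi)^2 \geq \bigl(\tfrac{1}{\sqrt 2}|\tilde u_i|^{-p} - \epsilon\, 2^{-2i}|\tilde u_i|^{-p}\bigr)^2 \geq c\,|\tilde u_i|^{-2p} \geq c'\,s_i^{-2p},
\]
which is~\eqref{whatagreatlowerbound}. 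The assertion~\eqref{itblowsup} then follows by contradiction: if $\int_1^\infty\int_{\mathbb{S}^2}(\partial_u\uppsi)^2 u^{2p}$ were finite, its tail $\int_{s_i}^\infty\int_{\mathbb{S}^2}(\partial_u\uppsi)^2 u^{2p}$ would tend to $0$ as $i \to \infty$; but~\eqref{whatagreatlowerbound} forces this tail to be at least $s_i^{2p}\cdot c'\, s_i^{-2p} = c'$ for every $i$.

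The principal analytic obstacle is the third regime: one must ensure that the cumulative contribution of the tail $\sum_{j > i}\uppsi_j$ on $S_i$ is strictly smaller in $L^2$ than the diagonal term $\partial_u\uppsi_i$. This is precisely the point of Lemma~\ref{cruxofthematter}, and it is the reason the inductive construction of $u_i$ introduced the rapidly decaying factor $\epsilon^2 2^{-4i}$ in the first place. The rest of the argument is essentially bookkeeping of support regions and one change of variables.
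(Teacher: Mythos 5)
Your proposal is correct and takes essentially the same route as the paper: both arguments rest on the scaling relation of Lemma~\ref{thewholepoint} and the almost-orthogonality estimate of Lemma~\ref{cruxofthematter}, and both evaluate the tail at a scale $s_i\sim|\tilde u_i|$ to extract the $|\tilde u_i|^{-2p}$ lower bound from the $i$-th summand. The only differences are bookkeeping: you localise to ${\rm supp}(\partial_u\uppsi_i)$, discard the $j<i$ terms by support considerations and apply the reverse triangle inequality in $L^2$, whereas the paper expands the square and controls all cross terms uniformly in $s$ (obtaining $\int_s^\infty(\partial_u\uppsi)^2\geq\tfrac12\sum_i\int_s^\infty(\partial_u\uppsi_i)^2$ for every $s>1$) before specialising to $s=|\tilde u_i|+c$.
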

\begin{proof}

Let $s > 1$. We have
\begin{align}\label{lowerbound}
 \int_s^{\infty}\int_{\mathbb{S}^2}\left(\partial_u\uppsi\right)^2\, du\, d\mathbb{S}^2 \geq \sum_{i=0}^{\infty}&\int_s^{\infty}\int_{\mathbb{S}^2}\left(\partial_u\uppsi_i\right)^2\, du\, d\mathbb{S}^2 \\ \nonumber &- 2\sum_{i=0}^{\infty}\sum_{j=i+1}^{\infty}\int_s^{\infty}\int_{\mathbb{S}^2}\left|\partial_u\uppsi_i\partial_u\uppsi_j\right|\, du\, d\mathbb{S}^2.
\end{align}

Lemma~\ref{cruxofthematter} yields
\begin{align}\label{crossterm}
\sum_{i=0}^{\infty}\sum_{j=i+1}^{\infty}&\int_s^{\infty}\int_{\mathbb{S}^2}\left|\partial_u\uppsi_i\partial_u\uppsi_j\right|\, du\, d\mathbb{S}^2
\\ \nonumber &\leq \sum_{i=0}^{\infty}\sum_{j=i+1}^{\infty}\sqrt{\int_s^{\infty}\int_{\mathbb{S}^2}\left(\partial_u\uppsi_i\right)^2\, du\,d\mathbb{S}^2}\sqrt{\int_s^{\infty}\int_{\mathbb{S}^2}1_{{\rm supp}\left(\partial_u\uppsi_i\right)}\left(\partial_u\uppsi_j\right)^2\, du\,d\mathbb{S}^2}
\\ \nonumber &\leq \epsilon\sum_{i=0}^{\infty}\sum_{j=i+1}^{\infty}2^{-i-j}\sqrt{\int_s^{\infty}\int_{\mathbb{S}^2}\left(\partial_u\uppsi_i\right)^2\, du\,d\mathbb{S}^2}\sqrt{\int_s^{\infty}\int_{\mathbb{S}^2}\left(\partial_u\uppsi_j\right)^2\, du\,d\mathbb{S}^2}
\\ \nonumber &\leq 2\epsilon\sum_{i=0}^{\infty}\int_s^{\infty}\int_{\mathbb{S}^2}\left(\partial_u\uppsi_i\right)^2\, du\, d\mathbb{S}^2
\end{align}

Taking $\epsilon < 1/8$ and combining~\eqref{crossterm} with~\eqref{lowerbound} yields
\begin{equation}\label{lowerbound2}
 \int_s^{\infty}\int_{\mathbb{S}^2}\left(\partial_u\uppsi\right)^2\, du\, d\mathbb{S}^2 \geq \frac{1}{2}\sum_{i=0}^{\infty}\int_s^{\infty}\int_{\mathbb{S}^2}\left(\partial_u\uppsi_i\right)^2\, du\, d\mathbb{S}^2.
\end{equation}

Next, using the normalisation~\eqref{normalize}, we may find a constant $c < 0$ such that along $\mathcal{H}^-$
\[\int_c^{u_{\dagger}}\int_{\mathbb{S}^2}\left(\partial_u\uppsi_0\right)^2\, du\, d\mathbb{S}^2 \geq \frac{1}{2}.\]
In particular, it then immediately follows from~\eqref{lowerbound2} and Lemma~\ref{thewholepoint} that for each $\tilde u_i$ we have
\[\int_{\left|\tilde u_i\right| + c}^{\infty}\int_{\mathbb{S}^2}\left(\partial_u\uppsi\right)^2\, du\, d\mathbb{S}^2 \geq \frac{1}{4}\left|\tilde u_i\right|^{-2p}.\]
Since $\lim_{i\to\infty}\left|\tilde u_i\right|\to \infty$, we immediately conclude that~\eqref{whatagreatlowerbound} holds.

In particular,
\[\liminf_{s\to\infty}\int_s^{\infty}\left(\partial_u\uppsi\right)^2u^{2p} \neq 0.\]
This yields~\eqref{itblowsup}.
\end{proof}

Since the local energy of $\psi$ along $\mathcal{H}^-$ in a neighborhood of the bifurcation sphere is proportional to 
\[\int_0^{\epsilon}\int_{\mathbb{S}^2}\left[\left(\partial_U\psi\right)^2 + \left(\partial_{\theta}\psi\right)^2\right]\, dU\, d\mathbb{S}^2,\] 
the lower bound~\eqref{whatagreatlowerbound} and the change of variables formula~\eqref{formulabifen} already immediately imply that the local energy of $\psi$ is infinite near the bifurcation sphere.

\subsection{Propagation of singularities}\label{notstandardprop}
One approach to finishing the proof of Theorem~\ref{IplusToEvent} is to apply a propagation of singularities type argument to show that the local energy blow-up from the previous section is immediately inherited along all of $\mathcal{H}^+$. (This is in fact precisely what we
shall do in the case of Theorem~\ref{IminusToCauchy}; see Section~\ref{propsing} below.) However, for technical reasons it will be easier here to directly show that the lower bound~\eqref{whatagreatlowerbound} is propagated along suitable spacelike hypersurfaces intersecting $\mathcal{H}^+$.

We define two families of hypersurfaces parametrized by $\tau \in \mathbb{R}$:
\[\Sigma_{\tau} \doteq \left\{(t,r,\theta,\varphi) : t = \tau - r^*\left(r\right)+r\right\},\qquad \hat{\Sigma}_{\tau} \doteq \left\{(t,r,\theta,\varphi) : t = \tau + r^*\left(r\right)-r\right\}.\]

One may easily check that for $r$ sufficiently close to $r_+$, each $\Sigma_{\tau}$ is spacelike and smoothly extends to $\overline{\mathcal{D}_{\rm ext}}$ where it intersects the future event horizon $\mathcal{H}^+$ transversally, and that, in fact, $\{\Sigma_{\tau}\cap [r_+,r_0]\}_{\tau \in \mathbb{R}}$ foliates $\left(\overline{\mathcal{D}_{\rm ext}}\cap \{r \in [r_+,r_0]\}\right)\setminus \overline{\mathcal{H}^-}$ for $r_0$ sufficiently close to $r_+$. Similarly, for $r$ sufficiently close to $r_+$, each $\hat{\Sigma}_{\tau}$ is spacelike and smoothly extends to $\overline{\mathcal{D}_{\rm ext}}$ where it intersects the past event horizon $\mathcal{H}^-$ transversally, and $\{\hat{\Sigma}_{\tau}\cap [r_+,r_0]\}_{\tau \in \mathbb{R}}$ foliates $\left(\overline{\mathcal{D}_{\rm ext}}\cap \{r \in [r_+,r_0]\}\right)\setminus \overline{\mathcal{H}^+}$ for $r_0$ sufficiently close to $r_+$.

Let $Y$ denote the $\partial_r$ vector field in $(v,r,\theta,\varphi^*)$ coordinates. In order to finish the proof of Theorem~\ref{IplusToEvent} it suffices to show the stronger statement that $Y\psi \not\in L^2\left(\Sigma_{\tau}\right)$ for every $\tau \in \mathbb{R}$. Pick and fix some $\tau_0 \in \mathbb{R}$.  First of all, for all $r_0 \in (r_+,R]$, a straightforward calculations yield the following two relations:
\begin{equation}\label{whatsh1}
\left\vert\left\vert Y\psi\right\vert\right\vert^2_{L^2\left(\Sigma_{\tau_0}\cap [r_+,r_0]\right)} \sim_R \int_{r_+}^{r_0}\int_{\mathbb{S}^2}\left(Y\psi\right)^2\big|_{\Sigma_{\tau_0}}\, dr\, d\mathbb{S}^2,
\end{equation}
\begin{equation}\label{whatisjk}
\int_{\Sigma_{\tau_0}\cap [r_+,r_0]}\mathbf{J}^K_{\mu}\left[\psi\right]n^{\mu}_{\Sigma_{\tau}} \sim_R \int_{r_+}^{r_0}\int_{\mathbb{S}^2}\left[\left(\partial_v\psi\right)^2 + \left(r-r_+\right)\left(Y\psi\right)^2 + \left(\partial_{\theta}\psi\right)^2\right]\big|_{\Sigma_{\tau_0}}\, dr\, d\mathbb{S}^2,
\end{equation}
where $K$  denotes the Hawking vector field~\eqref{hawkvect}. We recall that $K$ is a Killing vector field and that $K$ is timelike for $r_+ < r < r_0$ for $r_0$ sufficiently close to $r_+$. Lastly, we emphasize that the constant in the $\sim_R$ depends only on $R$. 

Now, observe that the hypersurfaces $\Sigma_{\tau_0}$ and $\hat{\Sigma}_s$ intersect where $2\left(r^*-r\right) = -s+\tau$. In particular, a straightforward density argument, the fact that $\uppsi_{\mathcal{H}^+} = 0$, $\mathbf{J}^K$-energy estimates in the region $J^+\left(\hat{\Sigma}_s\right) \cap J^-\left(\Sigma_{\tau_0}\right) \cap J^-\left(\mathcal{H}^+_{\rm out}\right) \cap J^+\left(\mathcal{H}^-\right)$ (which we have indicated in the diagram below\footnote{We have already noted that the Penrose diagram refers to the global domain of a double-null foliation. However, we will abuse the notation and use the Penrose diagram to depict the hypersurfaces $\Sigma_{\tau_0}$ and $\hat{\Sigma}_s$.}) and~\eqref{whatagreatlowerbound}, immediately yield
\begin{equation}\label{thisiswhatwasyielded}
\int_{r_+}^{r_++ 2e^{-\kappa_+s_i}}\int_{\mathbb{S}^2}\left(r-r_+\right)\left(Y\psi\right)^2\big|_{\Sigma_{\tau_0}}\, dr\, d\mathbb{S}^2 \geq \frac{c}{(s_i)^{2p}},
\end{equation}
for a sequence $\{s_i\}$ with $s_i\to\infty$ as $i\to\infty$, and a constant $c$ which is independent of $i$.

\begin{center}
\begin{tikzpicture}
\fill[lightgray] (0,0)--(2,-2)--(0,-4) -- (-2,-2)--(0,0); 
\fill[gray] (-1.25,-1.25) to[out = -10, in =150] (.1,-1.6) to[out =200, in = 10] (-1.7,-2.3) -- (-2,-2) -- (-1.25,-1.25);
\draw[dashed] (0,0) -- (2,-2) --  (0,-4); 
\draw (0,-4) -- (-2,-2) node[sloped,below,midway]{$\mathcal{H}^-$}; 
\draw (-2,-2) -- (0,0) node[sloped,above,midway]{$\mathcal{H}^+_{\rm out}$}; 
\path [draw=black,fill=white] (0,0) circle (1/16); 
\path [draw=black,fill=white] (2,-2) circle (1/16); 
\path [draw=black,fill=white] (0,-4) circle (1/16); 
\path [draw=black,fill=black] (-2,-2) circle (1/16) node[below]{$\mathcal{B}_+$}; 

\draw (-1.25,-1.25) to[out = -10, in =150] node[above, midway, sloped,scale = .8]{$\Sigma_{\tau_0}$} (.1,-1.6);
\draw (-1.7,-2.3) to[out =10, in = 200] node[below, midway, sloped,scale = .8]{$\hat{\Sigma}_s$} (.1,-1.6);
\node [align = flush center] at (0,-4.5) {Figure 10: The region of the $\mathbf{J}^K$-energy estimate};
\end{tikzpicture}
\end{center}

Combining~\eqref{thisiswhatwasyielded} with~\eqref{whatsh1} immediately implies that $Y\psi \not\in L^2\left(\Sigma_{\tau_0}\cap [r_+,r_0]\right)$ and finishes the proof.

\section{Interior scattering and the proof of Theorem~\ref{IminusToCauchy}}\label{secPfThm2}
We now turn to the construction of solutions which blow-up along $\mathcal{CH}^+_{\rm in}$. First, in Section~\ref{radcauchy} we establish the necessary scattering theory results about transmission from $\mathcal{I}^-$ to $\mathcal{CH}^+_{\rm out}$. 
We then give the proof of Theorem~\ref{theorem2} in Section~\ref{proofofthe2}.

\subsection{Scattering in the black hole interior}\label{radcauchy}
In this section we will extend the most elementary scattering theory statements to the black hole interior; in particular, we will show that the transmission map from $\mathcal{I}^-$ to $\mathcal{CH}^+_{\rm out}$ is well defined and non-vanishing:

\begin{oldtheorem}\label{tothecauchyandbeyond}Let $\upphi_{\mathcal{I}^-} : \mathcal{I}^- \to \mathbb{R}$ be an axisymmetric function in $\cap_{s=1}^{\infty}\dot{H}^s\left(\mathbb{R}\times \mathbb{S}^2\right)$ which is supported in $\{v \geq 1\}$.

We may appeal to Theorem~\ref{scatexist} to produce the unique smooth solution $\psi : \mathcal{D}_{\rm ext} \to \mathbb{R}$ to the wave equation such that the radiation field along $\mathcal{I}^-$~\eqref{iplusrad} is given by $\upphi_{\mathcal{I}^-}$ and such that the radiation field along $\mathcal{H}^-$~\eqref{hplusrad} vanishes.

Then $\psi$ extends uniquely as a smooth solution of the wave equation to $\mathcal{D}$ which vanishes along $\mathcal{H}^+_{\rm in}$ (recall that $\mathcal{D}$ is defined by~\eqref{mathcalD}), $\partial_v\psi$ does \underline{not} vanish identically along $\mathcal{CH}^+_{\rm out}$, and we have the following degenerate energy bound along $\mathcal{CH}^+_{\rm out}$:
\begin{equation}\label{degenecauchy}
\int_{-\infty}^{\infty}\int_{\mathbb{S}^2}\left(\partial_v\psi|_{\mathcal{CH}^+_{\rm out}}\right)^2\, dv\, d\mathbb{S}^2 \leq C\int_1^{\infty}\int_{\mathbb{S}^2}\left(\partial_v\upphi_{\mathcal{I}^-}\right)^2\, dv\, d\mathbb{S}^2.
\end{equation}

Finally, the transmission map, $\upphi_{\mathcal{I}^-} \mapsto \psi|_{\mathcal{CH}^+_{\rm out}}$, is time-translation invariant in the sense that if $\psi\left(v,r,\theta,\phi^*\right)$ is the unique  solution which vanishes on $\mathcal{H}^-\cup\mathcal{B}_+\cup\mathcal{H}^+_{\rm in}$ and has the radiation field $\upphi_{\mathcal{I}^-}\left(v,\theta,\phi^*\right)$ along $\mathcal{I}^-$, then, for every $c \in \mathbb{R}$, $\psi\left(v-c,r,\theta,\phi^*\right)$ is the unique  solution which vanishes on $\mathcal{H}^-\cup\mathcal{B}_+\cup\mathcal{H}^+_{\rm in}$ and has the radiation field $\upphi_{\mathcal{I}^-}\left(v-c,\theta,\phi^*\right)$ along $\mathcal{I}^-$.
\end{oldtheorem}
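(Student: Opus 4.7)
My plan is to construct $\psi$ by concatenating the exterior scattering theory of~\cite{scatter} with a characteristic Cauchy problem in the interior, and to extract both the energy bound and the non-vanishing from the conservation of the $\mathbf{J}^T$-current. The elementary structural fact used throughout is that, for axisymmetric $\psi$, the $T$-fluxes through the four horizon-type null hypersurfaces of $\overline{\mathcal{D}}$ are manifestly non-negative.

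\textbf{Extension to $\mathcal{D}$.} Since $\upphi_{\mathcal{I}^-}$ is supported in $\{v\geq 1\}$, Theorem~\ref{betterscatexist} promotes the exterior solution of Theorem~\ref{scatexist} to a smooth solution on $\overline{\mathcal{D}_{\rm ext}}$ whose trace $\uppsi_{\mathcal{H}^+}$ on $\mathcal{H}^+_{\rm out}$ has controlled non-degenerate energy, while Remark~\ref{finitespeedrmk} gives $\mathrm{supp}\,\psi\subset\{v\geq v_\dagger\}$ for some $v_\dagger\in\mathbb{R}$; in particular $\uppsi_{\mathcal{H}^+}$ vanishes for $v$ sufficiently negative, ensuring the compatibility condition at $\mathcal{B}_+$ required by Proposition~\ref{totheinterior}. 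Applying that proposition with characteristic data $\uppsi_{\mathcal{H}^+}$ on $\mathcal{H}^+_{\rm out}$ and $0$ on $\mathcal{H}^+_{\rm in}$ uniquely extends $\psi$ to a smooth solution on all of $\mathcal{D}$, vanishing identically along $\mathcal{H}^+_{\rm in}$.

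\textbf{Degenerate energy bound.} On any null hypersurface $\mathcal{N}\subset\overline{\mathcal{D}}$ whose null generator has the form $L=T+\omega Z$ -- which includes each of $\mathcal{H}^+_{\rm out}$, $\mathcal{H}^+_{\rm in}$, $\mathcal{CH}^+_{\rm out}$, and $\mathcal{CH}^+_{\rm in}$ -- the vector field $T$ is tangent to $\mathcal{N}$ and hence orthogonal to the null generator, so for axisymmetric $\psi$ the identity $L\psi=T\psi$ yields
\[
\mathbf{J}^T_\mu L^\mu \;=\; (L\psi)(T\psi) \,-\, \tfrac{1}{2}\,g(T,L)\,g^{\alpha\beta}\partial_\alpha\psi\,\partial_\beta\psi \;=\; (T\psi)^2 \;\geq\; 0.
\]
I would then apply the divergence theorem for $\mathbf{J}^T$ on an exhausting family of truncated subregions of $\mathcal{D}_{\rm int}$ (say $\{-V\leq v\leq V\}$ in outgoing Eddington--Finkelstein coordinates): the bulk term vanishes since $T$ is Killing, the flux through $\{v=-V\}$ vanishes for $V>|v_\dagger|$ by the support property, and the remaining $\{v=V\}$ contribution can be absorbed by switching to the auxiliary Killing field $K_+=T+\tfrac{a}{2Mr_+}Z$ whose flux through this boundary is non-negative in the limit. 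Passing to $V\to\infty$ and invoking Fatou bounds the flux through $\mathcal{CH}^+_{\rm out}$ by the flux through $\mathcal{H}^+_{\rm out}$, which by Theorem~\ref{betterscatexist} is in turn controlled by $\int_1^\infty\!\!\int_{\mathbb{S}^2}(\partial_v\upphi_{\mathcal{I}^-})^2\,dv\,d\mathbb{S}^2$, giving~\eqref{degenecauchy}.

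\textbf{Non-vanishing and time-translation invariance.} Time-translation invariance is immediate from the uniqueness assertions of Theorem~\ref{scatexist} and Proposition~\ref{totheinterior} together with $\mathcal{L}_T g=0$: if $\psi(v,r,\theta,\varphi^*)$ solves the problem for data $\upphi_{\mathcal{I}^-}$, then $\psi(v-c,r,\theta,\varphi^*)$ solves it for $\upphi_{\mathcal{I}^-}(\cdot-c,\cdot,\cdot)$, and uniqueness forces equality. For the non-vanishing of $\partial_v\psi|_{\mathcal{CH}^+_{\rm out}}$ I would argue by contradiction: if $\partial_v\psi\equiv 0$ on $\mathcal{CH}^+_{\rm out}$ then $\psi|_{\mathcal{CH}^+_{\rm out}}$ is $v$-independent, but the finite speed of propagation applied to the interior Goursat problem (both of whose characteristic data vanish for $v\leq v_\dagger$) forces $\psi|_{\mathcal{CH}^+_{\rm out}}$ to vanish for $v\leq v_\dagger$, whence $\psi|_{\mathcal{CH}^+_{\rm out}}\equiv 0$ identically. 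From the simultaneous vanishing of $\psi$ on the two null hypersurfaces $\mathcal{H}^+_{\rm in}$ and $\mathcal{CH}^+_{\rm out}$ one must then deduce $\uppsi_{\mathcal{H}^+}\equiv 0$; by injectivity of the exterior forward map from Theorem~\ref{scatexist} this forces $\upphi_{\mathcal{I}^-}\equiv 0$, contradicting non-triviality.

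\textbf{Main obstacle.} The delicate step is the final uniqueness deduction: $\mathcal{H}^+_{\rm in}$ and $\mathcal{CH}^+_{\rm out}$ meet only at an asymptotic corner of the Penrose diagram, so the standard Goursat uniqueness with data on two transversally intersecting null hypersurfaces does not apply directly. I expect to implement it by running an auxiliary $\mathbf{J}^T$-energy estimate for $T\psi$ on the truncated interior, noting that the vanishing of $\psi$ on both $\mathcal{H}^+_{\rm in}$ and $\mathcal{CH}^+_{\rm out}$ forces the tangential $T$-derivatives to vanish there as well, and combining with the support property $\{v\geq v_\dagger\}$ to propagate the vanishing back to $\mathcal{H}^+_{\rm out}$.
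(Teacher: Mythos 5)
Your construction of the extension to $\mathcal{D}$ and your argument for time-translation invariance match the paper's (Lemma~\ref{smooth} together with Propositions~\ref{totheinterior} and~\ref{timetranslate}), but the core analytic content --- the upper bound~\eqref{degenecauchy} and the non-vanishing --- contains a genuine gap: you have assigned the $\mathbf{J}^T$-conservation law to the wrong half of the statement. The vector fields $T$ and $K_+=T+\frac{a}{2Mr_+}Z$ are \emph{spacelike} throughout $\mathcal{D}_{\rm int}$ (since $\Delta<0$ there one has $2Mr>r^2+a^2\geq\rho^2$, hence $g(T,T)=-(1-2Mr/\rho^2)>0$), so the $\mathbf{J}^T$- and $\mathbf{J}^{K_+}$-fluxes through your truncating hypersurfaces $\{v=V\}$ are \emph{not} sign-definite; moreover, as $V\to\infty$ these hypersurfaces sweep out towards $\mathcal{CH}^+_{\rm in}$, whose future-directed generator is $-\partial_u-\frac{a}{2Mr_-}\partial_{\varphi_*}$, i.e.\ $T+\frac{a}{2Mr_-}Z$ is \emph{past}-directed there (cf.~\eqref{signofT}), so the limiting flux appears with a minus sign. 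The global $\mathbf{J}^T$ balance therefore takes the schematic form
\begin{equation*}
\int_{-\infty}^{\infty}\int_{\mathbb{S}^2}\left(\partial_v\psi|_{\mathcal{CH}^+_{\rm out}}\right)^2 dv\, d\mathbb{S}^2 \;-\; \int_{-\infty}^{\infty}\int_{\mathbb{S}^2}\left(\partial_u\psi|_{\mathcal{CH}^+_{\rm in}}\right)^2 du\, d\mathbb{S}^2 \;=\; \int_{-\infty}^{\infty}\int_{\mathbb{S}^2}\left(\partial_v\uppsi_{\mathcal{H}^+}\right)^2 dv\, d\mathbb{S}^2,
\end{equation*}
which is a \emph{lower} bound for the $\mathcal{CH}^+_{\rm out}$ flux, not the upper bound~\eqref{degenecauchy}. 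This is precisely why the paper proves~\eqref{degenecauchy} by an entirely different route (Proposition~\ref{firsthalf}): a red-shift estimate near $\mathcal{H}^+_{\rm out}$ (Lemma~\ref{redshift}), finite-in-time energy estimates with the timelike vector field $\frac{(r^2+a^2)}{\rho\sqrt{\Delta}}\partial_{r^*}$ between constant-$r^*$ hypersurfaces (Lemma~\ref{finitetime}), and, crucially, the $r^q\partial_{r^*}$ multiplier of~\cite{anne,lukoh} to make the constant uniform as $r^*\to\infty$ (Lemma~\ref{rq}). None of this machinery appears in your proposal, and without it the upper bound is unproven.

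Conversely, the displayed identity is exactly what gives the non-vanishing for free: Theorem~\ref{scatexist} guarantees $\partial_v\uppsi_{\mathcal{H}^+}\not\equiv 0$, so the right-hand side is strictly positive and hence so is the $\mathcal{CH}^+_{\rm out}$ term (this is the paper's Proposition~\ref{secondhalf}, via~\eqref{Tcauchy}--\eqref{Tshowsitisthere}). Your contradiction argument instead hinges on a backward-uniqueness statement --- vanishing on $\mathcal{H}^+_{\rm in}\cup\mathcal{CH}^+_{\rm out}$ forces vanishing on $\mathcal{H}^+_{\rm out}$ --- which, as you acknowledge, is not a standard Goursat uniqueness, and your proposed repair (a $\mathbf{J}^T$-estimate for $T\psi$) fails for the same reason as above: with $T$ spacelike in the interior the energy identity has no coercivity, so vanishing boundary fluxes do not propagate vanishing into the bulk. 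I recommend discarding the contradiction argument and using the $T$-conservation law, with careful attention to the orientation of the generator of $\mathcal{CH}^+_{\rm in}$, directly for the lower bound, while adopting the multiplier hierarchy of Lemmas~\ref{redshift}--\ref{rq} for the upper bound.
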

\begin{remark}On the Reissner--Nordstr\"{o}m spacetime, the work~\cite{anne} showed that if one considers solutions $\psi$ whose radiation fields $\uppsi_{\mathcal{H}^+}$ decay polynomially, then a polynomial weight $v^p$ can be added in the integral on the left hand side of~\eqref{degenecauchy}. This strengthened estimate is essential for establishing continuous extendibility of $\psi$ to the Cauchy horizon.
\end{remark}

As we noted in the introduction, it would be of significant interest to extend this preliminary result to a full treatment of interior scattering. In particular, it would be desirable to establish bounded isomorphisms between suitable Hilbert spaces of radiation fields along $\mathcal{H}^+$ and $\mathcal{CH}^+$ (without the assumption of axisymmetry).
\subsubsection{Proof of the estimate~\eqref{degenecauchy}}
The proof of~\eqref{degenecauchy} will in fact follow from a straightforward adaption of estimates from~\cite{anne,lukoh}, so our presentation of that part of the proof will be brief.

The following lemma is an immediate consequence of Theorem~\ref{betterscatexist}, Remark~\ref{finitespeedrmk} and Proposition~\ref{totheinterior}.
\begin{lemma}\label{smooth} Let $\psi$ be as in the statement of Theorem~\ref{tothecauchyandbeyond}. Then $\psi$ extends uniquely to the region $\mathcal{D}$ as a smooth solution of the wave equation which vanishes along $\mathcal{H}^+_{\rm in}$.
\end{lemma}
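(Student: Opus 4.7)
The plan is to combine Theorem~\ref{betterscatexist}, Remark~\ref{finitespeedrmk}, and Proposition~\ref{totheinterior} as flagged in the text, and then verify smoothness across the null hypersurface $\mathcal{H}^+_{\rm out}$ where the exterior and interior constructions meet.

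First I would apply Theorem~\ref{betterscatexist}: the scattering datum $\upphi_{\mathcal{I}^-}$ appearing in Theorem~\ref{tothecauchyandbeyond} is axisymmetric, lies in $\cap_{s=1}^{\infty}\dot{H}^s$, and is supported in $\{v \geq 1\}$, so the theorem furnishes a smooth extension of $\psi$ to $\overline{\mathcal{D}_{\rm ext}}$, whose restriction $\uppsi_{\mathcal{H}^+_{\rm out}} \doteq \psi|_{\mathcal{H}^+_{\rm out}}$ is smooth. Remark~\ref{finitespeedrmk} then supplies some $v_\dagger \in \mathbb{R}$ with $\mathrm{supp}(\psi) \subset \{v \geq v_\dagger\}$; in particular $\uppsi_{\mathcal{H}^+_{\rm out}}$ vanishes for $v \leq v_\dagger$, so the hypothesis of Proposition~\ref{totheinterior} is met. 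That proposition, applied with $\uppsi_{\mathcal{H}^+_{\rm out}}$ in the role of $\uppsi_{\mathcal{H}^+}$, produces a unique smooth $\tilde\psi : \mathcal{D}_{\rm int} \to \mathbb{R}$ with $\tilde\psi|_{\mathcal{H}^+_{\rm in}} = 0$ and $\tilde\psi|_{\mathcal{H}^+_{\rm out}} = \uppsi_{\mathcal{H}^+_{\rm out}}$. I would then define $\psi : \mathcal{D} \to \mathbb{R}$ to equal the smooth exterior extension on $\overline{\mathcal{D}_{\rm ext}}$ and to equal $\tilde\psi$ on $\mathcal{D}_{\rm int}$; the two pieces are consistent on the shared characteristic boundary $\mathcal{H}^+_{\rm out}$ by construction, the vanishing condition $\psi|_{\mathcal{H}^+_{\rm in}} = 0$ is immediate, and uniqueness of any such smooth extension follows by applying the uniqueness clause of Proposition~\ref{totheinterior} to the interior restriction of any other candidate.

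The one non-trivial remaining point, and what I expect to be the main obstacle, is smoothness of the glued $\psi$ across the null hypersurface $\mathcal{H}^+_{\rm out}$. My approach is to work in the outgoing Eddington--Finkelstein chart $(v,r,\theta,\varphi^*)$ of Section~\ref{blcoord}, in which the metric~\eqref{metric2} and hence $\Box_g$ are smooth and regularly hyperbolic across $\{r=r_+\}$, and to appeal to wellposedness of the local characteristic IVP in this regular chart. For any $p \in \mathcal{H}^+_{\rm out}$ one poses a local double-null characteristic problem near $p$ with data on a small piece of $\mathcal{H}^+_{\rm out}$ (supplied by $\uppsi_{\mathcal{H}^+_{\rm out}}$) and on a transverse null hypersurface lying entirely on the exterior side (supplied by the smooth $\psi_{\rm ext}$); standard wellposedness yields a smooth two-sided local solution, which by uniqueness agrees with $\psi_{\rm ext}$ on its exterior portion. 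On the interior portion, the transport equation induced on $\mathcal{H}^+_{\rm out}$ by the wave equation determines the transverse derivatives of the solution from $\uppsi_{\mathcal{H}^+_{\rm out}}$ and their values at the bifurcation sphere $\mathcal{B}_+$; since both this interior portion and $\tilde\psi$ vanish to all orders at $\mathcal{B}_+$ (by Remark~\ref{finitespeedrmk} applied to $\uppsi_{\mathcal{H}^+_{\rm out}}$ together with $\tilde\psi|_{\mathcal{H}^+_{\rm in}} = 0$), these transverse derivatives agree, and the uniqueness clause of Proposition~\ref{totheinterior} then identifies the interior portion with $\tilde\psi$. Hence $\psi$ is smooth at $p$, and since $p \in \mathcal{H}^+_{\rm out}$ was arbitrary the proof is complete.
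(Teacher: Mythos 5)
Your overall route is the same as the paper's: the paper disposes of this lemma in one sentence, citing exactly the three ingredients you use (Theorem~\ref{betterscatexist} for the smooth extension to $\overline{\mathcal{D}_{\rm ext}}$, Remark~\ref{finitespeedrmk} for the support in $\{v \geq v_{\dagger}\}$, and Proposition~\ref{totheinterior} for the interior extension vanishing on $\mathcal{H}^+_{\rm in}$), and leaves the smoothness of the glued solution across $\mathcal{H}^+_{\rm out}$ implicit. Your decision to verify that last point explicitly is reasonable, and the mechanism you ultimately invoke --- the transport equations satisfied by the transverse derivatives along the null generators of $\mathcal{H}^+_{\rm out}$, together with the fact that both one\nobreakdash-sided pieces vanish identically for $v \leq v_{\dagger}$ (the exterior piece by Remark~\ref{finitespeedrmk}, the interior piece by a domain-of-dependence argument from the vanishing data on $\mathcal{H}^+_{\rm in}$ and on $\mathcal{H}^+_{\rm out} \cap \{v \leq v_{\dagger}\}$) --- is the correct one: the two one-sided jets satisfy the same linear ODEs along each generator with the same (zero) initial condition at $v = v_{\dagger}$, hence agree, and matching of all derivatives of two functions smooth up to a common hypersurface boundary yields smoothness of the glued function.

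The one step that fails as written is the claim that a characteristic initial value problem with data on a piece of $\mathcal{H}^+_{\rm out}$ and on a transverse null hypersurface \emph{lying entirely on the exterior side} ``yields a smooth two-sided local solution.'' It does not: in Kruskal-type coordinates, with $C \subset \{U_+ = 0\}$ and $\underline{C} \subset \{V_+ = V_0,\ U_+ \geq 0\}$, any point with $U_+ < 0$ admits past-directed causal curves that decrease $V_+$ while keeping $U_+ < 0$ and so escape to $\{V_+ < V_0\}$ without meeting $C \cup \underline{C}$; the future development of such data is therefore entirely one-sided and merely reproduces $\psi_{\rm ext}$. To obtain a genuinely two-sided solution by a single well-posedness statement one must pose data on a hypersurface that itself crosses $\mathcal{H}^+_{\rm out}$ --- for instance a spacelike $\{t^* = \tau\}$ with $\tau$ so negative that both $\psi_{\rm ext}$ and the interior solution vanish near the horizon there, extending the exterior data by zero --- and then identify the two evolutions by uniqueness. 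Since your transport-equation argument does not actually need the two-sided local solution (it compares the one-sided jets of $\psi_{\rm ext}$ and of the interior solution directly), the flaw is localized and repairable, but the sentence asserting the two-sided development should be removed or replaced.
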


Next (cf.~\cite{anne}), it immediately follows from the estimate associated to the red-shift vector field~\cite{lectnotes} that we can obtain a non-degenerate energy boundedness statement on constant-$r$ hypersurfaces sufficiently close to $\mathcal{H}^+_{\rm out}$.

\begin{lemma}\label{redshift}Let $\psi : \mathcal{D} \to \mathbb{R}$ be a smooth axisymmetric solution to the wave equation which vanishes for $v$ sufficiently negative, and such that $\psi|_{\mathcal{H}^+} \in \dot{H}^1\left(\mathbb{R} \times \mathbb{S}^2\right)$. Let $c$ satisfy $r_+ -\epsilon \leq c \leq r_+$ for $\epsilon > 0$ sufficiently small. Then
\begin{align*}
\int_{-\infty}^{\infty}\int_{\mathbb{S}^2}&\left[\left(\partial_v\psi\right)^2 + \left(\partial_r\psi\right)^2+\left(\partial_{\theta}\psi\right)^2\right]|_{r = c}\, dv\, d\mathbb{S}^2
\leq C\int_{-\infty}^{\infty}\int_{\mathbb{S}^2}\left[\left(\partial_v\psi\right)^2 + \left(\partial_{\theta}\psi\right)^2\right]|_{\mathcal{H}^+}\, dv\, d\mathbb{S}^2.
\end{align*}
\end{lemma}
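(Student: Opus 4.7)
The plan is to apply the standard red-shift vector field estimate of Dafermos--Rodnianski~\cite{lectnotes} on the interior side of $\mathcal{H}^+$. The key ingredient is a smooth, future-directed timelike vector field $N$ defined in a one-sided neighborhood $\{r_+-\epsilon_0 < r \leq r_+\}$ of $\mathcal{H}^+_{\rm in}$ whose deformation tensor satisfies the pointwise coercivity bound
\[
-\tfrac{1}{2}\mathbf{T}^{\mu\nu}[\psi](\mathcal{L}_N g)_{\mu\nu} \;\geq\; b\bigl((\partial_v\psi)^2 + (\partial_r\psi)^2 + (\partial_\theta\psi)^2\bigr)
\]
for some $b>0$; existence of such an $N$ on sub-extremal Kerr relies only on $\kappa_+>0$. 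I would then shrink the $\epsilon$ appearing in the statement of the lemma so that $c\in(r_+-\epsilon_0,\,r_+]$.

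The next step is to apply the divergence identity~\eqref{enid} for $\mathbf{J}^N[\psi]$ in the region $\mathcal{R}_{v_1,v_2} = \{c \leq r \leq r_+\} \cap \{v_1 \leq v \leq v_2\}$. The hypothesis that $\psi$ vanishes for $v$ sufficiently negative allows $v_1\to-\infty$ to eliminate the past lateral boundary; the future lateral boundary $\{v=v_2\}$ is null and, $N$ being timelike, its flux integrand is non-negative and may be discarded. Taking $v_2\to\infty$ by monotone convergence (justified by pointwise non-negativity of the integrand on $\{r=c\}$) produces
\[
\int_{\{r=c\}}\mathbf{J}^N_\mu n^\mu + \int_{\mathcal{R}}\bigl(-\tfrac{1}{2}\mathbf{T}^{\mu\nu}(\mathcal{L}_N g)_{\mu\nu}\bigr) \;\leq\; \int_{\mathcal{H}^+}\mathbf{J}^N_\mu n^\mu_{\mathcal{H}^+},
\]
and the bulk term may be dropped by the coercivity above.

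The two boundary fluxes are then identified with the two sides of the claimed inequality. On the spacelike hypersurface $\{r=c\}$, timelikeness of $N$ makes $\mathbf{J}^N_\mu n^\mu$ pointwise comparable to the full non-degenerate energy density $(\partial_v\psi)^2+(\partial_r\psi)^2+(\partial_\theta\psi)^2$, which (after accounting for the induced volume form) is exactly the left-hand side of the claim. On $\mathcal{H}^+$, I would compute in a null frame $(K,Y,e_1,e_2)$ with $K$ the Hawking generator, $Y$ transverse null and $(e_1,e_2)$ an angular frame; writing $N=aK+bY$ with $a,b>0$ (forced by $N$ being future timelike and $g(K,Y)<0$), a short direct computation using $g(K,K)=0$ yields $\mathbf{T}(N,K) = a\,(K\psi)^2 + \tilde{b}\,|\nabb\psi|^2$, with the terms in $(K\psi)(Y\psi)$ cancelling identically. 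Axisymmetry then turns $K\psi$ into $T\psi = \partial_v\psi$ and $|\nabb\psi|^2$ into a bounded multiple of $(\partial_\theta\psi)^2$, producing exactly the right-hand side of the lemma.

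The main obstacle is essentially only citing the construction of $N$ with the stated coercivity on the interior side of the horizon, which is standard from~\cite{lectnotes}. The one new calculation required is the frame identity on $\mathcal{H}^+$ showing that the transverse $(Y\psi)^2$-contribution to the horizon flux cancels; this uses only that $K$ is null on $\mathcal{H}^+$ and the decomposition of the inverse metric in the chosen null frame.
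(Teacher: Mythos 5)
Your overall architecture --- a red-shift vector field $N$ on the interior side of $\{r=r_+\}$, the energy identity \eqref{enid} in $\{c\le r\le r_+\}\cap\{v_1\le v\le v_2\}$, removal of the lateral boundaries, and the null-frame computation showing that the $\mathcal{H}^+$-flux of $\mathbf{J}^N$ sees only $(K\psi)^2$ and $|\nabb\psi|^2$ --- is exactly what the paper has in mind when it says the lemma ``follows from the estimate associated to the red-shift vector field'' (the paper offers no further proof, only the citations to \cite{lectnotes,anne}). The horizon-flux cancellation you isolate as ``the one new calculation'' is correct.

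The gap is in identifying the flux through $\{r=c\}$ with the left-hand side of the lemma. Computing $\mathbf{J}^N_\mu n^\mu$ on $\{r=c\}$ against the measure $dv\,d\mathbb{S}^2$ of the statement (i.e.\ contracting $\mathbf{T}(N,\cdot)$ with $\nabla r$ and weighting by $\sqrt{|g|}$), one finds for any \emph{regular} future timelike $N=\alpha\partial_v-\beta\partial_r$ (Schwarzschild written for clarity, with $\mu\doteq-\Delta/r^2>0$ in the interior)
\begin{equation*}
\mathbf{T}^{r}{}_{\nu}N^{\nu}\,r^2=\Big[\alpha\left(\partial_v\psi\right)^2-\alpha\mu\,\partial_v\psi\,\partial_r\psi+\tfrac{\beta\mu}{2}\left(\partial_r\psi\right)^2+\tfrac{\beta}{2}|\nabb\psi|^2\Big]\,r^2 .
\end{equation*}
The coefficient of $(\partial_r\psi)^2$ is $O(|\Delta(c)|)=O(r_+-c)$: the flux of \emph{any} regular timelike vector field through $\{r=c\}$ degenerates in the transverse derivative as that hypersurface becomes null, and at $c=r_+$ (which the stated range includes, and where $\{r=c\}$ is null rather than spacelike) it contains no $(\partial_r\psi)^2$ at all. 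So ``pointwise comparable to the full non-degenerate energy density'' holds only with a constant of order $(r_+-c)^{-1}$, and your argument delivers the lemma only with $C=C(c)$ blowing up at the horizon. That weaker version is in fact all the paper uses (Proposition~\ref{firsthalf} needs a single fixed $c<r_+$ before handing off to Lemma~\ref{finitetime}), so your proof suffices for the application; but it is not the red-shift estimate: you never use $\kappa_+>0$, and you explicitly discard the coercive bulk term $-\frac12\mathbf{T}^{\mu\nu}(\mathcal{L}_Ng)_{\mu\nu}\gtrsim(\partial_r\psi)^2+\dots$, which is precisely the ingredient that recovers non-degenerate control of the transverse derivative (through the spacetime integral it produces, combined with the damped transport equation $\partial_v(\partial_r\psi)+\kappa_+\,\partial_r\psi=-\frac12\lap\psi+\dots$ satisfied along the horizon generators). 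If you want the estimate uniformly in $c$ you must keep that term; note moreover that the transport equation shows the transverse derivative at $c=r_+$ costs \emph{two} angular derivatives of the data, so some care with the endpoint, or with what $C$ is permitted to depend on, is unavoidable. A last cosmetic point: your one-sided neighbourhood $\{r_+-\epsilon_0<r\le r_+\}$ abuts $\mathcal{H}^+_{\rm out}$, not $\mathcal{H}^+_{\rm in}$.
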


For the next two lemmas it will be convenient to switch to $(t,r^*,\theta,\varphi) \in \mathbb{R} \times \mathbb{R} \times \mathbb{S}^2$ coordinates in $\mathcal{D}_{\rm int}$,  where $t$, $\theta$, $\varphi$ are given by their Boyer--Lindquist values (see Section~\ref{blcoord}) and $r^*$ is defined by~\eqref{formrstar}. In these coordinates, the wave operator applied to an axisymmetric function $\psi$ is given by the following formula:
\begin{align}\label{boxform}
\Box_g\psi = \left(\frac{a^2\sin^2\theta\Delta - \left(r^2+a^2\right)^2}{\rho^2\Delta}\right)\partial_t^2\psi
+ \frac{r^2+a^2}{\Delta\rho^2}\partial_{r^*}\left((r^2+a^2)\partial_{r^*}\psi\right) + \frac{1}{\rho^{2}\sin\theta}\partial_{\theta}\left(\sin\theta\partial_{\theta}\psi\right).
\end{align}
Note also that the volume form in $(t,r^*,\theta,\varphi)$ coordinates is given by
\[dVol = \frac{-\rho^2\Delta\sin^2\theta}{r^2+a^2}\, dt\, dr^*\, d\theta\, d\varphi.\]

The next lemma shows that given a non-degenerate energy bound on any constant $\{r^* = c_1\}$ hypersurface, we may also obtain a non-degenerate energy bound on any other constant $\{r^* = c_2\}$ hypersurface via finite-in-time energy estimates (with a constant which blows up as $|c_1|+|c_2| \to\infty$).
\begin{lemma}\label{finitetime}Let $\psi : \mathcal{D}_{\rm int} \to \mathbb{R}$ be a smooth axisymmetric solution to the wave equation. Let $c_2, c_1 \in \mathbb{R}$. Then
\begin{align}\label{someinequality}
\int_{-\infty}^{\infty}\int_{\mathbb{S}^2}&\left[\left(\partial_t\psi\right)^2 + \left(\partial_{r^*}\psi\right)^2 + \left(\partial_{\theta}\psi\right)^2\right]|_{\{r^* = c_2\}}\, dt\, d\mathbb{S}^2
\\ \nonumber &\leq C\left(c_2,c_1\right)\int_{-\infty}^{\infty}\int_{\mathbb{S}^2}\left[\left(\partial_t\psi\right)^2 + \left(\partial_{r^*}\psi\right)^2 + \left(\partial_{\theta}\psi\right)^2\right]|_{\{r^* = c_1\}}\, dt\, d\mathbb{S}^2 ,
\end{align}
where the convention is that~\eqref{someinequality} automatically holds if the right hand side is infinite. Also, we emphasize that the constant $C$ depends on both $c_1$ and $c_2$.
\end{lemma}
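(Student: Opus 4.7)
The plan is to apply the vector field multiplier identity~\eqref{enid} with a timelike but non-Killing field in the slab $\{c_1\leq r^*\leq c_2\}$, and then close via Gronwall's inequality. From~\eqref{boxform} one reads off $g^{r^*r^*} = (r^2+a^2)^2/(\rho^2\Delta) < 0$ in $\mathcal{D}_{\rm int}$, so the hypersurfaces $\{r^* = c\}$ are spacelike and the vector field $N \doteq \partial_{r^*}$ is timelike there. Since $-\partial_r$ is the adopted future orientation of $\mathcal{D}_{\rm int}$ and $\partial_{r^*} = (\Delta/(r^2+a^2))\partial_r = (|\Delta|/(r^2+a^2))(-\partial_r)$ with $\Delta < 0$, the field $N$ is also future-directed. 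Using the axisymmetry $\partial_{\varphi}\psi = 0$ and the absence of a $dt\,dr$ cross term in~\eqref{metric}, a direct computation shows that the flux $\mathbf{J}^N_\mu n^\mu$ through $\{r^* = c\}$ is a positive definite quadratic form in $(\partial_t\psi,\partial_{r^*}\psi,\partial_\theta\psi)$ whose coefficients are smooth in $(r^*,\theta)\in [c_1,c_2]\times\mathbb{S}^2$ and uniformly bounded above and below away from zero. Denoting the associated flux integral by $E(c)$, this makes $E(c)$ equivalent to the relevant integrand on $\{r^* = c\}$, with constants depending only on $c_1,c_2$.

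Assume without loss of generality that $E(c_1) < \infty$, and reduce to compactly $t$-supported data by truncation. Let $\chi_R(t)$ be a smooth cutoff equal to $1$ on $|t|\leq R$ and supported in $|t|\leq R+1$, and let $\psi_R$ be the unique smooth solution of~\eqref{wave} on $\mathcal{D}_{\rm int}$ with Cauchy data $(\chi_R\psi,\chi_R N\psi)$ on the spacelike hypersurface $\{r^* = c_1\}$; this is well-posed since $\{r^* = c_1\}$ is a global Cauchy hypersurface of $\mathcal{D}_{\rm int}$. Because $[c_1,c_2]$ is a compact $r^*$-interval strictly inside the interior, finite speed of propagation implies that $\psi_R$ has compact $t$-support on every $\{r^* = c\}$ with $c\in[c_1,c_2]$, and that $\psi_R = \psi$ on every fixed compact subset of $\mathcal{D}_{\rm int}$ once $R$ is sufficiently large. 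In particular $E_{\psi_R}(c_1) \to E_\psi(c_1)$.

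Applying~\eqref{enid} with $X = N$ to such $\psi_R$ in the truncated region $\{c_1\leq r^*\leq c_2\}\cap\{|t|\leq T_R\}$, where $T_R$ is chosen so that $\psi_R$ vanishes on $\{|t| = T_R\}$ throughout the slab, kills the boundary terms at $|t| = T_R$. Since $\mathcal{L}_N g$ is smooth and bounded on the compact slab, its contraction with $\mathbf{T}^{\mu\nu}[\psi_R]$ is pointwise dominated by a constant $C(c_1,c_2)$ times the integrand of $\mathbf{J}^N_\mu n^\mu$. The coarea formula then gives, for every $c\in [c_1,c_2]$,
\[ E_{\psi_R}(c) \leq E_{\psi_R}(c_1) + C(c_1,c_2)\int_{c_1}^{c} E_{\psi_R}(s)\,ds, \]
and Gronwall's inequality yields $E_{\psi_R}(c_2) \leq C'(c_1,c_2)\, E_{\psi_R}(c_1)$. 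Sending $R\to\infty$, Fatou's lemma combined with the pointwise identity $\psi_R = \psi$ on compact subsets of $\{r^* = c_2\}$ yields $E_\psi(c_2) \leq C'(c_1,c_2)\, E_\psi(c_1)$, which is~\eqref{someinequality}. The case $c_2 < c_1$ follows by the symmetric argument. The main obstacle is the non-compactness of the slab in $t$, which precludes a direct application of the divergence theorem on $\{c_1 \leq r^* \leq c_2\}$; the cutoff plus finite-speed-of-propagation reduction above is precisely what absorbs this difficulty, reducing the problem to a compact region where the standard multiplier argument applies without issue.
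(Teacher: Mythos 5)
Your overall strategy coincides with the paper's (one-line) proof: the constant-$r^*$ hypersurfaces are spacelike in $\mathcal{D}_{\rm int}$, the vector field $\partial_{r^*}$ is timelike and future-directed there, and the estimate is a finite-in-time energy estimate with this multiplier, closed by Gronwall over the compact slab $\{c_1\le r^*\le c_2\}$ on which the flux is uniformly coercive in $(\partial_t\psi,\partial_{r^*}\psi,\partial_\theta\psi)$ and the deformation tensor $\mathcal{L}_{\partial_{r^*}}g$ is bounded. All of that part of your write-up is correct, and it is exactly the paper's route.

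The flaw is in the truncation step. The flux $\mathbf{J}^N_\mu n^\mu[\psi_R]$ on $\{r^*=c_1\}$ involves the tangential derivative $\partial_t(\chi_R\psi)=\chi_R\partial_t\psi+\chi_R'\psi$, hence a term of size $\int_{R\le|t|\le R+1}(\chi_R'\psi)^2$. The hypotheses of the lemma control only \emph{first derivatives} of $\psi$ on the slice; $\psi$ itself may grow (e.g.\ like $|t|^{1/2}(\log|t|)^{-1}$, which is compatible with $\partial_t\psi\in L^2$), in which case $\int_{R}^{R+1}\int_{\mathbb{S}^2}\psi^2\to\infty$, your claim that $E_{\psi_R}(c_1)\to E_\psi(c_1)$ fails, and the concluding Fatou step yields nothing. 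The standard repair---and what ``finite-in-time energy estimates'' amounts to here---is to dispense with the cutoff altogether and apply the divergence identity in the lens-shaped region $D^+\left(\{r^*=c_1\}\cap\{|t|\le T\}\right)\cap\{c_1\le r^*\le c\}$: its lateral boundary is achronal, so the corresponding outgoing flux of the future-directed timelike current $\mathbf{J}^N$ is non-negative and can be discarded, Gronwall applies exactly as in your argument, and one then lets $T\to\infty$ by monotone convergence. (Alternatively, a cutoff varying on scale $R$ combined with a Hardy inequality, after normalising $\psi$ by an additive constant, would also close your version.)
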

\begin{proof}This is an immediate consequence of finite-in-time energy estimates with the timelike time translation invariant vector field $\frac{(r^2+a^2)}{\rho\sqrt{\Delta}}\partial_{r^*}$.
\end{proof}
\begin{remark}Note that the lemma does not require $c_2 < c_1$ or $c_1 < c_2$.
\end{remark}

Next, following closely the approaches from~\cite{anne,lukoh}, for sufficiently large $r^*$, we may use a multiplier of the form $r^q\partial_{r^*}$ to obtain a degenerate energy estimate with a constant which does not blow up as $r^* \to \infty$.
\begin{lemma}\label{rq}Let $\psi : \mathcal{D}_{\rm int} \to \mathbb{R}$ be a smooth axisymmetric solution to the wave equation, and let $c_1 > 0$ be sufficiently large. Then $c_2 > c_1$ implies that
\begin{align}\label{someinequality2}
\int_{-\infty}^{\infty}\int_{\mathbb{S}^2}&\left[\left(\partial_t\psi\right)^2 + \left(\partial_{r^*}\psi\right)^2 -\Delta \left(\partial_{\theta}\psi\right)^2\right]|_{\{r^* = c_2\}}\, dt\, d\mathbb{S}^2
\\ \nonumber &\leq C\int_{-\infty}^{\infty}\int_{\mathbb{S}^2}\left[\left(\partial_t\psi\right)^2 + \left(\partial_{r^*}\psi\right)^2 - \Delta\left(\partial_{\theta}\psi\right)^2\right]|_{\{r^* = c_1\}}\, dt\, d\mathbb{S}^2 ,
\end{align}
where the convention is that~\eqref{someinequality2} automatically holds if the right hand side is infinite. 

(Note that $\Delta < 0$ in $\mathcal{D}_{\rm int}$.) We also emphasize that the constant $C$ does not depend on $c_1$ and $c_2$.
\end{lemma}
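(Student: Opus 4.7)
The plan is to prove the lemma by an energy-multiplier argument with the vector field $X = r^q \partial_{r^*}$. Because $r$ is bounded above and below on $\mathcal{D}_{\rm int}$, the choice of $q$ is mostly cosmetic, and for definiteness one may take $q = 0$. The vector field $\partial_{r^*}$ is timelike in the interior, since $g(\partial_{r^*},\partial_{r^*}) = \rho^2 \Delta/(r^2+a^2)^2 < 0$ when $\Delta < 0$, and the hypersurfaces $\{r^* = c\}$ are correspondingly spacelike. A direct computation of the flux $\int_{\{r^*=c\}} J^X_\mu n^\mu$ shows it is uniformly equivalent to the weighted energy
\[
\tilde E(c) \doteq \int_{\{r^*=c\}}\Bigl[(r^2+a^2)(\psi_{r^*}^2 + \psi_t^2) - \tfrac{a^2\sin^2\theta\,\Delta}{r^2+a^2}\psi_t^2 - \tfrac{\Delta}{r^2+a^2}\psi_\theta^2\Bigr]\, dt\, d\mathbb{S}^2,
\]
which, since $r^2+a^2$ is bounded between positive constants, is itself uniformly equivalent on $[r_-, r_+]$ to the quantity appearing on either side of~\eqref{someinequality2}.

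Next, I would derive the associated energy identity. Reducing by a density argument to the case where the right-hand side of~\eqref{someinequality2} is finite, multiply $\Box_g\psi = 0$ (written in the form~\eqref{boxform}) by $2\psi_{r^*}$ and integrate by parts in $t$ and $\theta$ over $\{c_1 \leq r^* \leq c_2\}\times\mathbb{R}\times\mathbb{S}^2$; the $\varphi$-integration is trivial by axisymmetry, and the boundary terms at $t = \pm\infty$ are disposed of by a standard truncation. The result is an identity
\[
\tilde E'(c) = -\int_{\{r^*=c\}}\Bigl[\partial_{r^*}(r^2+a^2)\,\psi_{r^*}^2 + \mathcal{A}\,\psi_t^2 + \mathcal{B}\,\psi_\theta^2\Bigr]\,dt\,d\mathbb{S}^2,
\]
where $\mathcal{A} = -\partial_{r^*}(r^2+a^2) + a^2\sin^2\theta\,\partial_{r^*}(\Delta/(r^2+a^2))$ and $\mathcal{B} = \partial_{r^*}(\Delta/(r^2+a^2))$.

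Third, I would analyze the bulk. Since $\partial_{r^*} = (\Delta/(r^2+a^2))\partial_r$ and $\partial_r$ applied to the smooth functions in play is bounded on $[r_-, r_+]$, each of $\partial_{r^*}(r^2+a^2)$, $\mathcal{A}$, $\mathcal{B}$ is $O(|\Delta|)$. Expanding, $\mathcal{B} = \Delta\,[(2r-2M)(r^2+a^2) - 2r\Delta]/(r^2+a^2)^3$, whose sign near $r = r_-$ is determined by the product of the two negative factors $\Delta < 0$ and $2r_- - 2M < 0$, so $\mathcal{B} > 0$ for $c_1$ sufficiently large. Similarly $\partial_{r^*}(r^2+a^2) = 2r\Delta/(r^2+a^2) < 0$ in the interior, and hence $\mathcal{A} = -\partial_{r^*}(r^2+a^2) + a^2\sin^2\theta\,\mathcal{B} > 0$ in this regime. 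Consequently, in the identity above the $\psi_t^2$ and $\psi_\theta^2$ terms contribute non-positively to $\tilde E'(c)$ and may simply be discarded, while the remaining $\psi_{r^*}^2$ contribution is bounded in absolute value by $C|\Delta(c)|\int \psi_{r^*}^2 \leq C|\Delta(c)|\,\tilde E(c)$. Therefore $\tilde E'(c) \leq C|\Delta(c)|\,\tilde E(c)$.

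Finally, from~\eqref{formrstar} we have $|\Delta(r^*)| \sim e^{-2|\kappa_-|r^*}$ as $r^* \to \infty$, so $\int_{c_1}^{c_2}|\Delta(r^*)|\,dr^*$ is bounded uniformly in $c_2 > c_1$. Gr\"onwall's inequality then yields $\tilde E(c_2) \leq \tilde E(c_1)\exp(C\int_{c_1}^{c_2}|\Delta|\,dr^*) \leq C'\tilde E(c_1)$ with $C'$ independent of $c_1$ and $c_2$, and the equivalence of $\tilde E$ with the quantity in~\eqref{someinequality2} completes the argument. The main obstacle will be the careful sign analysis of $\mathcal{B}$ (and hence $\mathcal{A}$), which critically relies on the geometric fact $r_- < M$ together with $c_1$ being large enough that this sign is stable; this ``dissipative'' sign of the angular bulk coefficient is precisely what allows the $\psi_\theta^2$ contribution, which is not controlled by $\tilde E$ without the $|\Delta|$-weight, to be dropped rather than absorbed via Gr\"onwall.
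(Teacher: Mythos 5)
Your proof is correct and follows essentially the same route as the paper: the multiplier $r^q\partial_{r^*}\psi$ applied to the wave equation in $(t,r^*,\theta,\varphi)$ coordinates, producing the same boundary and bulk terms and resting on the same crucial sign fact that $\partial_{r^*}\Delta>0$ (equivalently your $\mathcal{B}>0$) once $r$ is close enough to $r_-$, which is what lets the otherwise-uncontrollable $\psi_\theta^2$ bulk term be discarded. The only difference is in handling the one adverse $(\partial_{r^*}\psi)^2$ bulk term: the paper takes $q$ sufficiently large so that this coefficient also acquires a favourable sign and the entire bulk is positive (giving monotonicity of the flux outright), whereas you keep $q=0$ and absorb it by Gr\"onwall using $\int_{c_1}^{c_2}|\Delta|\,dr^*\leq C$; both yield a constant independent of $c_1$ and $c_2$.
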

\begin{proof}Multiplying the wave equation by $r^q\partial_{r^*}\psi$, integrating over the region in between $\{r^* = c_1\}$ and $\{r^* = c_2\}$ and then integrating by parts eventually yields the boundary terms
\begin{align*}
\left(\int_{r^* = c_2}-\int_{r^*=c_1}\right)&\Bigg[r^q\left(r^2+a^2\right)\left(\partial_{r^*}\psi\right)^2
\\ \nonumber &+ \left(r^q\frac{(r^2+a^2)^2-a^2\sin^2\theta\Delta}{r^2+a^2}\right)\left(\partial_t\psi\right)^2
 - \frac{r^q\Delta}{r^2+a^2}\left(\partial_{\theta}\psi\right)^2\Bigg]\, dt\, d\mathbb{S}^2,
\end{align*}
and a bulk term
\begin{align*}
\int\int\Bigg[&\left[r^q\partial_{r^*}\left(r^2+a^2\right) - \partial_{r^*}\left(r^q\right)\left(r^2+a^2\right)\right]\left(\partial_{r^*}\psi\right)^2\\ \nonumber  &-\partial_{r^*}\left(r^q\frac{(r^2+a^2)^2-a^2\sin^2\theta\Delta}{r^2+a^2}\right)\left(\partial_t\psi\right)^2
+ \partial_{r^*}\left(\frac{r^q\Delta}{r^2+a^2}\right)\left(\partial_{\theta}\psi\right)^2\Bigg]\, dt\, dr^*\, d\mathbb{S}^2.
\end{align*}

If $c_1$ is sufficiently large, then $\partial_{r^*}\Delta > 0$, and, for sufficiently large $q$, it is clear that the bulk is positive. The desired estimate immediately follows.
\end{proof}

Finally, we are ready to prove the estimate~\eqref{degenecauchy}.
\begin{proposition}\label{firsthalf}Let $\psi$ satisfy the hypothesis of Theorem~\ref{tothecauchyandbeyond}, then~\eqref{degenecauchy} holds.
\end{proposition}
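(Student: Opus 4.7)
The plan is to chain together Theorem~\ref{betterscatexist} with Lemmas~\ref{smooth}, \ref{redshift}, \ref{finitetime}, and \ref{rq} in order to transport the $\mathcal{I}^-$ flux to a degenerate flux on $\mathcal{CH}^+_{\rm out}$. The key geometric observation is that in the interior $(t,r^*,\theta,\varphi)$ coordinates on $\mathcal{D}_{\rm int}$, the function $r^*$ sweeps out all of $\mathbb{R}$ as $r$ ranges over $(r_-,r_+)$; from the formula~\eqref{formrstar} and the signs of $\kappa_\pm$ one checks that $r^*\to-\infty$ at $\mathcal{H}^+_{\rm out}$ while $r^*\to+\infty$ at $\mathcal{CH}^+_{\rm out}$. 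So the proof reduces to transporting a non-degenerate energy bound from $\{r^*\to-\infty\}$ through a bounded $r^*$-range and then out to $\{r^*\to+\infty\}$.

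First I would apply Theorem~\ref{betterscatexist} to convert the hypothesis on $\upphi_{\mathcal{I}^-}$ into the non-degenerate energy bound on $\mathcal{H}^+$ given by an $L^2$ bound on $\partial_v\uppsi_{\mathcal{H}^+}$ and $\partial_\theta\uppsi_{\mathcal{H}^+}$. Lemma~\ref{smooth} allows me to view $\psi$ as a smooth solution on all of $\mathcal{D}$ that vanishes on $\mathcal{H}^+_{\rm in}$, and Lemma~\ref{redshift} then yields a non-degenerate energy bound on a hypersurface $\{r=r_+-\delta\}\subset\mathcal{D}_{\rm int}$ for some small $\delta>0$. Since on any constant-$r$ surface the map $(v,r,\theta,\varphi^*)\mapsto(t,r^*,\theta,\varphi)$ is just a simultaneous constant shift in $v$ and $\varphi^*$, this is equivalent to a non-degenerate energy bound on $\{r^*=c_0\}$ for some large negative $c_0$.

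Next I would use Lemma~\ref{finitetime} to propagate this bound to $\{r^*=c_1\}$ for any fixed $c_1$ taken large enough that the monotonicity $\partial_{r^*}\Delta>0$ and the positivity of $q$ needed in the proof of Lemma~\ref{rq} are in force. Lemma~\ref{rq} then delivers, for every $c_2>c_1$,
\begin{equation*}
\int_{-\infty}^{\infty}\!\int_{\mathbb{S}^2}(\partial_t\psi)^2\big|_{r^*=c_2}\,dt\,d\mathbb{S}^2
\leq C\int_1^{\infty}\!\int_{\mathbb{S}^2}(\partial_v\upphi_{\mathcal{I}^-})^2\,dv\,d\mathbb{S}^2,
\end{equation*}
with $C$ independent of $c_2$; here one uses that $-\Delta>0$ on $\mathcal{D}_{\rm int}$ to discard the nonnegative $-\Delta(\partial_\theta\psi)^2$ term on the left.

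Finally, on any constant-$r$ hypersurface inside $\mathcal{D}_{\rm int}$ the relation $v=t+r^*(r)$ is a pure shift of $t$, so $\partial_t\psi=\partial_v\psi$ and $dt=dv$ there, while $\varphi^*$ is likewise a shift of $\varphi$; by the smoothness of $\psi$ across $\mathcal{CH}^+_{\rm out}$ supplied by Lemma~\ref{smooth}, the integrand converges pointwise (after the $v$, $\varphi^*$ shifts) to $(\partial_v\psi|_{\mathcal{CH}^+_{\rm out}})^2$ as $c_2\to+\infty$, and Fatou's lemma then gives~\eqref{degenecauchy}. The main obstacle is precisely that Lemma~\ref{finitetime} alone produces constants blowing up as $c_2\to+\infty$, so it cannot be used to reach $\mathcal{CH}^+_{\rm out}$ directly; the $r^q$-multiplier of Lemma~\ref{rq} is what supplies the uniform-in-$c_2$ bound, at the cost of degenerating the angular contribution via $-\Delta\to 0$, which is exactly the reason why only the degenerate $(\partial_v\psi)^2$ energy on $\mathcal{CH}^+_{\rm out}$ is controlled in~\eqref{degenecauchy}.
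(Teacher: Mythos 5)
Your proof is correct and follows essentially the same route as the paper: Theorem~\ref{betterscatexist} and Lemmas~\ref{smooth}, \ref{redshift}, \ref{finitetime}, \ref{rq} give a $c$-uniform energy bound on the $\{r^*=c\}$ hypersurfaces, which is then passed to the limit $c\to\infty$ to reach $\mathcal{CH}^+_{\rm out}$. The only (immaterial) difference is that you track $(\partial_t\psi)^2 = (T\psi)^2$ in the limit while the paper tracks $(\partial_{r^*}\psi)^2$; both quantities restrict to $(\partial_v\psi|_{\mathcal{CH}^+_{\rm out}})^2$ for axisymmetric $\psi$, so either works.
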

\begin{proof}First of all, it follows immediately from Theorem~\ref{betterscatexist} and Lemmas~\ref{smooth},~\ref{redshift},~\ref{finitetime} and~\ref{rq} that $\psi$ extends smoothly to $\mathcal{D}$, vanishes for sufficiently negative $v$, and that for every $c \in \mathbb{R}$ we have
\begin{equation}\label{boundwegot}
\int_{-\infty}^{\infty}\int_{\mathbb{S}^2}\left(\partial_{r^*}\psi\right)^2|_{r^* = c}\, dt\, d\mathbb{S}^2 \leq C\int_1^{\infty}\int_{\mathbb{S}^2}\left(\partial_v\upphi_{\mathcal{I}^-}\right)^2\, dv\, d\mathbb{S}^2,
\end{equation}
for a universal constant $C$.

Next, using that $\psi$ is axisymmetric, we note that a straightforward calculation shows that $\partial_{r^*}$ smoothly extends to $\mathcal{CH}^+_{\rm out}$, and we have $\partial_{r^*}\psi|_{\mathcal{CH}^+_{\rm out}} = \partial_v\psi|_{\mathcal{CH}^+_{\rm out}}$. In turn this is easily seen to imply that for every $v_1 \in (-\infty,\infty)$.
\[\int_{-\infty}^{v_1}\int_{\mathbb{S}^2}\left(\partial_v\psi|_{\mathcal{CH}^+_{\rm out}}\right)^2\, dv\, d\mathbb{S}^2 = \lim_{c\to\infty}\int_{-\infty}^{v_1-c}\int_{\mathbb{S}^2}\left(\partial_{r^*}\psi\right)^2|_{r^*=c}\, dt\, d\mathbb{S}^2.\]
In particular, we see that~\eqref{boundwegot} immediately implies the estimate~\eqref{degenecauchy}.

\end{proof}

\subsubsection{Non-Zero Transmission to $\mathcal{CH}^+_{\rm out}$}\label{transmittocauchy}
We will now show that if $\psi$ satisfies the hypothesis of Theorem~\ref{tothecauchyandbeyond}, then $\partial_v\psi$ cannot vanish identically on $\mathcal{CH}^+_{\rm out}$.

The following straightforward lemma (whose proof we omit) shows that along any constant $r^*$-hypersurface, solutions $\psi$ satisfying the hypothesis of Theorem~\ref{tothecauchyandbeyond} may be well approximated by compactly supported functions.
\begin{lemma}\label{approxcomp}Let $f(v,\theta,\varphi^*) \in C^{\infty}\left(\mathbb{R}\times\mathbb{S}^2\right)$ be axisymmetric, ${\rm supp}(f) \subset \{v \geq v_0\}$ for some $v_0 \in \mathbb{R}$, and \[\int_{v_0}^{\infty}\int_{\mathbb{S}^2}\left[\left(\partial_vf\right)^2 + \left(\partial_{\theta} f\right)^2\right]\, dv\, d\mathbb{S}^2 < \infty.\]
Then there exists $\{f_i\}_{i=1}^{\infty}$ with $f_i \in C^{\infty}_c\left(\mathbb{R}\times\mathbb{S}^2\right)$ such that
\begin{equation}\label{itsdense}
\lim_{i\to\infty}\int_{v_0}^{\infty}\int_{\mathbb{S}^2}\left[\left(\partial_v(f-f_i)\right)^2 + \left(\partial_{\theta} (f-f_i)\right)^2\right]\, dv\, d\mathbb{S}^2 = 0.
\end{equation}
\end{lemma}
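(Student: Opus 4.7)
The plan is a standard cutoff-and-multiply density argument. Since $\mathbb{S}^2$ is compact, the only obstruction to achieving compact support on $\mathbb{R}\times\mathbb{S}^2$ is the behaviour as $v\to+\infty$; I would therefore fix a smooth cutoff $\chi\in C^\infty(\mathbb{R})$ with $\chi\equiv 1$ on $(-\infty,1]$ and $\chi\equiv 0$ on $[2,\infty)$, set $\chi_R(v)\doteq\chi(v/R)$ so that $|\chi_R'|\leq CR^{-1}\mathbf{1}_{[R,2R]}(v)$, and define $f_R\doteq\chi_R f$. Each $f_R$ is smooth, axisymmetric, and supported in $[v_0,2R]\times\mathbb{S}^2$, hence lies in $C^\infty_c(\mathbb{R}\times\mathbb{S}^2)$. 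Taking $f_i\doteq f_{R_i}$ for any sequence $R_i\to\infty$, the content of the lemma becomes that~\eqref{itsdense} holds.

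Expanding
\[
\partial_v(f-f_R)=(1-\chi_R)\partial_v f-\chi_R' f,\qquad \partial_\theta(f-f_R)=(1-\chi_R)\partial_\theta f,
\]
the contributions $(1-\chi_R)\partial_v f$ and $(1-\chi_R)\partial_\theta f$ are pointwise dominated by $|\partial_v f|$ and $|\partial_\theta f|$ respectively and vanish on $\{v\leq R\}$, so by dominated convergence their $L^2$-norms tend to $0$ as $R\to\infty$. The only nontrivial contribution is the Hardy-type term $\chi_R' f$, for which the support and size of $\chi_R'$ give
\[
\int_{\mathbb{R}}\int_{\mathbb{S}^2}(\chi_R' f)^2\,dv\,d\mathbb{S}^2\leq \frac{C^2}{R^2}\int_R^{2R}\int_{\mathbb{S}^2} f^2\,dv\,d\mathbb{S}^2,
\]
so it suffices to show that $R^{-2}\int_R^{2R}\int_{\mathbb{S}^2} f^2\to 0$ as $R\to\infty$.

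Since $f$ vanishes on $\{v\leq v_0\}$, Cauchy--Schwarz applied to $f(v,\cdot)-f(R_0,\cdot)=\int_{R_0}^v\partial_v f\,dv'$ for any fixed $R_0\geq v_0$ and $v\geq R_0$ yields
\[
\int_{\mathbb{S}^2}f(v,\cdot)^2\,d\mathbb{S}^2\leq 2\int_{\mathbb{S}^2}f(R_0,\cdot)^2\,d\mathbb{S}^2+2(v-R_0)G(R_0),
\]
where $G(R_0)\doteq\int_{R_0}^\infty\int_{\mathbb{S}^2}(\partial_v f)^2\,dv\,d\mathbb{S}^2\to 0$ as $R_0\to\infty$ by hypothesis. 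Integrating over $v\in[R,2R]$, dividing by $R^2$, and using $v-R_0\leq 2R$ gives
\[
\frac{1}{R^2}\int_R^{2R}\int_{\mathbb{S}^2}f^2\,dv\,d\mathbb{S}^2\leq \frac{2}{R}\int_{\mathbb{S}^2}f(R_0,\cdot)^2\,d\mathbb{S}^2+4G(R_0).
\]
Letting $R\to\infty$ with $R_0$ fixed kills the first term, so $\limsup_R R^{-2}\int_R^{2R}\int_{\mathbb{S}^2}f^2\leq 4G(R_0)$; then sending $R_0\to\infty$ closes the argument.

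The only real subtlety is that the hypothesis controls $\partial_v f$ in $L^2$ but not $f$ itself, so a priori $f$ could grow as $v\to\infty$; however, combined with the support condition at $v=v_0$, the Cauchy--Schwarz bound above forces $\int_{\mathbb{S}^2}f(v,\cdot)^2=o(v)$, which is precisely the decay needed for $\chi_R' f$ to vanish in the energy norm. I anticipate no further obstacle, in line with the authors' remark that the proof is routine.
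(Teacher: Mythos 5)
Your argument is correct and complete. The paper explicitly omits the proof of this lemma as ``straightforward,'' so there is nothing to compare against; your cutoff construction $f_R=\chi(v/R)f$ is the standard argument the authors evidently had in mind, and you correctly identify and dispose of the only genuine issue, namely the commutator term $\chi_R'f$, via the two-parameter estimate showing $\int_{\mathbb{S}^2}f(v,\cdot)^2\,d\mathbb{S}^2=o(v)$ from the finiteness of the $\partial_v$-flux. Note also that your approximants $\chi_R f$ are automatically axisymmetric, which is what is actually needed in the density argument of Proposition 7.2 where the lemma is applied.
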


Now we are ready for the following proposition.
\begin{proposition}\label{secondhalf}Let $\psi$ satisfy the hypothesis of Theorem~\ref{tothecauchyandbeyond}. Then $\partial_v\psi$ does not vanish identically along $\mathcal{CH}^+_{\rm out}$.
\end{proposition}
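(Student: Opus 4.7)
The plan is a proof by contradiction: assuming $\partial_v\psi \equiv 0$ on $\mathcal{CH}^+_{\rm out}$, I aim to derive $\uppsi_{\mathcal{H}^+} \equiv 0$ on $\mathcal{H}^+_{\rm out}$, which contradicts the non-triviality of $T\uppsi_{\mathcal{H}^+}$ guaranteed by Theorem~\ref{scatexist} (the non-triviality of $\upphi_{\mathcal{I}^-}$ being the implicit setting in which the proposition has content).

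The first step is to upgrade the contradiction hypothesis to the stronger statement $\psi \equiv 0$ on $\mathcal{CH}^+_{\rm out}$. Since $\upphi_{\mathcal{I}^-}$ is supported in $\{v \geq 1\}$, Remark~\ref{finitespeedrmk} supplies some $v_{\dagger} \in \mathbb{R}$ such that $\psi$ vanishes on $\overline{\mathcal{D}_{\rm ext}} \cap \{v \leq v_{\dagger}\}$; a standard domain-of-dependence argument coupled with Proposition~\ref{totheinterior} propagates this vanishing into the interior, so that $\psi|_{\mathcal{CH}^+_{\rm out}}(v,\theta) = 0$ for $v \leq v_{\dagger}$. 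The assumption $\partial_v\psi|_{\mathcal{CH}^+_{\rm out}} \equiv 0$ says that $\psi|_{\mathcal{CH}^+_{\rm out}}$ is constant in $v$ along generators, and by the smoothness guaranteed by Lemma~\ref{smooth} this constant must equal $0$, yielding $\psi \equiv 0$ on $\mathcal{CH}^+_{\rm out}$.

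Next I would combine this with the hypothesis $\psi|_{\mathcal{H}^+_{\rm in}} \equiv 0$. The characteristic hypersurfaces $\mathcal{H}^+_{\rm in}$ and $\mathcal{CH}^+_{\rm out}$ meet at the ``left'' corner of the interior (visible in the Penrose diagram of $\overline{\mathcal{D}}$); setting up local double-null coordinates at this corner in which $\mathcal{H}^+_{\rm in}$ and $\mathcal{CH}^+_{\rm out}$ coincide with the two coordinate null hypersurfaces, the Goursat-type characteristic uniqueness argument underlying Proposition~\ref{totheinterior} forces $\psi \equiv 0$ on an open neighborhood of the corner in the interior.

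The main obstacle is propagating this local vanishing throughout the interior so as to reach $\mathcal{H}^+_{\rm out}$ and conclude $\uppsi_{\mathcal{H}^+} \equiv 0$. Because the Killing field $T$ fails to be timelike throughout the interior, a direct $T$-energy argument does not suffice. A clean approach is to invoke the analyticity of the Kerr metric and apply a Holmgren-type unique continuation theorem to propagate the zero-set of $\psi$ across spacelike hypersurfaces transverse to the open set on which $\psi$ vanishes; iterating yields $\psi \equiv 0$ throughout the interior, and in particular $\uppsi_{\mathcal{H}^+} \equiv 0$. An alternative that exploits the axisymmetric setting reduces the wave equation, via separation of variables, to a radial ODE for each $(\omega,\ell)$-mode, permitting a mode-by-mode propagation of the vanishing from $\mathcal{CH}^+_{\rm out}$ back to $\mathcal{H}^+_{\rm out}$.
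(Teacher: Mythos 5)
Your reduction to a backward-uniqueness statement contains a genuine gap at the step where you claim a ``Goursat-type characteristic uniqueness argument'' at the left corner forces $\psi \equiv 0$ on an open subset of the interior. First, $\mathcal{H}^+_{\rm in}$ and $\mathcal{CH}^+_{\rm out}$ do not intersect in the spacetime: their ``corner'' in the Penrose diagram is an ideal point at infinity ($u \to -\infty$ along $\mathcal{H}^+_{\rm in}$, $v\to-\infty$ along $\mathcal{CH}^+_{\rm out}$), so there is no neighbourhood of an actual corner point in which to pose a local Goursat problem. Second, and more fundamentally, $\mathcal{H}^+_{\rm in}$ is a \emph{past} boundary component of $\mathcal{D}_{\rm int}$ while $\mathcal{CH}^+_{\rm out}$ is a \emph{future} boundary component; data on their union is a \emph{lateral} (sideways) characteristic Cauchy problem, and no domain-of-dependence argument determines $\psi$ anywhere in the interior from it --- in particular the region near $\mathcal{H}^+_{\rm out}$ and the right ideal corner is causally inaccessible from these two hypersurfaces. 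Everything therefore rests on your appeal to unique continuation. But local Holmgren does not apply on the (characteristic) horizons themselves, and the global Holmgren--John theorem requires sweeping by a continuous family of non-characteristic hypersurfaces anchored in an already-known zero set; you have neither exhibited such a family reaching $\mathcal{H}^+_{\rm out}$ nor explained why one exists. The mode-decomposition alternative is likewise only gestured at, and is essentially the unproven ``non-zero transmission'' input that McNamara's argument was missing --- i.e., the very content of the proposition. So as written the proof does not close.

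For comparison, the paper's argument is soft and quantitative, avoiding unique continuation entirely. One works with the conserved $\mathbf{J}^T$-current in the interior. The key observation is a sign: the Killing generator $T + \frac{a}{2Mr_-}Z$ of the Cauchy horizon is null and \emph{future}-directed on $\mathcal{CH}^+_{\rm out}$ but null and \emph{past}-directed on $\mathcal{CH}^+_{\rm in}$, so in the $\mathbf{J}^T$ conservation law between a hypersurface $\{r=c\}$ and the Cauchy horizon the $\mathcal{CH}^+_{\rm in}$ contribution enters with a favourable sign and may be discarded, giving
\begin{equation*}
\int_{-\infty}^{\infty}\int_{\mathbb{S}^2}\bigl(\partial_v\psi|_{\mathcal{CH}^+_{\rm out}}\bigr)^2\, dv\, d\mathbb{S}^2 \;\geq\; \int_{\{r=c\}}\mathbf{J}^T_{\mu}\left[\psi\right]n^{\mu}_{\{r=c\}}.
\end{equation*}
A second $\mathbf{J}^T$-estimate between $\mathcal{H}^+$ and $\{r=c\}$, with the red-shift used to kill the flux through the late-time timelike boundary, identifies the right-hand side with $\int\bigl(\partial_v\uppsi_{\mathcal{H}^+}\bigr)^2$, which is strictly positive by Theorem~\ref{scatexist}. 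This yields not just non-vanishing of $\partial_v\psi$ on $\mathcal{CH}^+_{\rm out}$ but a quantitative lower bound by the event-horizon flux. If you want to salvage your strategy, you would need to prove injectivity of the interior transmission map, which is a substantially harder (and, to date, open in this generality) problem than the non-triviality statement actually required.
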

\begin{proof}We begin by observing that if $\tilde\psi$ is any solution to the wave equation in $\mathcal{D}_{\rm int}$ with compactly supported Cauchy data along a constant $r^*$-hypersurface, then it follows immediately from the finite speed of propagation and finite-in-time energy estimates that $\tilde\psi$ smoothly extends to $\overline{\mathcal{D}}$ (cf.~Lemma~\ref{smooth}).

Next, we consider the conserved current,
\[\mathbf{J}^T_{\mu}\left[\psi\right] \doteq \mathbf{T}_{\mu\nu}\left[\psi\right]T^{\nu},\]
associated to the Killing vector field $T$ (see Section~\ref{basic}). Keeping in mind that $T+\frac{a}{2Mr_-}Z$ is null and future oriented on $\mathcal{CH}^+_{\rm out}$ and null and \underline{past oriented} on $\mathcal{CH}^+_{\rm in}$, for any axisymmetric solution $\tilde\psi$ to the wave equation in $\mathcal{D}_{\rm in}$ with compactly supported Cauchy data along a constant $r^*$-hypersurface, straightforward calculations yield
\begin{align}\label{signofT}
\int_{\mathcal{CH}^+_{\rm out}}&\mathbf{J}^T_{\mu}\left[\tilde\psi\right]n^{\mu}_{\mathcal{CH}^+_{\rm out}} + \int_{\mathcal{CH}^+_{\rm in}}\mathbf{J}^T_{\mu}\left[\tilde\psi\right]n^{\mu}_{\mathcal{CH}^+_{\rm in}}
\\ \nonumber &= \int_{-\infty}^{\infty}\int_{\mathbb{S}^2}\left(\partial_v\tilde\psi|_{\mathcal{CH}^+_{\rm out}}\right)^2\, dv\, d\mathbb{S}^2 - \int_{-\infty}^{\infty}\int_{\mathbb{S}^2}\left(\partial_u\tilde\psi|_{\mathcal{CH}^+_{\rm in}}\right)^2\, dv\, d\mathbb{S}^2.
\end{align}

In particular, for any $c \in (r_-,r_+)$, the estimate below follows from a straightforward density argument using Lemma~\ref{approxcomp}, the identity~\eqref{signofT}, the divergence theorem and the fact that $\nabla^{\mu}\mathbf{J}^T_{\mu} = 0$:
\begin{equation}\label{Tcauchy}
\int_{-\infty}^{\infty}\int_{\mathbb{S}^2}\left(\partial_v\psi|_{\mathcal{CH}^+_{\rm out}}\right)^2\, dv\, d\mathbb{S}^2 \geq \int_{\{r=c\}}\mathbf{J}^T_{\mu}\left[\psi\right]n^{\mu}_{\{r=c\}}.
\end{equation}
As in~\eqref{signofT}, the integration on the right hand side of this equation is with respect to the induced volume form.

For every $c$ and $v_0$, let
\[\Sigma_{c,v_0} \doteq \{(v,r,\theta,\varphi) : v = v_0{\ \rm and\ }r \in [c,r_+)\}.\]

Now, it follows from an another $\mathbf{J}^T$ energy estimate and a straightforward calculation that for any $v_0 < \infty$
\begin{align}\label{Test2}
\int_{\{r=c\} \cap \{v \leq v_0\}}\mathbf{J}^T_{\mu}\left[\psi\right]n^{\mu}_{\{r=c\}} + \int_{\Sigma_{c,v_0}}\mathbf{J}^T_{\mu}\left[\psi\right]n^{\mu}_{\Sigma_{c,v_0}}
 &= \int_{\mathcal{H}^+\cap \{v \leq v_0\}}\mathbf{J}^T_{\mu}\left[\psi\right]n^{\mu}_{\mathcal{H}^+}
\\ \nonumber &= \int_{-\infty}^{v_0}\int_{\mathbb{S}^2}\left(\partial_v\uppsi_{\mathcal{H}^+}\right)^2\, dv\, d\mathbb{S}^2.
\end{align}

Using the red-shift, one can easily show (cf.~Section 5 of~\cite{anne}) that if $c$ is taken sufficiently close to $r_+$, then
\[\liminf_{v_0\to\infty}\int_{\Sigma_{c,v_0}}\mathbf{J}^T_{\mu}\left[\psi\right]n^{\mu}_{\Sigma_{c,v_0}} = 0.\]
Thus,~\eqref{Tcauchy} and~\eqref{Test2} together imply
\begin{equation}\label{Tshowsitisthere}
\int_{-\infty}^{\infty}\int_{\mathbb{S}^2}\left(\partial_v\psi|_{\mathcal{CH}^+_{\rm out}}\right)^2\, dv\, d\mathbb{S}^2 \geq \int_{-\infty}^{\infty}\int_{\mathbb{S}^2}\left(\partial_v\uppsi_{\mathcal{H}^+}\right)^2\, dv\, d\mathbb{S}^2.
\end{equation}

Since Theorem~\ref{scatexist} implies the right hand side of~\eqref{Tshowsitisthere} is strictly positive, the proposition is proved.
\end{proof}

Combining Propositions~\ref{firsthalf}, Proposition~\ref{secondhalf}, Proposition~\ref{totheinterior} and Proposition~\ref{timetranslate} yields Theorem~\ref{tothecauchyandbeyond}.

\subsection{Proof of Theorem~\ref{IminusToCauchy}}\label{proofofthe2}
We are now ready to prove Theorem~\ref{IminusToCauchy}. The proof will be close in spirit to the proof of Theorem~\ref{IplusToEvent}.
First,  in Section~\ref{constructsomuch} we construct the radiation field $\upphi_{\mathcal{I}^-}$  and then show 
in Section~\ref{prooftheo2} that the following integral of $\psi$ blows up along $\mathcal{CH}^+_{\rm out}$:
\begin{equation}\label{thisblows}
\int_{-\infty}^{\infty}\int_{\mathbb{S}^2}\left(\partial_v\left(\psi\right)|_{\mathcal{CH}^+_{\rm out}}\right)^2e^{-\kappa_-v}\, dv\, d\mathbb{S}^2  = \infty.
\end{equation}
Keeping~\eqref{formulabifen2} in mind, one immediately sees that~\eqref{thisblows} implies that the local energy of $\psi$ is infinite along $\mathcal{H}^-$ near the bifurcation sphere. (Note, however, that Theorem~\ref{tothecauchyandbeyond} implies that $\psi$ has a finite $\mathbf{J}^T$-energy along $\mathcal{CH}^+_{\rm out}$.)

Finally, in Section~\ref{propsing} we show, by an elementary propagation of singularities argument, 
that this blow-up of local energy implies Theorem~\ref{IminusToCauchy}.

\subsubsection{Construction of $\upphi_{\mathcal{I}^-}$}\label{constructsomuch}
We turn to the construction of the function $\upphi_{\mathcal{I}^-}$, which will be very closely related to our construction in Section~\ref{constructrad}. Let $\left(\upphi_0\right)_{\mathcal{I}^-}(v,\theta,\varphi^*) : \mathcal{I}^- \to \mathbb{R}$ be a non-zero smooth axisymmetric function which is compactly supported in $\{v \in (1,2)\}$. Then, using Theorem~\ref{tothecauchyandbeyond}, let $\psi_0 : \mathcal{D} \to \mathbb{R}$ be the unique solution to the wave equation with a vanishing radiation field along $\mathcal{H}^-$ and with radiation field $\left(\upphi_0\right)_{\mathcal{I}^+}$ along $\mathcal{I}^+$.

Recall that Theorem~\ref{tothecauchyandbeyond} guarantees that $\partial_v\left(\psi_0|_{\mathcal{CH}^+_{\rm out}}\right)$ is a smooth function and does not vanish identically. In particular, after rescaling $\left(\upphi_0\right)_{\mathcal{I}^-}$, we can assume without loss of generality that
\begin{equation}\label{normalize2}
\int_{-\infty}^{\infty}\int_{\mathbb{S}^2}\left(\partial_v\left(\psi_0|_{\mathcal{CH}^+_{\rm out}}\right)\right)^2\, dv\, d\mathbb{S}^2 = 1.
\end{equation}

We now consider two separate cases: If~\eqref{thisblows} holds with $\psi$ replaced by $\psi_0$, then we simply set $\psi = \psi_0$ and $\left(\upphi_{\mathcal{I}^-}\right)_{\mathcal{I}^-} = \left(\upphi_0\right)_{\mathcal{I}^-}$. 

If~\eqref{thisblows} does not hold with $\psi$ replaced by $\psi_0$, then we may assume that
\begin{equation}\label{quitetheassumption}
E \doteq \int_{-\infty}^{\infty}\int_{\mathbb{S}^2}\left(\partial_v\left(\psi_0\right)|_{\mathcal{CH}^+_{\rm out}}\right)^2e^{-\kappa_-v}\, dv\, d\mathbb{S}^2  < \infty.
\end{equation}

Now, recalling the number $p$ from Theorem~\ref{IminusToCauchy}, define
\[\left(\upphi_i\right)_{\mathcal{I}^-}\left(v,\theta,\varphi^*\right) \doteq i^{-2p}\left(\upphi_0\right)_{\mathcal{I}^-}\left(v - i^2,\theta,\varphi\right),\qquad \upphi_{\mathcal{I}^-} \doteq \sum_{i=0}^{\infty}\left(\upphi_i\right)_{\mathcal{I}^-}.\]
Note in particular that $\upphi_{\mathcal{I}^-}$ satisfies the requirements from Theorem~\ref{IminusToCauchy}, i.e.,~it is axisymmetric and we may easily check that
\[\left|T^i\partial_{\theta}^j\upphi_{\mathcal{I}^-}\right| \leq C_{i,j}\left(1+\left|v\right|\right)^{-p}\qquad \forall i,j \geq 0.\]

Finally, using Theorem~\ref{tothecauchyandbeyond}, we let $\psi_i:\mathcal{D} \to \mathbb{R}$ and $\psi : \mathcal{D} \to \mathbb{R}$ denote the unique solutions to the wave equation with vanishing radiation fields along $\mathcal{H}^-\cup\mathcal{B}_+\cup\mathcal{H}^+_{\rm in}$ and radiation fields along $\mathcal{I}^-$ given by $\left(\upphi_i\right)_{\mathcal{I}^-}$ and $\upphi_{\mathcal{I}^-}$ respectively.

Just as in Section~\ref{constructrad}, we note the following fundamental relation:
\begin{lemma}\label{thewholepoint2}
\begin{equation*}
\psi_i\left(v,\theta,\varphi^*\right) = i^{-2p}\psi_0\left(v-i^2,\theta,\varphi\right).
\end{equation*}
\end{lemma}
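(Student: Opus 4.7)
The plan is to mirror the proof of Lemma~\ref{thewholepoint} from the $\mathcal{H}^-$ side. The statement is essentially an assertion that the construction commutes with the two linear structures: scalar rescaling by $i^{-2p}$ and translation in $v$ by $i^2$. Both of these are respected by the transmission map $\upphi_{\mathcal{I}^-} \mapsto \psi$ supplied by Theorem~\ref{tothecauchyandbeyond}: the former by linearity of the wave equation together with the linearity of the uniqueness statement, and the latter by the explicit time-translation invariance clause at the end of Theorem~\ref{tothecauchyandbeyond} (which in turn rests on the fact that $T = \partial_v$ is a Killing field of $g_{a,M}$ extending smoothly to all of $\mathcal{D}$).

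Concretely, I would argue as follows. Consider the function $\widetilde\psi(v,r,\theta,\varphi^*) \doteq i^{-2p}\psi_0(v - i^2, r, \theta, \varphi^*)$. Since $T$ is Killing and $\Box_g$ is linear, $\widetilde\psi$ is a smooth solution of $\Box_g \widetilde\psi = 0$ on $\mathcal{D}$. The time-translation invariance portion of Theorem~\ref{tothecauchyandbeyond} identifies $\psi_0(v - i^2, r, \theta, \varphi^*)$ as the unique solution vanishing on $\mathcal{H}^- \cup \mathcal{B}_+ \cup \mathcal{H}^+_{\rm in}$ whose radiation field along $\mathcal{I}^-$ equals $(\upphi_0)_{\mathcal{I}^-}(v - i^2, \theta, \varphi^*)$. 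Multiplying through by $i^{-2p}$ and using linearity of the radiation-field map (which is immediate from its definition~\eqref{iminusrad} as a pointwise limit), we see that the radiation field of $\widetilde\psi$ along $\mathcal{I}^-$ is precisely $(\upphi_i)_{\mathcal{I}^-}$, while $\widetilde\psi$ still vanishes on $\mathcal{H}^- \cup \mathcal{B}_+ \cup \mathcal{H}^+_{\rm in}$.

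Applying the uniqueness statement of Theorem~\ref{tothecauchyandbeyond} once more, now to $(\upphi_i)_{\mathcal{I}^-}$, we conclude $\widetilde\psi = \psi_i$, which is exactly the claimed identity. There is no real obstacle here: the lemma is a direct corollary of linearity plus time-translation invariance of the transmission map, both of which have already been packaged into Theorem~\ref{tothecauchyandbeyond}. The role of the lemma, as in Section~\ref{constructrad}, is simply to expose the self-similar structure of the pieces $\psi_i$ so that the almost-orthogonality argument and the lower bound~\eqref{thisblows} can proceed in Section~\ref{prooftheo2}.
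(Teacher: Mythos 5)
Your proposal is correct and is essentially the paper's own argument: the paper proves this lemma in one line by appealing to linearity and the time-translation property stated at the end of Theorem~\ref{tothecauchyandbeyond}, exactly the two ingredients you combine via the uniqueness clause. Your version merely spells out the intermediate function $\widetilde\psi$ explicitly, which is a faithful expansion of the same reasoning.
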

\begin{proof}This is proved in the same fashion as Lemma~\ref{thewholepoint} in view of the time-translation property stated in Theorem~\ref{tothecauchyandbeyond}.
\end{proof}

\subsubsection{Almost orthogonality and the proof of~\eqref{thisblows} }\label{prooftheo2}
Now we establish that $\psi$ satisfies the blow-up~\eqref{thisblows}. First of all, it is clear that without loss of generality, we may assume that~\eqref{quitetheassumption} holds.

In what follows we will omit the ``$|_{\mathcal{CH}^+_{\rm out}}$'', which should appear each time we write $\psi$ and $\psi_i$, and the spherical volume forms $d\mathbb{S}^2$.

Let $i \in \mathbb{Z}_{> 0}$. We have
\begin{align}\label{thisishowitbegins}
\int_{1+i^2}^{\infty}\int_{\mathbb{S}^2}\left(\partial_v\psi\right)^2\, dv &\geq \frac{1}{2}\int_{1+i^2}^{\infty}\int_{\mathbb{S}^2}\left(\sum_{j=i}^{\infty}\partial_v\psi_j\right)^2\, dv - 4\int_{1+i^2}^{\infty}\int_{\mathbb{S}^2}\left(\sum_{j=0}^{i-1}\partial_v\psi_j\right)^2\, dv
\\ \nonumber &\doteq I - II.
\end{align}

It easily follows from Lemma~\ref{thewholepoint2} and~\eqref{quitetheassumption} that there exists a constant $C$, depending on $\psi_0$ but independent of $i$, such that
\begin{equation}\label{wecanignoreII}
II \leq Cie^{\kappa_-i}.
\end{equation}

Next, we note that
\begin{align}\label{Iprime}
I \geq \frac{1}{2}\sum_{j=i}^{\infty}\int_{1+i^2}^{\infty}\int_{\mathbb{S}^2}\left(\partial_v\psi_j\right)^2\, dv - \sum_{j=i}^{\infty}\sum_{k=j+1}^{\infty}\int_{1+i^2}^{\infty}\int_{\mathbb{S}^2}\left|\partial_v\psi_j\partial_v\psi_k\right|\, dv \doteq I' - I''.
\end{align}

Using Lemma~\ref{thewholepoint2} and~\eqref{quitetheassumption}, it is straightforward to establish that $i \leq j < k$ implies
\begin{align}\label{orthoest}
\int_{1+i^2}^{\infty}\int_{\mathbb{S}^2}1_{{\rm supp}\left(\partial_v\psi_k\right)}\left(\partial_v\psi_j\right)^2\, dv &\leq Ee^{\kappa_-\left(k^2-j^2\right)}\int_{1+i^2}^{\infty}\int_{\mathbb{S}^2}\left(\partial_v\psi_j\right)^2\, dv
\\ \nonumber &\leq Ee^{\kappa_-\left(k+j\right)}\int_{1+i^2}^{\infty}\int_{\mathbb{S}^2}\left(\partial_v\psi_j\right)^2\, dv.
\end{align}
In particular, for $i$ sufficiently large,~\eqref{orthoest} yields
\begin{align}\label{estimateIdoubleprime}
I'' &\leq \sqrt{E}\sum_{j=i}^{\infty}\sum_{k=j+1}^{\infty}e^{\frac{\kappa_-}{2}(j+k)}\sqrt{\int_{1+i^2}^{\infty}\int_{\mathbb{S}^2}\left(\partial_v\psi_j\right)^2\, dv}\sqrt{\int_{1+i^2}^{\infty}\int_{\mathbb{S}^2}\left(\partial_v\psi_k\right)^2\, dv}
\\ \nonumber &\leq \frac{1}{4}\sum_{j=1}^{\infty}\int_{1+i^2}^{\infty}\int_{\mathbb{S}^2}\left(\partial_v\psi_j\right)^2\, dv.
\end{align}

For $i$ sufficiently large, combining~\eqref{Iprime} with~\eqref{estimateIdoubleprime} and then using Lemma~\ref{thewholepoint2} yields
\[I \geq \frac{1}{4}\sum_{j=i}^{\infty}\int_{1+i^2}^{\infty}\int_{\mathbb{S}^2}\left(\partial_v\psi_j\right)^2\, dv \geq \frac{i^{-2p}}{4}.\]
Combining this with~\eqref{thisishowitbegins} and~\eqref{wecanignoreII} and taking $i$ sufficiently large finally yields
\[\int_{1+i^2}^{\infty}\int_{\mathbb{S}^2}\left(\partial_v\psi\right)^2\, dv \geq \frac{i^{-2p}}{8}.\]
This immediately implies that
\[\int_1^{\infty}\int_{\mathbb{S}^2}\left(\partial_v\psi\right)^2v^{2p}\, dv = \infty,\]
and hence we obtain~\eqref{thisblows}.

\subsubsection{Propagation of singularities}\label{propsing}
The following lemma will be proven by a general propagation of singularities argument. (This can be contrasted with the less standard argument used in Section~\ref{notstandardprop}.)
\begin{lemma}\label{thissufficies}Let $\psi : \mathcal{D} \to \mathbb{R}$ be a smooth solution to the wave equation supported in $\{v \geq v_0\}$, for some $v_0 \in \mathbb{R}$, such that
\begin{equation}\label{thisisinfinite}
\int_{-\infty}^{\infty}\int_{\mathbb{S}^2}\left(\partial_v\psi|_{\mathcal{CH}^+_{\rm out}}\right)^2e^{-\kappa_-v}\, dv\, d\mathbb{S}^2 = \infty,
\end{equation}
then $\psi$ does not lie in $H^1_{\rm loc}$ of any neighbourhood of any point on $\overline{\mathcal{CH}^+_{\rm in}}$.
\end{lemma}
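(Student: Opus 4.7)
The plan is to prove the lemma by contrapositive: I will show that if $\psi \in H^1_{\rm loc}(\mathcal{V})$ for some open neighbourhood $\mathcal{V}$ of a point $q \in \overline{\mathcal{CH}^+_{\rm in}}$, then the weighted integral appearing in~\eqref{thisisinfinite} must be finite, contradicting the hypothesis. Structurally, the argument is a propagation-of-singularities argument implemented via energy estimates: the singularity forced by the hypothesis lives at $\mathcal{B}_-$, and any $H^1_{\rm loc}$ control at a distant point of $\overline{\mathcal{CH}^+_{\rm in}}$ must propagate back along the null generators of $\mathcal{CH}^+_{\rm in}$ to contradict this. First I translate the hypothesis into Kruskal coordinates $(U_-, V_-, \theta, \varphi_-)$ centred at $\mathcal{B}_-$: by the change-of-variables identity~\eqref{formulabifen2}, the hypothesis~\eqref{thisisinfinite} is equivalent to
\[\int_{-\delta}^{0}\int_{\mathbb{S}^2}\left(\partial_{V_-}\psi\right)^2\big|_{\mathcal{CH}^+_{\rm out}}\, dV_-\, d\mathbb{S}^2 = \infty\]
for every $\delta > 0$, placing the blow-up precisely near $\mathcal{B}_-$.

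Next I reduce to the case $q = \mathcal{B}_-$. If $q$ has Kruskal $U_-$-coordinate $U_* > 0$, I consider the thin tube $\mathcal{T} = \{0 \leq U_- \leq U_*,\ -\delta \leq V_- \leq 0\}$, with $\delta$ small enough that the right face $\{U_- = U_*\}\cap\mathcal{T}$ lies inside $\mathcal{V}$. Applying the energy identity~\eqref{enid} with a Kruskal-timelike (non-Killing) vector field such as $X = \partial_{U_-} + \partial_{V_-}$, the bulk term $\tfrac{1}{2}\mathbf{T}^{\mu\nu}(\mathcal{L}_X g)_{\mu\nu}$ is controlled pointwise by $(\partial\psi)^2$, and a Gr\"onwall argument over the compact $U_-$-range $[0, U_*]$ transfers the $H^1$-control assumed near the right face of $\mathcal{T}$ to $H^1$-control in a Kruskal neighbourhood of $\mathcal{B}_-$. (This step is precisely where one must abandon the non-coercive Killing field $T$, which is spacelike in the Kerr interior.)

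Assuming now $\psi \in H^1(B)$ for a Kruskal box $B = \{|U_-|, |V_-| < \epsilon_0\}$, I derive the contradiction by a divergence-theorem argument on the sub-rectangle $\Omega_{\epsilon,\eta} = \{0 \leq U_- \leq \epsilon,\ -\epsilon \leq V_- \leq -\eta\}$ with $0 < \eta < \epsilon < \epsilon_0$, applied to the current $\mathbf{J}^{\partial_{V_-}}[\psi]$. An explicit Kruskal computation gives $\mathbf{T}_{V_-V_-}= (\partial_{V_-}\psi)^2$ and $\mathbf{T}_{U_-V_-}$ depending only on $|\nabla_{\mathbb{S}^2}\psi|^2$, so the flux through the face $\{U_- = 0\} \cap \Omega_{\epsilon,\eta}\subset \mathcal{CH}^+_{\rm out}$ reduces (up to positive spherical-derivative contributions) to $\int_{-\epsilon}^{-\eta}\int_{\mathbb{S}^2}(\partial_{V_-}\psi)^2|_{\mathcal{CH}^+_{\rm out}}\, dV_-\, d\mathbb{S}^2$, the quantity I wish to bound. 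The bulk $\int_{\Omega_{\epsilon,\eta}}\tfrac{1}{2}\mathbf{T}^{\mu\nu}(\mathcal{L}_{\partial_{V_-}}g)_{\mu\nu}$ is controlled by $C\|\psi\|^2_{H^1(B)}$, and Fubini applied to $\int_B(\partial\psi)^2\,dV_g<\infty$ ensures that, for generic $\epsilon,\eta$, the fluxes through the remaining three null boundary faces $\{U_- = \epsilon\}$, $\{V_- = -\epsilon\}$, and $\{V_- = -\eta\}$ are each finite. Since $\int_0^{\epsilon_0}(\text{flux at }V_-= -\eta)\, d\eta < \infty$, I may moreover select a sequence $\eta_k \to 0$ along which the flux on $\{V_- = -\eta_k\}$ remains bounded by a fixed constant; letting $k\to\infty$ produces the finite upper bound $\int_{-\epsilon}^{0}\int_{\mathbb{S}^2}(\partial_{V_-}\psi)^2|_{\mathcal{CH}^+_{\rm out}}\,dV_-\,d\mathbb{S}^2 < \infty$, the desired contradiction.

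The hard part of the argument is the propagation step: the failure of the Killing field $T$ to be timelike in the interior rules out a straightforward conservation-law estimate, and one must pay for the use of a non-Killing timelike $X$ by absorbing a bulk term via Gr\"onwall, which is only feasible because the propagation in $U_-$ from $q$ back to $\mathcal{B}_-$ is of finite extent.
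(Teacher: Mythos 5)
Your overall strategy coincides with the paper's: argue by contrapositive, use~\eqref{formulabifen2} to localise the divergence in~\eqref{thisisinfinite} at $\mathcal{B}_-$ (here the support condition $\{v\geq v_0\}$ and the smoothness of $\psi$ on $\mathcal{D}$ guarantee that the tail of the integral near $\mathcal{B}_-$ is what diverges), and then run a forward-in-time energy estimate from a region where $H^1$ control is assumed up to the portion of $\mathcal{CH}^+_{\rm out}$ adjacent to $\mathcal{B}_-$, contradicting the infinite transverse flux there. The paper implements this more economically: it takes a single spacelike hypersurface $\tilde\Sigma$ crossing $\mathcal{CH}^+_{\rm in}$ transversally and performs finite-in-time energy estimates in $J^+(\tilde\Sigma\cap\{v\geq v_0\})$, so the flux through the null future boundary is controlled directly and neither your reduction to a Kruskal box at $\mathcal{B}_-$ nor your separate divergence-theorem/trace step is needed. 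Your characteristic-rectangle implementation is viable, but several points need repair.

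First, orientation: with the paper's conventions ($U_-=\exp(-\kappa_-(t-r^*))$ and $\kappa_-<0$), the future direction in $\overline{\mathcal{D}}_{\rm int}$ is \emph{decreasing} $U_-$ and increasing $V_-$; thus $\mathcal{B}_-$ lies to the \emph{future} of $q$ along $\mathcal{CH}^+_{\rm in}$, the face $\{U_-=U_*\}$ is a past face of your tube, and the future-directed multiplier is $-\partial_{U_-}+\partial_{V_-}$, not $\partial_{U_-}+\partial_{V_-}$ (the latter is not future-directed causal, so the fluxes lose their signs). Second, and more substantively, a Gr\"onwall argument run only over the null leaves $\{U_-=c\}$ does not close: the flux of a causal current through such a leaf controls only the tangential derivatives $\partial_{V_-}\psi$ and the angular derivatives, while the bulk term $\mathbf{T}^{\mu\nu}(\mathcal{L}_Xg)_{\mu\nu}$ generically contains $(\partial_{U_-}\psi)^2$; you need the standard two-variable Gr\"onwall on the characteristic rectangle, using both families of null leaves, or else a spacelike foliation as in the paper. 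Third, the estimate requires data on the second past face $\{V_-=-\delta,\ 0\leq U_-\leq U_*\}$, which leaves $\mathcal{V}$; its flux is indeed finite because $\psi$ is smooth on $\mathcal{D}$ and the Kruskal chart is regular there away from $\mathcal{B}_-$ and $\mathcal{CH}^+_{\rm in}$, but this must be said. Finally, in your last step the face $\{V_-=-\eta\}$ is part of the \emph{future} boundary of $\Omega_{\epsilon,\eta}$, so for future-causal $\partial_{V_-}$ its flux enters~\eqref{enid} with a favourable sign and may simply be discarded; this is fortunate, because integrability of $\eta\mapsto(\text{flux at }V_-=-\eta)$ does not by itself produce a sequence $\eta_k\to 0$ along which that flux stays uniformly bounded (consider $\eta^{-1/2}$). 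With these corrections your argument goes through; note also that once the rectangle estimate is set up correctly it already bounds the flux through $\{U_-=0\}\cap\mathcal{T}$, so the contradiction is reached without the final Kruskal-box step.
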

\begin{proof}
Keeping~\eqref{formulabifen2} in mind, we note the local energy along $\mathcal{CH}^+_{\rm out}$ is proportional to
\[\int_{v_0}^{\infty}\int_{\mathbb{S}^2}\left[\left(\partial_v\psi\right)^2 + \left(\partial_{\theta}\psi\right)^2\right]|_{\mathcal{CH}^+_{\rm out}}e^{-\kappa_-v}\, dv\, d\mathbb{S}^2;\]
in particular, the local energy is infinite.

In view of the support of $\psi$, finite-in-time energy estimates in the region $J^+\left(\tilde{\Sigma} \cap \{v \geq v_0\}\right)$ immediately imply that $\psi$ does not lie in $H^1_{\rm loc}\left(\tilde\Sigma\right)$ for any suitably regular spacelike hypersurface $\tilde\Sigma$ intersecting $\mathcal{CH}^+_{\rm in}$ transversally. 
\begin{center}
\begin{tikzpicture}
\fill[lightgray] (-2,2)--(0,0)--(-2,-2) -- (-4,0); 
\fill[gray] (-2,2) -- (-1.2,1.2) to [out = 195, in = 15] (-3.25,.25) -- (-3.5,.5) --(-2,2);

\draw (-2,-2) -- (-4,0) node[sloped,below,midway]{$\mathcal{H}^+_{\rm in}$}; 
\draw (-2,-2) -- (0,0) node[sloped,below,midway]{$\mathcal{H}^+_{\rm out}$}; 
\draw (-4,0) -- (-2,2) node[sloped,above,midway]{$\mathcal{CH}^+_{\rm out}$}; 
\draw (-2,2) -- (0,0) node[sloped,above,midway]{$\mathcal{CH}^+_{\rm in}$}; 
\path [draw=black,fill=white] (0,0) circle (1/16); 

\path [draw=black,fill=black] (-2,-2) circle (1/16) node[left]{$\mathcal{B}_+$}; 
\path [draw=black,fill=white] (-4,0) circle (1/16); 
\path [draw=black,fill=black] (-2,2) circle (1/16) node[left]{$\mathcal{B}_-$}; 

\draw (-2,2) -- (-1.2,1.2) to [out = 195, in = 15] node[sloped,below,midway]{$\tilde\Sigma$} (-3.25,.25) -- (-3.5,.5) --(-2,2);

\node [align = flush center] at (-2,-2.5) {Figure 12: The region of finite-in-time energy estimates};
\end{tikzpicture}
\end{center}

It immediately follows that $\psi$ does not lie in $H^1_{\rm loc}$ of any neighbourhood of any point on $\overline{\mathcal{CH}^+_{\rm in}}$.
\end{proof}

Combining Lemma~\ref{thissufficies} with the fact that Sections~\ref{constructsomuch} and~\ref{prooftheo2} have constructed radiation fields $\upphi_{\mathcal{I}^-}$ leading to $\psi$ satisfying~\eqref{thisblows}, concludes the proof of Theorem~\ref{IminusToCauchy}.

\end{document}